\newtheorem*{rep@theorem}{\rep@title}
\newcommand{\newreptheorem}[2]{%
	\newenvironment{rep#1}[1]{%
		\def\rep@title{#2~\ref{##1}}%
		\begin{rep@theorem}}%
		{\end{rep@theorem}}}
\newenvironment{lemma-repeat}[1]{\begin{trivlist}
		\item[\hspace{\labelsep}{\bf\noindent Lemma~\ref{#1} }]\em }%
	{\end{trivlist}}
\newenvironment{theorem-repeat}[1]{\begin{trivlist}
		\item[\hspace{\labelsep}{\bf\noindent Theorem~\ref{#1} }]\em }%
	{\end{trivlist}}
\newcommand{\qedsymb}{\qed}
\newtheorem{definition}{Definition}
\tikzstyle{myNode}=[fill=white, draw=black, shape=circle]
\tikzstyle{right}=[->]
\tikzstyle{left}=[<-]
\tikzstyle{undir}=[<->]
\DeclareMathOperator{\view}{view}
\DeclareMathOperator{\In}{In}
\DeclareMathOperator{\Root}{Root}
\DeclareMathOperator{\TD}{TD}
\newcommand{\N}{\mathcal{N}}
\newcommand{\bD}{\mathbf{D}}
\newcommand{\bC}{\mathbf{C}}
\newcommand{\bG}{\mathbf{G}}
\newcommand{\bS}{\mathbf{S}}
\newcommand{\cC}{\mathcal{C}}
\newcommand{\cS}{\mathcal{S}}
\title[Time Complexity of Consensus in Dynamic Networks]{Time Complexity of Consensus in Dynamic Networks\\Under Oblivious Message Adversaries}
\keywords{consensus, distributed computing, time complexity, message adversary} %TODO mandatory; please add comma-separated list of keywords
\author{Ami Paz}
\affiliation{%
	\institution{LISN --- CNRS \& Université Paris-Saclay}
	%	\department{}
	%\streetaddress{Samuels Building (F25), Kensington Campus}
	%	\city{Paris}
	%	\state{NSW}
	%\postcode{2052}
	\country{France}}
\email{ami.paz@lisn.fr}
\author{Hugo Rincon Galeana}
\affiliation{%
	\institution{Technische Universität Wien}
	\department{ECS Group}
	%\streetaddress{Samuels Building (F25), Kensington Campus}
	\city{Vienna}
%	\state{NSW}
	%\postcode{2052}
	\country{Austria}}
\email{hugorincongaleana@gmail.com}
\author{Stefan Schmid}
\affiliation{%
	\institution{Technische Universität Berlin, Germany \& Universität Wien}
	\department{Faculty IV}
	%\streetaddress{Samuels Building (F25), Kensington Campus}
	\city{Vienna}
	%	\state{NSW}
	%\postcode{2052}
	\country{Austria}}
\email{stefan.schmid@tu-berlin.de}
\author{Ulrich Schmid}
\affiliation{%
	\institution{Technische Universität Wien}
	\department{ECS Group}
	%\streetaddress{Samuels Building (F25), Kensington Campus}
	\city{Vienna}
	%	\state{NSW}
	%\postcode{2052}
	\country{Austria}}
\email{s@ecs.tuwien.ac.at}
\author{Kyrill Winkler}
\affiliation{%
	\institution{Universität Wien}
	\department{Faculty of Computer Science}
	%\streetaddress{Samuels Building (F25), Kensington Campus}
	\city{Vienna}
	%	\state{NSW}
	%\postcode{2052}
	\country{Austria}}
\email{kyrill.winkler@gmail.com}
\begin{document}
  
\begin{abstract}
Consensus is a most fundamental task in distributed computing. This paper studies the consensus problem for a set of processes connected by a dynamic directed network, in which computation and communication is lock-step synchronous but controlled by an oblivious message adversary.  In this basic model, determining consensus solvability and designing consensus algorithms in the case where it is possible,  has been shown to be surprisingly difficult. We present an explicit decision procedure to determine if consensus is possible under a given adversary. This in turn enables us, for the first time, to study the time complexity of consensus in this model. In particular, we derive time complexity upper bounds for consensus solvability both for a centralized decision procedure as well as for solving distributed consensus. We complement these results with time complexity lower bounds. Intriguingly, we find that reaching consensus under an oblivious message adversary can take exponentially longer than broadcasting the input value of some process to all other processes.
\end{abstract}

\maketitle

\section{Introduction}

Consensus, a task in which multiple processes need to agree on some value, based
on local inputs, is a fundamental problem in distributed computing. 
At the heart of this problem lies the question of whether and how it is
possible for the processes to exchange enough information with each other in
order to reach agreement, e.g., on a numerical value or on performing a joint
action. While consensus has been studied intensively for several decades already, in many models of distributed computing, it is still unknown whether and how quickly consensus can be achieved.

This paper studies deterministic consensus in dynamic directed networks. The study of such networks is of both practical and theoretical interest. 
It is of \emph{practical} relevance as the communication topology of many large-scale distributed systems is \emph{dynamic} (e.g., due to mobility, interference, or failures) and 
its links often \emph{asymmetric} (e.g., in optical or in wireless networks)~\cite{NKYG07}.
It is also of fundamental \emph{theoretical} interest, as solving consensus in dynamic directed networks is considered significantly more difficult~\cite{SWS16:ICDCN,SWK09} than solving consensus in dynamic networks with bidirectional links~\cite{KOM11}. 

We consider a worst-case perspective and assume that the information flow between the processes is
controlled by an adversary. In particular, we study a lock-step synchronous model, where  a \emph{message adversary}~\cite{AG13} may drop an arbitrary set of messages sent by some processes in each round. This results in a sequence of
directed communication graphs, whose edges tell which process can successfully 
send a message to which other process in a given round. 
We specifically consider the fundamental oblivious message
adversary model introduced by
Coulouma, Godard and Peters~\cite{CGP15}. 
In this model, the adversary is represented by a set $\bD$ of allowed
communication graphs, from which the adversary can pick one
arbitrarily in each round. 

The oblivious message adversary model is appealing because it is conceptually
simple and still provides a highly dynamic network model:
The set of allowed graphs can be arbitrary, and the
nodes that can communicate with one another can vary greatly from one round 
to the next.
It is hence also well-suited for settings where significant
transient message loss occurs, such as in wireless networks subject to
interference.
Furthermore, this model includes as a special case the classic link failure model by Santoro and Widmayer~\cite{SW89}, where up to $f$ links may fail in each round: 
the model is equivalent to a set of allowed graphs which contains all communication graphs where $\le f$ edges are missing.

Interestingly, determining
consensus solvability for a given set of graphs $\mathbf{D}$ and, in particular, designing a consensus algorithm which succeeds whenever this is possible, is difficult~\cite{CGP15}. 
For example, sometimes a ``weaker adversary'', i.e., an adversary that allows for more communication overall (e.g., supporting a larger
set $\mathbf{D}$ and failing less links), may render consensus impossible, while it is possible for a smaller set
$\mathbf{D}$.

In this paper, we are primarily interested in the \emph{time complexity} of consensus under oblivious message adversaries. Our work hence complements previous work, which either primarily focuses on the \emph{feasibility} of consensus~\cite{CGP15} or the simpler broadcast problem~\cite{ZSS18:DAM,FNW20:DAM}: how long it takes until the input value of some process has reached every other process.

\subsection{Our Contributions}

We consider the fundamental problem of distributed consensus in dynamic directed networks. 
In particular, we chart a landscape of the time complexity of consensus in the presence of oblivious message adversaries. 

Our main technical contribution is an explicit decision procedure for deciding the solvability of deterministic consensus and its analysis. This allows us, for the first time, to study the time complexity of distributed consensus under oblivious message adversaries. In particular, we present time complexity upper bounds for consensus solvability both for a centralized decision procedure as well as for solving distributed consensus. 
We further complement these upper bounds with time complexity lower bounds.

Our results also shed an interesting new light on the relationship between distributed consensus and broadcast: as the input value of some process is known to reach all other processes in almost linear time under any oblivious
message adversary~\cite{FNW20:DAM}, one might be tempted to expect that consensus 
solvability can also be decided fast. Our results show that, quite on the contrary, reaching consensus can take exponential time.

\subsection{Related Work}

Consensus is a fundamental task in distributed computing, and the question if and when consensus is possible has fascinated researchers at least since the influential impossibility result by Fischer, Lynch, and Paterson~\cite{FLP85} and its generalizations~\cite{biran1990combinatorial}. 
Consensus problems come in different flavors and arise in many settings, including shared memory architectures, message-passing systems, and 
blockchains, among others~\cite{ongaro2014search,KO11:SIGACT,CastanedaFPRRT19topo,WS19:EATCS,abraham2017blockchain}.

Research on deterministic consensus in synchronous message-passing systems 
subject to link failures dates back to the seminal 
paper by Santoro and Widmayer~\cite{SW89}, 
who showed that consensus is impossible if up to $n-1$ messages may be lost
each round. This result has later
been generalized along many dimensions~\cite{SW07,SWK09,CBS09,BSW11:hyb,CGP15,CFN15:ICALP,FNS18:PODC}.
For example, in~\cite{SWK09}, Schmid et al.~showed that consensus can even be solved when a quadratic number of messages is lost per round, provided these losses do not isolate the processes. Several generalized models have been proposed in the literature~\cite{Gaf98,KS06,CBS09}, like the heard-of model by Charron-Bost and Schiper~\cite{CBS09}, and also different agreement problems like approximate and asymptotic consensus have been studied in these models~\cite{CFN15:ICALP,FNS18:PODC}. 
In many of these and similar works on consensus~\cite{FG11,BRS12:sirocco,SWS16:ICDCN,BRSSW18:TCS,WSS19:DC,NSW19:PODC,CastanedaFPRRT19topo}, a model is considered in which, in each round, a digraph is picked from a set of possible communication graphs. Afek and Gafni coined the term message adversary 
for this abstraction~\cite{AG13}, and used it for relating problems solvable in
wait-free read-write shared memory systems to those solvable in message-passing systems.
For a detailed overview of the field, we refer to the recent survey by Winkler and Schmid~\cite{WS19:EATCS}. 
%; it was later also used to explore the relationship between round-based models and failure detectors.
%[7] C. Dwork, N. Lynch, and L. Stockmeyer. Consensus in
%the presence of partial synchrony. Journal of the
%ACM, 35(2):288–323, Apr. 1988
%

An interesting alternative model for dynamic networks assumes a 
$T$-interval connectivity guarantee, that is, 
a common subgraph in the communication graphs
of every $T$ consecutive rounds~\cite{KLO10:STOC,KOM11}. 
In contrast to our directional model, solving consensus is relatively
simple here, since the $T$-interval connectivity model relies on 
bidirectional links and always connected communication graphs.
For example, $1$-interval-connectivity, the weakest form of
$T$-interval connectivity, implies that all nodes are able to reach
all the other nodes in the system.

%US: I suggest not to mention this, as it is not at all directly related.
%
%There also exists interesting work on general $k$-set agreement tasks, in particular the recent paper by Shimi and Casta{\~n}eda  which uses a combinatorial topology approach~\cite{shimi2020k}.

Another related model arises in the context of wait-free computation in shared memory systems with immediate atomic snapshots.
Roughly speaking, these systems can be described using one specific oblivious message adversary, containing all transitively closed tournaments.
%, i.e.\ graphs\ where for every $p_i,p_j,p_k\in\Pi$, either $(p_i,p_j)\in E$ or $(p_j,p_i)\in E$ (or both), and 
%if $(p_i,p_j)\in E$ and $(p_j,p_k)\in E$ then $(p_i,p_k)\in E$.
Wait-free computation in this context is often studied using topological tools~\cite{HKR13,AttiyaC13,AttiyaCHP19,Kozlov15,Kozlov16}. This line of work did not provide any time complexity bounds for consensus in our model, however.

Closely related to our work is the paper by
Coulouma, Godard, and Peters~\cite{CGP15}, who substantially refined the results of~\cite{SW07}. The authors consider oblivious message adversaries and identify an equivalence relation on the sets of communication graphs, which captures the essence of consensus impossibility via non-broadcastability of one of the equivalence classes (``beta classes'') of this relation. The paper also presents a distributed consensus algorithm that, essentially, computes the beta classes. However, in contrast to our paper, the main focus of this work is on feasibility of consensus.

To the best of our knowledge, we are the first to provide an efficient (centralized) decision procedure and a distributed consensus algorithm with worst-case time complexity guarantees under oblivious message adversaries.

\subsection{Organization}

The remainder of this paper is organized as follows.
We introduce our formal model and terminology in Section~\ref{sec:notation}.
The description and analysis of our decision procedure and our consensus 
algorithm are presented in Section~\ref{sec:decProc} and Section~\ref{sec:consensus}, respectively, and our lower bound results are presented in Section~\ref{sec:lowerbound}. 
We conclude our contribution and discuss directions for future work in Section~\ref{sec:futwork}. Due to space constraints, most proofs and additional findings are deferred to the appendix.

\section{Model and Preliminaries}\label{sec:notation}

We assume a set $\Pi = \{ p_1, \ldots, p_n\}$ of $n$ processes, which execute
a deterministic distributed protocol to reach consensus.
Processes operate in lock-step synchronous rounds, where each round consists of a phase of message exchanges among the processes, followed by some local computation, whose execution time is assumed to be negligible. We consider a \emph{full information} protocol where, in each round, every process broadcasts its complete local history (its 
\emph{view} obtained at the end of the previous round, or the initial state), and computes a deterministic \emph{decision function} $\Delta$ based on its current view, which also involves all views it received from other processes in this round.

Each phase of message exchange is restricted by a (possibly different) directed graph on $\Pi$, called a \emph{communication graph}, which is controlled by a message adversary.
A message from $p$ to $q$ may be delivered in round $r$ only if the
communication graph of round $r$ contains the edge $(p, q)$. Since
every process obviously knows its own current view, we just assume that the communication graph always contains all the self-loops.
We use $\In_G(v)$ denote the in-neighborhood of process $v$ in a graph $G$.
Messages are unacknowledged and
rounds are communication-closed, i.e., messages that are sent in round $r$
arrive in round $r$ or not at all.

A \emph{communication pattern} is a sequence of such communication graphs, which (along
with the initial views of all processes and the decision function $\Delta$)
will uniquely define a run of the system.
In the oblivious message adversary model, there is a set $\bD$ of allowed
communication graphs,
and the admissible communication patterns are all sequences of graphs from $\bD$.
For brevity, we identify our message adversary with its set of allowed communication graphs.

For a communication graph $G$, let $G^r = (G)_{i=1}^r$
denote the communication pattern that consists of $r$ repetitions of $G$.
For a set of communication graphs $\bG$, let
$\bG^r = \{ (G_i)_{i=1}^r : G_i \in \bG \}$
be the set of communication patterns of length $r$ that consist
only of graphs from $\bG$.
Given a set of allowed graphs $\bD$, the oblivious message adversary
generated by $\bD$ may thus be written as $\bD^\omega$ ($\omega$
denotes infinitely many repetitions of elements of $\bD$).

Let $\sigma = (G_i)_{i=1}^r$ be a communication pattern, where its length $r \ge 1$ can be
any integer or infinite (denoted~$\omega$), and let $\Sigma$ be a set of communication patterns.
We use $\sigma|_{r'} = (G_i)_{i=1}^{r'}$ to denote the $r'$-round prefix of $\sigma$, which is
only defined if the length of $\sigma$ is at least $r'$, 
and $\Sigma|_{r'} = \{ \sigma|_{r'} : \sigma \in \Sigma \}$ to denote the set of all $r'$-round prefixes of $\Sigma$; by convention, $\sigma|_0 = \varepsilon$, where $\varepsilon$ is the empty word. 
We use $\sigma(r') = G_{r'}$ to denote the $r'$\textsuperscript{th} graph of $\sigma$
and $\Sigma(r') = \{ \sigma(r') : \sigma \in \Sigma \}$ for the set of communication patterns $\Sigma$.
If $\sigma$ has a finite length $r$ and $H$ is an arbitrary communication graph,  we write $\sigma' = \sigma \circ H$ to denote $\sigma$ extended by $H$, i.e.,
the communication pattern of length $r+1$ with 
$\sigma'(i) = \sigma(i)$ for all $i \le r$ and
$\sigma'(r+1) = H$.

A \emph{root component} of a graph is a strongly connected component that has no incoming edge from a node outside of the component.
%\stefan{we use vertex and node interchangeably here: maybe use just node? and say what node  will mean in our context, namely process?}
%\kyrill{Yes, let's just use node. The idea not to use processes here was that
%  root components are a graph theoretic definition and thus apply not only to
%  graphs of processes.}
We call a graph $G$ \emph{rooted} if it has a single root component and
write $\Root(G)$ for the node set of the root component of $G$.
Note that if a graph $G$ is rooted then a node (in our context: a process) $p \in V(G)$ has a path to every other node (process) in $G$ if and only if $p \in \Root(G)$.
In \cref{claim:non-rooted} below, we show that consensus is trivially impossible if the set of allowed graphs contains a graph that is not rooted, and
for this reason
we consider adversaries whose set $\bD$ consists of rooted graphs only.
A set of communication graphs $\bS$ is \emph{root-compatible} if all their root components 
%\stefan{maybe explicitly call it root component sets?} \kyrill{I have replaced the sloppy ``root'' with the correct term, ``root component''.
%Do you think that's enough?} 
contain a common node, i.e.,
$\bigcap_{G \in \bS} \Root(G) \ne \emptyset$.
We will show that root-compatibility is a central concept when it comes to consensus solvability.
%\ami{We will show that root-compatibility is a necessary condition for consensus solvability.?
%}

%In order to formally express the ability of processes to observe a given communication pattern or its prefix, we use the convenient notion of
%process-time graphs (cf.\~\cite{BM14:JACM}),
%which allows us to represent a communication pattern by a single graph.
%Concretely, given a communication pattern $\sigma = (G_i)_{i=1}^r$ we define the process-time graph $\PT_\sigma = \langle V_\sigma, E_\sigma \rangle$ as follows.
%We set $V_\sigma = \Pi \times [0, r]$ and let
%\[
%E_\sigma = \{ ((p, t), (q, t+1)) \in V_\sigma \times V_\sigma \mid
%t\in [1,r-1], (p, q) \in G_{t+1} \}.
%\]
%\todo{Ami: why is this $G_{t+1}$ and not $G_t$?\\
%KW: The process-time pairs should actually start with time $0$, I wrongly accepted a suggestion here. So you have, e.g.,
%$((p, 0), (q, 1)) \in \PT_\sigma$ if $(p, q) \in \sigma(1)$}
%We use $(p, t) \leadsto_\sigma (q, t')$ to denote the existence of a path from $(p, t)$ to $(q, t')$ in $\PT_\sigma$.
%We say that $p$ is a broadcaster in $\sigma$ (or equivalently, that 
%a communication pattern $\sigma$ is \emph{broadcastable} by $p$)
%if $(p, 0) \leadsto_\sigma (q, r)$ for all $q \in \Pi$.
%
In our full information protocol, the view of process $p$ in $\sigma$ at time 
(= end of round) $r\geq 1$ comprises the view of all the processes that $p$ had in its in-neighborhood in 
the round $r$ communication graph $\sigma(r)$, along with the round
number $r$. The initial view of process $p$ consists of its input value $x_p$ (see the specification
of the consensus problem below) and the round number 0. Formally, views are
recursively defined as
$\view_\sigma(p, 0) = \{ (p, 0, x_p) \}$ and, for $r>0$,
$\view_\sigma(p, r) = (p, r, V_\sigma(p, r-1))$, where
$V_\sigma(p, r-1) = \{ \view_\sigma(q, r-1) : (q, p) \in \sigma(r) \}$. 

For notational simplicity, we will subsequently use the tuple $(p,r)$, 
called a \emph{process-time node}, to refer to the view of process $p$ at 
time $r$.
We thus use $(p, r') \leadsto_\sigma (q, r)$ to denote that $p$ at time $r'<r$ has influenced $q$ at time $r$, which can be expressed formally by the existence of a sequence of processes $p=p_1,\dots,p_{r-r'+1}=q$ satisfying $\view_\sigma(p_i, r'+i-1) \in V_\sigma(p_{i+1}, r'+i-1)$ for $1 \leq i \leq r-r'$. We say that $p$ is a broadcaster in $\sigma$ (or equivalently, that 
a communication pattern $\sigma$ is \emph{broadcastable} by $p$),
if $(p, 0) \leadsto_\sigma (q, r)$ for some time $r$, for all~$q \in \Pi$.

Two communication patterns $\sigma$ and $\sigma'$ of the same length
are \emph{indistinguishable} by a process $p$, denoted 
$\sigma\sim_p\sigma'$,
if this process has the same view in $\sigma$ and in $\sigma'$, \emph{eventually} or \emph{in each round} $r$ in case of infinite patterns. 
Formally, 
$\sigma \sim_p \sigma'\Leftrightarrow \view_\sigma(p, r) = \view_{\sigma'}(p, r)$ if $\sigma$ and $\sigma'$ are $r$-round patterns, and
$\sigma \sim_p \sigma'\Leftrightarrow \view_\sigma(p, r) = \view_{\sigma'}(p, r)$ for all $r$ if $\sigma$ and $\sigma'$ are infinite.
We write $\sigma \sim \sigma'$ if $\sigma \sim_p \sigma'$ for some $p$.
We also use $\sigma \not\sim_p \sigma'\Leftrightarrow \neg (\sigma \sim_p \sigma')$, and
$\sigma \not\sim \sigma'\Leftrightarrow (\forall p\in\Pi:\sigma \not\sim_p \sigma')$.

Given a set $\Sigma$ of communication patterns of the same length, we define its \emph{indistinguishability graph} $I(\Sigma)$ as follows.
The nodes of $I(\Sigma)$ are the communication patterns in $\Sigma$, and the two communication patterns $\sigma, \sigma' \in \Sigma$ are connected by an edge if  $\sigma \sim \sigma'$, i.e., if they are indistinguishable for some process.
We label each edge with the set of processes defining it, that is, we define an edge labeling function $\ell:E(I(\Sigma))\to 2^\Pi$ by 
$\ell((\sigma,\sigma')) = \{ p \in \Pi : \sigma \sim_p \sigma' \}$.

Our first simple, yet important insight is that root components can preserve
indistinguishability. Consider two communication patterns $\sigma, \sigma'$ 
that are indistinguishable for a set of processes $\ell((\sigma,\sigma'))$,
and assume that there is an allowed graph $G\in \bD$ such that $\Root(G)\subseteq\ell((\sigma,\sigma'))$.
Then, the communication patterns $\sigma \circ G$ and $\sigma' \circ G$ are also indistinguishable for the processes in $\Root(G)$: in $G$, these processes
only receive messages from other members of $\Root(G)$, and so these extended communication patterns are still indistinguishable for them.

\begin{claim}
  \label{claim:ignoramus}
  Let $\bD$ be an oblivious message adversary, $r$ be a round, and
  $e = (\sigma, \sigma')$ be an edge in $I(\bD^r)$. For $r>1$, the edge
$(\sigma|_{r-1},\sigma'|_{r-1})$ is in $I(\bD^{r-1})$. Moreover,  
  if there is a graph $G \in \bD$ such that $\Root(G) \subseteq \ell(e)$
  then the edge $e' = (\sigma \circ G, \sigma' \circ G)$ is in $I(\bD^{r+1})$ and its label $\ell(e')$ satisfies 
  $\Root(G) \subseteq \ell(e') \subseteq \ell(e)$.
\end{claim}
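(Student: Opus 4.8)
The plan is to verify the three assertions separately, all of them resting on one structural remark that I would record up front: every communication graph contains all self-loops, and every view has the form $\view_\sigma(p,r)=(p,r,V_\sigma(p,r-1))$, whose last coordinate is a \emph{set} of views, each tagged in its first coordinate by the process that owns it. Hence two such sets of views are equal if and only if, for every process $q$ indexing them, the corresponding owner-$q$ views agree. This equivalence is the bridge between set-level equalities and process-level indistinguishability that I use throughout.

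For the prefix assertion, suppose $r>1$ and let $e=(\sigma,\sigma')$ be an edge of $I(\bD^r)$; pick $p\in\ell(e)$, so that $\view_\sigma(p,r)=\view_{\sigma'}(p,r)$. Since $(p,p)\in\sigma(r)$, the owner-$p$ element of $V_\sigma(p,r-1)$ is precisely $\view_\sigma(p,r-1)$, and likewise for $\sigma'$, so equating the time-$r$ views of $p$ forces $\view_\sigma(p,r-1)=\view_{\sigma'}(p,r-1)$. A time-$(r-1)$ view depends only on the length-$(r-1)$ prefix of the pattern, so this says $\sigma|_{r-1}\sim_p\sigma'|_{r-1}$; as $\sigma|_{r-1},\sigma'|_{r-1}\in\bD^{r-1}$, the pair $(\sigma|_{r-1},\sigma'|_{r-1})$ is an edge of $I(\bD^{r-1})$, as claimed.

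For the extension assertion, take $G\in\bD$ with $\Root(G)\subseteq\ell(e)$ and set $e'=(\sigma\circ G,\sigma'\circ G)$, noting $\sigma\circ G,\sigma'\circ G\in\bD^{r+1}$. Appending a graph does not change any time-$r$ view, so $\view_{\sigma\circ G}(p,r+1)=(p,r+1,\{\view_\sigma(q,r):q\in\In_G(p)\})$ and similarly for $\sigma'$; by the owner-tag remark, $\sigma\circ G\sim_p\sigma'\circ G$ holds if and only if $\In_G(p)\subseteq\ell(e)$. From this characterization of $\ell(e')$ I would read off the rest: the self-loop gives $p\in\In_G(p)$, hence $\ell(e')\subseteq\ell(e)$; and if $p\in\Root(G)$ then $\In_G(p)\subseteq\Root(G)$ because the root component has no incoming edge from outside it, so that $\In_G(p)\subseteq\Root(G)\subseteq\ell(e)$ and therefore $p\in\ell(e')$, giving $\Root(G)\subseteq\ell(e')$. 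Since root components are nonempty, $\ell(e')\neq\emptyset$, so $e'$ is indeed an edge of $I(\bD^{r+1})$ satisfying $\Root(G)\subseteq\ell(e')\subseteq\ell(e)$.

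The whole argument is bookkeeping, so I do not anticipate a real obstacle; the one step requiring care is the reduction of set-equality of view-collections to pointwise equality of owner-tagged views, which is exactly where the first coordinate of a view and the omnipresent self-loops are used. A minor point worth noting: if $\sigma=\sigma'$ the statement is vacuous, so one may assume $\sigma\neq\sigma'$, in which case $\sigma\circ G\neq\sigma'\circ G$ and $e'$ is a genuine (non-loop) edge.
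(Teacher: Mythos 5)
Your proof is correct and follows essentially the same route as the paper's: the prefix claim via the self-loop forcing $\view_\sigma(p,r-1)=\view_{\sigma'}(p,r-1)$, the containment $\Root(G)\subseteq\ell(e')$ via $\In_G(p)\subseteq\Root(G)\subseteq\ell(e)$, and $\ell(e')\subseteq\ell(e)$ via the self-loop placing $\view_\sigma(q,r)$ inside $V_\sigma(q,r)$. Your explicit owner-tag remark (reducing set equality of view-collections to pointwise equality) just makes precise a step the paper leaves implicit.
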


\begin{proof}
If $r>0$, for every $p\in\ell(e)$, the indistinguishability $\sigma \sim_p \sigma'$ also
implies $\sigma|_{r-1} \sim_p \sigma'|_{r-1}$, so the edge
$(\sigma|_{r-1},\sigma'|_{r-1})$ is indeed in $I(\bD^{r-1})$.

To prove the second part of our claim, consider any process $p \in \Root(G)$.
  By the definition of a root component, we have $\In_G(p) \subseteq \Root(p)$, 
  so each process $q$ with $(q, r) \in \view_{\sigma \circ G}(p, r+1)$,
  is in $\Root(G)$, and satisfies $\view_{\sigma}(q, r) = \view_{\sigma'}(q, r)$,
	because $\Root(G) \subseteq \ell(e)$.
  This immediately implies that 
  $\view_{\sigma \circ G}(p, r+1) = \view_{\sigma' \circ G}(p, r+1)$
  and thus the edge $e'$ exists and
  $\Root(G) \subseteq \ell(e')$.
	The last part, $\ell(e') \subseteq \ell(e)$, follows because if
	$\view_{\sigma \circ G}(q, r+1) = \view_{\sigma' \circ G}(q, r+1) = (q, r+1, V_\sigma(q,r))$ for some process
	$q$ then
	$\view_{\sigma}(q, r) = \view_{\sigma}(q, r)$,
	as, by definition, $\view_\sigma(q, r)\in V_\sigma(q, r)$.
\end{proof}

In the \textbf{\emph{consensus problem}}, 
each process $p$ has an input value $x_p \in V$, taken from some finite domain $V$, and an
output value $y_p$, initialized to $\bot$, to which it can write irrevocably, i.e., only once.
An algorithm solves consensus in our setting if it ensures that
\begin{itemize}
  \item eventually, every process $p$ decides, i.e., assigns $y_p \ne \bot$ (termination),
  \item if $y_p \ne \bot$ and $y_q \ne \bot$ then $y_p = y_q$ for all $p, q \in \Pi$ (agreement),
  \item if $y_p = v \ne \bot$ then there is a process $q \in \Pi$ such that
  $x_q = v$ (validity).
\end{itemize}

Since we will consider full information protocols only, our consensus algorithm
is actually a collection of decision functions. For every $p \in \Pi$, the 
decision function $\Delta_p$
maps every possible $\view_\sigma(p,r)$ to a decision value $y_p \in V \cup \{\bot\}$,
such that $\Delta(\view_\sigma(p,r))\neq\bot$ implies
$\Delta(\view_\sigma(p,r'))=\Delta(\view_\sigma(p,r))$ for every $r'\geq r$.
The \emph{configuration} $C_\sigma^r$ of our system at the end of round $r$ in $\sigma$, 
is the vector of the elements $(\view_\sigma(p,r),\Delta(\view_\sigma(p,r)))$, 
for all $p$,  
and the \emph{run} (also called execution in the literature) corresponding to $\sigma$ is the sequence $(C_\sigma^r)_{r\geq 0}$.
In the oblivious message adversary model, a run
is uniquely determined by the input value assignment contained in the initial views 
and the communication pattern since the algorithm is deterministic.

With these definitions in mind, we now state two properties  
of consensus under oblivious message-adversaries, which 
will be of central importance
in this paper. 
We first observe that any valid decision value must be the input value of a
broadcaster.
The proof of the following claim uses the same argument as~\cite[Theorem 2]{WSM19:OPODIS}.

\begin{claim}
	Let $\bD$ be an oblivious message adversary and let $\sigma \in \bD^\omega$.
	If in some correct consensus algorithm, all processes decide $v$ in a run
	with $\sigma$, then $v$ is the input value of a broadcaster in $\sigma$.
  \label{claim:validity}
\end{claim}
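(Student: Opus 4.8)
The plan is to argue by contradiction. Assume that in a correct consensus algorithm all processes decide $v$ in the run with $\sigma$ and some input assignment $x$, but that no broadcaster in $\sigma$ has input $v$ under $x$. The idea is to transform $x$, without changing any process's decision, into an assignment in which \emph{no} process has input $v$, contradicting validity.

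The key building block is a single-step indistinguishability lemma: if $p$ is not a broadcaster in $\sigma$, and $x'$ is an input assignment agreeing with $x$ on every process except possibly $p$, then in the run with $\sigma$ and $x'$ every process still decides $v$. To prove it, recall that a non-broadcaster $p$ fails to reach some process $q$, i.e.\ $(p,0)\not\leadsto_\sigma(q,r)$ for every round $r\ge 1$; moreover $q\neq p$, since $p$ reaches itself in every round through self-loops. Then, by induction on $r$ over the process-time nodes lying outside the causal cone of $(p,0)$, one shows that $\view_\sigma(q,r)$ — as a function of the inputs — depends only on the values $x_{p'}$ of processes $p'$ with $(p',0)\leadsto_\sigma(q,r)$. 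Since $p$ is not among these and $x,x'$ agree off $p$, the view of $q$ at every time $r$ is the same in the two runs, so $q$ decides $v$ in the run with $\sigma,x'$ as well; by agreement and termination, so does every process.

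Now iterate this lemma. Let $B$ be the set of broadcasters of $\sigma$, which depends only on $\sigma$ and not on the inputs, and enumerate $\Pi\setminus B=\{p_1,\dots,p_k\}$. Fix any $v'\in V\setminus\{v\}$ (the case $|V|=1$ is trivial, since then all inputs equal $v$ and any broadcaster of $\sigma$ — which exists because all graphs in $\bD$ are rooted~\cite{FNW20:DAM} — has input $v$). Set $x^{(0)}=x$ and let $x^{(i)}$ agree with $x^{(i-1)}$ except that $x^{(i)}_{p_i}=v'$. Applying the lemma $k$ times, all processes still decide $v$ in the run with $\sigma$ and $x^{(k)}$. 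But in $x^{(k)}$ every process of $\Pi\setminus B$ has input $v'\neq v$, and every process of $B$ has input $\neq v$ by the contradiction hypothesis; hence no process has input $v$, contradicting validity. Therefore $v$ is the input value of some broadcaster in $\sigma$.

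The main obstacle is getting the single-step lemma exactly right: flipping the input of a process $p$ that never reaches $q$ must leave $\view_\sigma(q,r)$ unchanged for \emph{all} rounds $r$, which requires formulating the induction hypothesis over all process-time nodes outside the causal cone of $(p,0)$ — not merely over $q$ — so that it propagates cleanly through the recursive definition of views. The remaining ingredients, namely the iteration over the non-broadcasters and the final appeal to validity, are routine once this lemma is established.
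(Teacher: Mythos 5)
Your proof is correct and takes essentially the same route as the paper's: both arguments iteratively change the input of one non-broadcaster at a time, observe that some process $q$ which that non-broadcaster never reaches has an unchanged view (hence an unchanged decision), and propagate the decision $v$ to all processes by agreement, until reaching an input assignment with no process holding $v$, contradicting validity. The only cosmetic differences are that the paper flips only the inputs of the processes that initially hold $v$ rather than of all non-broadcasters, and that both proofs implicitly assume $|V|\ge 2$ (your aside on $|V|=1$ concerns a degenerate case the paper also glosses over).
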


\begin{proof}
	By the termination condition, there is a round $r$ such that in all runs with
	$\sigma$ all processes decide by this round when running a given correct
	consensus algorithm.
	Suppose that there is a $r$-round run $\varepsilon$ with communication pattern
	$\sigma$ where all processes decide $v$ even though no broadcaster in
	$\sigma$ has input value $v$.
	We show that this leads to a contradiction to the assumed correctness of the
	consensus algorithm.

	Let $P=\{i_1, \ldots, i_k \}$ be the identifiers of those processes that
	start with input value $v$ in $\varepsilon$.
	By the valditiy condition, $P \ne \emptyset$.
	Let $\varepsilon_j$ denote the run that is the same as $\varepsilon$,
	except that the processes with identifiers $i_1, \ldots, i_j$ have an input value $\ne v$.
	We show by induction that some process decides $v$ in $\varepsilon_j$ for
	$0 \le j \le k$.
	Thus in the run $\varepsilon_k$ some process decides $v$, even though no
	process has input $v$ in this run, a contradiction to the validity condition
	of consensus.

	The base of the induction $j=0$ follows immediately because
	$\varepsilon \sim \varepsilon_0 = \varepsilon$.

	For the step from $j$ to $j+1$, where $0 \le j < k$,
	we observe that, because $\sigma$ is not broadcastable for any process with
	an identifier from $P$,
	there is a process $q$ such that $(p_{i_{j+1}}, 0) \not\leadsto (q, r)$.
	Since
	$\varepsilon_j$ is identical to
	$\varepsilon_{j+1}$ except for the input of $p_{i_{j+1}}$,
	we have
	$\varepsilon_j \sim_q
	\varepsilon_{j+1}$.
	As all processes decide by round $r$ in
	$\varepsilon_j$,
	and because they decide $v$ by hypothesis,
	$q$ and, by agreement, all processes decide $v$ in $\varepsilon_{j+1}$.
\end{proof}

Our second observation is that every communication graph in the set of allowed graphs
of an oblivious message adversary, under which consensus is solvable, must be rooted.

\begin{claim}
  If an oblivious message adversary contains, in its set of allowed graphs $\bD$, a graph $G$ that is not rooted, then consensus is impossible.
  \label{claim:non-rooted}
\end{claim}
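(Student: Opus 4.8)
The plan is a standard indistinguishability argument exploiting the fact that a non-rooted graph has at least two disjoint root components. First I would observe that the condensation DAG of any finite digraph has at least one source node, and a source corresponds exactly to a strongly connected component with no incoming edges from outside, i.e., a root component; if $G$ is not rooted, its condensation has at least two sources, so $G$ has two distinct---hence disjoint and nonempty---root components $R_1$ and $R_2$. (The degenerate cases $n=1$, where every graph with self-loops is rooted, and $|V|=1$, where consensus is trivial, I would dispatch separately.)

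Next I would establish the key \emph{locality} property: in the communication pattern $G^\omega$, for every $i \in \{1,2\}$, every process $p \in R_i$, and every round $r$, the view $\view_{G^\omega}(p,r)$ is a function of the input values $\{x_q : q \in R_i\}$ alone. This follows by induction on $r$: the base case $r=0$ is immediate, and since $R_i$ receives no edges from outside we have $\In_G(p) \subseteq R_i$ for all $p \in R_i$, so $\view_{G^\omega}(p,r)$ is assembled solely from the round-$(r-1)$ views of processes in $R_i$. Equivalently, this is a repeated application of the second part of \cref{claim:ignoramus}, with $G$ serving as the extending graph.

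Then I would derive the contradiction. Assume a correct consensus algorithm exists and fix two distinct values $v_0, v_1 \in V$. Consider three runs, all with communication pattern $G^\omega$: run $\alpha$ in which every process has input $v_0$; run $\alpha'$ in which every process has input $v_1$; and run $\beta$ in which every process in $R_1$ has input $v_0$ and every other process has input $v_1$. By termination there is a round $r$ by which all processes have decided in each of these (finitely many) runs; by validity, all processes decide $v_0$ in $\alpha$ and $v_1$ in $\alpha'$. By the locality property, every $p \in R_1$ has the same view at time $r$ in $\beta$ as in $\alpha$, so---as the decision is a function of the view---$p$ decides $v_0$ in $\beta$; likewise every $q \in R_2$ has the same view at time $r$ in $\beta$ as in $\alpha'$, so $q$ decides $v_1$ in $\beta$. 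Since $R_1$ and $R_2$ are nonempty and $v_0 \ne v_1$, this violates agreement in $\beta$, a contradiction; hence consensus is impossible.

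The only mildly delicate point---the ``hard part'', such as it is---is the locality property: one must check carefully that under $G^\omega$ a process in a root component is never influenced, even transitively, by inputs outside that component. Once that is in place, the rest is a routine three-run argument.
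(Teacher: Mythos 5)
Your proof is correct, but it takes a genuinely different route from the paper's. The paper disposes of this claim in one line: since $G$ is not rooted, no process can reach every other process in $G$, so the admissible pattern $G^\omega$ has no broadcaster; Claim~\ref{claim:validity} (any common decision value must be a broadcaster's input) together with termination then yields an immediate contradiction. You instead give a self-contained three-run argument: extract two disjoint root components $R_1, R_2$ from the condensation, prove the locality property that under $G^\omega$ the view of any $p \in R_i$ depends only on the inputs of $R_i$, and violate \emph{agreement} in a mixed-input run by comparing against the two all-same-input runs. Both are sound; the paper's version is shorter because it amortizes the indistinguishability work into Claim~\ref{claim:validity} (whose own proof is an input-perturbation argument not unlike yours), whereas yours violates agreement directly and does not need that claim at all. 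Your explicit handling of the degenerate cases ($n=1$, $|V|=1$) is a nice touch the paper leaves implicit. One small caveat: your parenthetical that the locality property ``is a repeated application of the second part of Claim~\ref{claim:ignoramus}'' is not quite right, since that claim concerns indistinguishability of two \emph{communication patterns} under a fixed input assignment, while your locality statement compares runs with the \emph{same} pattern and different inputs; but your direct induction on $r$ is what actually carries the step, and it is correct.
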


\begin{proof}
  The pattern $\sigma=G^\omega$ may be played by the adversary even though it is not
  broadcastable by any process, thus the claim follows from \cref{claim:validity}.
\end{proof}

\section{A Decision Procedure for Consensus Solvability}
\label{sec:decProc}

%\stefan{kyrill: please check}
%\kyrill{I had to make the statement a bit vaguer; I did not manage to come up
%  with a brief and intuitive reason for why we keep edges.%
%  The true reason to keep an edge is complicated and, in my opinion, this:
%  The edge represents a connectivity that potentially remains forever in an
%  oblivious message adversary whose set of allowed graphs consists only of the members of the connected component of this edge (in the current iteration).
%  If the edge is between $G$ and $G'$, the connectivity it represents is between the communication patterns $G^\omega$ and $G'^\omega$.
%  I am not sure if this would help anyone at this point, though...}
In this section, we present a decision procedure for determining whether consensus is solvable under an oblivious message adversary with a set $\bD$ of allowed graphs. 
In a nutshell, our procedure revolves around the (undirected) indistinguishability graph 
$I(\bD)$, constructed from the given input set $\bD$:
the nodes of the indistinguishability graph represent the graphs of $\bD$ and 
the edges represent indistinguishability. 
Given $I(\bD)$, we create a sequence $\N_1 = I(\bD),\N_2,\ldots $
of refinements of $I(\bD)$,
and use the last graph $\N_{\TD}$ to decide if consensus is solvable under the message adversary $\bD$. Here, $\TD$ is the number of iterations of the decision procedure, that is, the time complexity of the algorithm.
In some sense, our decision procedure can essentially 
be viewed as an explicit computation of the abstract beta classes (and their broadcastability), as introduced by Couloma et al.~\cite{CGP15}.
%It is essentially the first direct computation of the beta classes (and
%their broadcastability), as introduced by Couloma et al.\ in \cref{CGP15}.
%While our algorithm does perhaps provide a more explicit presentation of this
%part of their consensus solvability characterization, it is not an entirely
%novel contribution; our main result is instead to show that there exists a
As an additional feature, it reveals a crucial and previously unknown relation between the number of iterations of the
decision procedure under a given oblivious message adversary and the time
complexity of distributed consensus.

More concretely, our approach, summarized in \cref{alg:Nerve}, uses the fact that a graph whose root component is a subset of $\ell(e)$ is suitable for perpetuating the indistinguishability for at least some of the processes of $\ell(e)$ (according to
\cref{claim:ignoramus}).
The algorithms starts from the indistinguishability graph $\N_1=I(\bD)$ of $\bD$, where $\bD$ is viewed as a set of $1$-round communication patterns:
the nodes of $I(\bD)$ are the graphs of $\bD$,
and two graphs 
$G, G' \in \bD$ are connected by an edge if there is a process $p$ that has the same set of incoming edges in $G$ and in $G'$.
The algorithm then computes a sequence $(\N_i)$ of graphs, using iterative refinement. To refine from $\N_{i-1}$ to $\N_i$, it keeps all $\N_{i-1}$'s nodes, but only a subset of its edges (Line~\ref{line:rm}):
an edge $e=(u,v)$ is kept (by adding it to the set $E_i$) if
%it remains indistinguishable,  i.e.,
the connected component of $e$ in $\N_{i-1}$ contains a communication graph $G$ such that $\Root(G)\subseteq\ell((u,v))$ (Line~\ref{line:guardRM}).

This procedure continues until the set of edges does not change for
two successive iterations, or until
all remaining connected components are root-compatible, i.e., all its
communication graphs have a common member in their respective root
components.
As we will see later in \cref{thm:imposs,thm:poss},
the root-compatibility of the connected
components of the refined indistinguishability graph is precisely what is required to make consensus solvable.

For the algorithm,
we assume that all graphs of $\bD$ have a unique root component,
as consensus is trivially impossible otherwise (\cref{claim:non-rooted}).
%We use 
%$\Root(G)$ to denote the root component of a graph $G$;
%$\In_G(v)$ denote the in-neighborhood of a node $v$ in a graph $G$;
Note that, for two communication graphs $G, H$,
% with $V(G) = V(H) = V$, 
we have $\ell((G,H)) = \{ p \in \Pi : G \sim_p H \} = \{ p \in \Pi : \In_G(p) = \In_H(p) \}$.
%By interpreting a single communication graph as a one-round communication pattern, it follows from our %definition of indistinguishability
%in \cref{sec:notation}
%that two communication graphs are indistinguishable for a node if it has
%the same in-neighborhood in both of them.

\begin{wrapfigure}[23]{R}{0.5\textwidth}
		\vspace*{-3ex}
\center
\begin{minipage}{0.5\textwidth}
		\small
\begin{algorithm}[H]
	\DontPrintSemicolon
	\KwIn{A set of allowed graphs $\bD$}
	\KwOut{The refined indistinguishability graph $\N_{\TD}$. Consensus is solvable if and only if all connected components of $\N_{\TD}$ are root-compatible.}
	\BlankLine
	
	\tcp{Initialization:}
	$i \gets 1$ \;
	$\N_1 \gets I(\bD)$ \;
	\BlankLine
	
	\tcp{Iterative construciton:}
	\Repeat{$\N_i = \N_{i-1}$ or all connected components of $\N_i$ are root-compatible \label{line:until}}{
		$i \gets i+1$ \;
		$E_i \gets \emptyset$ \;
		\ForEach{$e \in E_{i-1}$}{ \label{line:DecProcIterations}
			Let $\bG$ be the communication graphs reachable from $e$ in $\N_{i-1}$ \;
			\If{$\exists G \in \bG : \Root(G)\subseteq\ell(e)$ \label{line:guardRM}}{
				$E_i \gets E_i \cup\{e\}$ \label{line:rm}
			}
		}
		$\N_i \gets \langle \bD, E_i \rangle$\;
	}
  \Return $\N_{i-1}$
\end{algorithm}
\caption{The consensus decision procedure. It iteratively constructs the refined indistinguishability graph $\N_{\TD}$ for a set of allowed graphs $\bD$.}
\label[algorithm]{alg:Nerve}
\end{minipage}
\end{wrapfigure}

The following corollary provides a concise statement of the rule according to which the decision procedure selects which edges to keep when refining $\N_{i-1}(\bD)$ into $\N_i(\bD)$.

\begin{corollary}
  \label{cor:nerveEdge}
  Let $e = (A, B)$ be an edge of $\N_i(\bD)$, for $i > 1$.
  Then in $\N_{i-1}(\bD)$:
  \begin{enumerate}
  	\item 
  	the edge $e = (A, B)$ exists, and
  	\item
	there exists a node $G_e$ with $\Root(G_e) \subseteq \ell(e)$, such that $A,B$ and $G_e$ are in the same connected component.
    \end{enumerate}
\end{corollary}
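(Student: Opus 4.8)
The plan is to unpack Corollary~\ref{cor:nerveEdge} directly from the definition of the refinement step in \cref{alg:Nerve}, so it is essentially a restatement of the loop body together with a single appeal to \cref{claim:ignoramus}. First I would observe that an edge $e=(A,B)$ of $\N_i(\bD)$ with $i>1$ was, by construction, placed in the set $E_i$ during the $i$-th iteration; since $E_i$ is built only by iterating over $e'\in E_{i-1}$ and possibly re-adding them (Line~\ref{line:rm}), membership $e\in E_i$ forces $e\in E_{i-1}$, i.e.\ $e=(A,B)$ is already an edge of $\N_{i-1}(\bD)$. This gives part~(1) immediately, and also confirms that $A,B$ are nodes of $\N_{i-1}(\bD)$ (they are nodes of every $\N_j$, since the node set is always $\bD$).

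Next I would extract part~(2) from the guard on Line~\ref{line:guardRM}: $e$ is added to $E_i$ precisely when the set $\bG$ of communication graphs reachable from $e$ in $\N_{i-1}$ contains some $G$ with $\Root(G)\subseteq\ell(e)$. Take $G_e$ to be that graph. ``Reachable from $e$ in $\N_{i-1}$'' means reachable from an endpoint of $e$, hence $G_e$ lies in the connected component of $\N_{i-1}(\bD)$ that contains both $A$ and $B$ (recall $A$ and $B$ are adjacent in $\N_{i-1}$ by part~(1), so they are in the same component). That is exactly the assertion of part~(2): $A$, $B$, and $G_e$ all lie in one connected component of $\N_{i-1}(\bD)$, and $\Root(G_e)\subseteq\ell(e)$.

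One small point to make careful is the indexing convention: the label $\ell(e)$ appearing in the guard is computed in the indistinguishability graph, and since the node set never changes and we showed $e\in E_{i-1}$, the value $\ell(e)$ is well defined and the same object throughout; there is no ambiguity about which graph's label we mean. It is also worth noting that by \cref{claim:ignoramus} the existence of such a $G_e$ is exactly the condition under which the indistinguishability witnessed along the component can be perpetuated, which is why this corollary is the natural bookkeeping lemma for the later solvability arguments — but that remark is motivational and not needed for the proof.

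Honestly there is no real obstacle here: the statement is a direct transcription of the algorithm's refinement rule, and the only thing to be slightly careful about is phrasing ``reachable from $e$'' correctly (reachable from either endpoint) and confirming that $A$ and $B$ are guaranteed to sit in the same component of $\N_{i-1}$ — which follows because part~(1) already puts the edge $(A,B)$ into $\N_{i-1}$. So the write-up is just: $e\in E_i \Rightarrow e\in E_{i-1}$ (hence (1)); the guard that allowed this gives a graph $G_e$ reachable from $e$ with $\Root(G_e)\subseteq\ell(e)$, and reachability plus adjacency of $A,B$ yields the common-component claim (hence (2)).
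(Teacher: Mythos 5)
Your proof is correct and takes exactly the same route as the paper's: both read off part (1) from the fact that $E_i$ is built only from edges of $E_{i-1}$, and part (2) from the guard on Line~\ref{line:guardRM}, which requires a graph $G_e$ with $\Root(G_e)\subseteq\ell(e)$ reachable from $e$ in $\N_{i-1}$. Your write-up merely spells out the bookkeeping (reachability from an endpoint, well-definedness of $\ell(e)$) that the paper's one-sentence proof leaves implicit.
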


\begin{proof}
  According to \cref{alg:Nerve}, an edge $e = (A, B)$ can only
  persist in $\N_i$ if it was already present in $\N_{i-1}$ and there was a
  corresponding graph $G_e$ with $\Root(G_e) \subseteq \ell(e)$ connected to $A$
  and $B$ in $\N_{i-1}$.
\end{proof}

We observe that, in order for an edge $e$
of the indistinguishability graph to be ``protected'' from being
omitted by the decision procedure by Line~\ref{line:rm} of \cref{alg:Nerve},
there must exist a communication graph whose root component is a subset of the label of $e$.
This motivates the following definition.

\begin{definition}
  \label{def:protected}
  Given a set of allowed graphs $\bD$,
  let $E$ be a set of edges of $I(\bD)$ and $\bG \subseteq \bD$ be a set of
  communication graphs.
  We call $E$ \emph{protected} by $\bG$ if for every $e \in E$ there is a graph
  $G_e \in \bG$ such that $\Root(G_e) \subseteq \ell(e)$.
\end{definition}

The following upper bound on the number of iterations $\TD$ of the decision procedure
exploits the maximum number of different labels of the edges of $I(\bD)$.

\begin{claim}
	The number of iterations of the decision procedure, $\TD$, satisfies $\TD \le 2^n$.\label{claim:decisionupper}
\end{claim}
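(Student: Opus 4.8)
The plan is to argue that the decision procedure strictly shrinks the edge set $E_i$ at every iteration until it halts, and that the number of distinct edge labels that can appear is at most $2^n$, so the process cannot run for more than $2^n$ steps. First I would observe, using \cref{cor:nerveEdge}(1), that $E_i \subseteq E_{i-1}$ for all $i > 1$, i.e., the edge sets are nested and weakly decreasing. Hence if $\N_i = \N_{i-1}$ does not occur, we must have $E_i \subsetneq E_{i-1}$, a \emph{strict} decrease. Since $E_1 = E(I(\bD))$ is finite, this already gives termination, but the crude bound is $|E(I(\bD))|$, which can be as large as $\binom{|\bD|}{2}$ and thus too weak; the finer argument must be in terms of labels.

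Next I would partition the edges of $I(\bD)$ by their label: for each nonempty set $S \subseteq \Pi$, let $L_S$ be the set of edges $e$ with $\ell(e) = S$. There are at most $2^n - 1$ such classes (or $2^n$, counting loosely; the empty label never occurs on an actual edge, so one can be slightly more careful, but $2^n$ suffices for the claimed bound). The key structural fact to extract from Line~\ref{line:guardRM} / \cref{def:protected} is that the guard ``$\exists G \in \bG : \Root(G) \subseteq \ell(e)$'' depends on $e$ only through its label $\ell(e)$ and through which connected component $e$ lies in. So I would aim to show that once \emph{every} edge of some label class $S$ has been deleted at some iteration, no edge of label $S$ can ever be protected again, hence the family of label classes that are ``still present'' is monotonically decreasing, and each iteration that is not the final one kills off at least one entire label class. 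Combined with the $\le 2^n$ bound on the number of label classes, this yields $\TD \le 2^n$.

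The main obstacle — and the step that needs the most care — is establishing that ``an entire label class dies each non-terminal iteration,'' because a priori a single iteration might delete only \emph{some} edges of a given label $S$ while leaving others (those lying in a component still containing a witness $G_e$ with $\Root(G_e) \subseteq S$). To handle this I would strengthen the bookkeeping: instead of tracking label classes alone, track pairs (connected component, label), or better, note that since $\Root(G_e) \subseteq \ell(e) = S$ is a condition on $S$ that is independent of which component we are in, what really matters is, for each label $S$, the set $\bG_S := \{G \in \bD : \Root(G) \subseteq S\}$ of \emph{potential} protectors. An edge of label $S$ survives iteration $i$ iff its component in $\N_{i-1}$ meets $\bG_S$. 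One then argues that the set of labels $S$ for which some surviving edge still has its component meeting $\bG_S$ is nonincreasing in $i$, and that if it fails to decrease and the edge set fails to stabilize, the halting condition (all components root-compatible, Line~\ref{line:until}) must already be met — giving the contradiction that drives the counting. Formalizing this monotonicity is the crux; once it is in place, the bound $\TD \le 2^n$ follows immediately by counting the at most $2^n$ candidate labels.
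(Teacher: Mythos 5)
Your setup (the nested edge sets $E_i \subseteq E_{i-1}$ and the count of at most $2^n-1$ nonempty labels) matches the paper's, and you have correctly isolated the crux: a single iteration may delete only some of the edges carrying a given label. The gap is in your proposed resolution. The quantity you suggest tracking --- the set of labels $S$ for which some surviving $S$-labelled edge lies in a component meeting $\bG_S=\{G\in\bD:\Root(G)\subseteq S\}$ --- is indeed nonincreasing, but it need not strictly decrease at a non-terminal iteration, and your claimed dichotomy fails. Concretely, a label $S$ may occur on edges in two different connected components of $\N_{i-1}$, one containing a protector from $\bG_S$ and one not: the unprotected copies are deleted (so $E_i\subsetneq E_{i-1}$), yet $S$ remains in your tracked set via the other component, and the component that lost its $S$-edges may stay connected through differently labelled edges and remain non-root-compatible. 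So neither does your label set shrink nor is the halting condition of Line~\ref{line:until} met, and the global counting argument does not close.

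The paper repairs exactly this by localizing the count to one connected component instead of counting labels globally. Since the procedure runs for $\TD$ iterations, there is a set $\bG\subseteq\bD$ whose induced subgraph of $\N_i$ stays connected and non-root-compatible for all $i<\TD$; one follows the component $\bC_i(\bG)$ of $\bG$ in $\N_i$. Within a single connected component, the guard of Line~\ref{line:guardRM} depends on an edge only through its label (all edges of the component see the same set of reachable graphs), so all equally-labelled edges \emph{of that component} are deleted in the same iteration. Moreover $|\bC_i(\bG)|<|\bC_{i-1}(\bG)|$ for $1<i<\TD$, since otherwise the edges connecting $\bC_{i-1}(\bG)$ would be protected by that component's own graphs in every subsequent iteration and the procedure would already have terminated; hence each iteration deletes at least one edge, and therefore exhausts at least one entire label class, of this component. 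Because components only shrink and edges are never re-added, an exhausted label never reappears in the component, giving at most $2^n-1$ productive iterations and $\TD\le 2^n$. Your fallback of tracking (component, label) pairs would instead pick up a factor of the number of components, which is too weak; the point is that one only needs to follow the single slowest component.
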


\begin{proof}
	For a set of communication graphs $\bG$, let $\N_i[\bG]$ denote the subgraph
	of $\N_i$ induced by $\bG$.
	According to \cref{alg:Nerve}, there must exist a set of communication graphs
	$\bG \subseteq \bD$ such that $\N_{i}[\bG]$ is connected and not
	root-compatible for all $i < \TD$, whereas all connected components of
	$\N_{\TD}$ are root-compatible.
	That is, $\bG$ constitutes the last connected component of $I(\bD)$ that had
	to be broken apart by the decision procedure in order to arrive at a graph
	$\N_{\TD}$ where all connected components are root-compatible.

	Furthermore, for $1 < i < \TD$, the set $\bC_i(\bG)$ of nodes reachable from 
	$\bG$ in $\N_i$ satisfies $|\bC_i(\bG)| < |\bC_{i-1}(\bG)|$.
	This is because, if the $(i-1)$\textsuperscript{th} iteration of the decision
	procedure does not result in the removal of a node from $\bC_{i-1}(\bG)$,
	then a set of edges that connects $\bC_{i-1}(\bG)$ in $\N_{i-1}$ is protected by
	the communication graphs of $\bC_{i-1}$; hence, no
	node will be removed from $\bC_j(\bG)$ for any $j \ge i$.
	This cannot come to pass, however, because then the decision procedure would
	already have terminated after $i < \TD$ iterations.

	In addition, all edges $e$ of the connected component of $\bG$ in $\N_i$ that
	have the same label $\ell(e) = \lambda$ are removed during a single iteration
	of the decision procedure:
	If $e$ is removed from the connected component of $\bG$ in $\N_i$, then there
	is no communication graph in $\bC_i(\bG)$ that protects $e$ and so all edges
	with label $\lambda$ are removed from the connected component of $\bG$.
	We recall that every label is a nonempty subset of $\Pi$, thus there are at most $2^n-1$
	different labels.
	The claim follows because, as we have shown above,
	$|\bC_i(\bG)| < |\bC_{i-1}(\bG)|$;
	hence at least one edge is removed from the connected component of $\bG$ in
	$\N_i$ during the $i$\textsuperscript{th} iteration of the decision
	procedure.
\end{proof}

Before looking more closely into the ramifications of a large number of
iterations $\TD$ of the decision procedure of a given oblivious message
adversary $\bD$, it is instructive to study 
a few ``extreme'' examples of such adversaries, and, in particular, how the number
of communication graphs $|\bD|$ relates to $\TD$.
First, one may wonder whether the decision procedure can be fast if the set $\bD$ of allowed graphs is exponentially large.
An example for such a scenario, in which consensus is solvable, is the set of all communication graphs that consist of a single clique
of a fixed size $\lfloor n/c \rfloor$, for a constant $c$,
and all the edges from each clique node to all other nodes
(plus the self loops).
There are exponentially many such graphs, yet no two are indistinguishable to any of the nodes, 
so the decision procedure already terminates after the first iteration because all connected components in $I(\bD)$ consist of a single communication graph.
An example where a fast decision is possible despite an exponentially sized $\bD$, where consensus is impossible, is the set of all rooted trees for $n>2$.
In this case, there is a path in $I(\bD)$ connecting every two trees $T_1, T_2$.
Also, every edge $e$ in $I(\bD)$ has a corresponding tree $T \in \bD$ that
protects this edge, since there is a tree $T$ with $\Root(T) \subseteq \ell(e)$.

Complementing these insights, the question arises whether there are examples where
$\TD$ is (almost) the same as $|\bD|$. We will 
answer this question affirmatively (in \cref{sec:lowerbound}), by giving an explicit example where
$\TD$ is even exponential in $n$. In a nutshell, we will choose a set of communication graphs
$\bD = \{ G_1, \ldots, G_{\TD} \}$,
where the root component of each graph consists of a different set of processes
of the same cardinality, i.e., for every $G, G' \in \bD$ we have
$|\Root(G)| = |\Root(G')|$, but if $G\neq G'$ then $\Root(G) \neq \Root(G')$.
Furthermore, we let
\begin{equation}
	G_1 \sim_{R_3} G_2 \sim_{R_4} \ldots \sim_{R_{\TD}} G_{\TD-1} \sim_S G_{\TD},\label{eq:examplechain}
\end{equation}
where $R_i = \Root(G_i)$ 
and $S$ is a nonempty set such that no $G \in \bD$ satisfies
$\Root(G) \subseteq S$.
Here, the decision procedure can remove only the rightmost edge $\sim_S$ in the first iteration,
only the edge $\sim_{R_{\TD}}$ in the second iteration, and so on,
because all the remaining edges are protected by one of the remaining graphs.

Also in this case, consensus might be solvable (as in the example in
\cref{sec:lowerbound} described above), or it might be impossible, as in the instance
\[
	G'_1 \sim_{R'_3} G'_2 \sim_{R'_1} G'_3 = G_1 \sim_{R_3} G_2 \sim_{R_4} \ldots
	\sim_{R_{\TD}} G_{\TD-1} \sim_S G_{\TD}
\]
where we assume that $G'_1$ and $G'_2$ are chosen such that they are not root-compatible: in
this case, the indistinguishability $G'_1 \sim_{R'_3} G'_2$ will never break.

In view of the above results, it might be tempting to assume that $\TD$ also
determines the termination time of distributed consensus. Interestingly, this is not the case. Complementing the result of \cref{thm:poss} established in \cref{sec:consensus}, we will show in \cref{sec:source} that there are instances 
of oblivious message adversaries where the decision
procedure terminates after a constant number of iterations, while the consensus
terminatino time is exponential in $n$.

%First, however, we establish, for the first time, the relation of
%the number of iterations performed by the decision procedure to a lower bound on the time complexity of consensus under an oblivious message adversary.

\section{Time Complexity of Consensus}\label{sec:consensus}

In this section, we study the time complexity of consensus,
%, i.e., the question of when it is possible to decide
%under an oblivious message adversary where consensus is solvable.
and also ascertain our claim from \cref{sec:decProc}, namely, that
the decision procedure of \cref{alg:Nerve} correctly assesses oblivious message
adversaries where consensus is solvable.
Thus, throughout this section, we consider an oblivious message adversary, where, after some number $\TD$ of iterations, 
\cref{alg:Nerve} determined that all connected components of the refined
indistinguishability graph $\N_{\TD}$ are root-compatible.

	For solving consensus, we use the fact that non-connectivity in $\N_{\TD}$
	implies non-connectivity in $I(\bD^{(n-1)\TD+1})$, in the following sense:
	Let $\bC_1$ and $\bC_2$ be two different connected components of $\N_{\TD}$, and $t > (n-1) \TD$.
	Then, any two communication patterns 
	$\sigma_1\in\bC_1^t$ and $\sigma_2\in\bC_2^t$, consisting only of graphs of $\bC_1$ and 
$\bC_2$, respectively, 
	are not connected in the indistinguishability graph $I(\bD^t)$.

We then apply a pigeon-hole argument to show that all connected components of
$I(\bD^{c t})$ are broadcastable, where $c$ is the number of connected
components of $\N_{\TD}$. Note that this choice guarantees that graphs from at least one 
connected component are used at least $t$ times. 
From here, a consensus decision function $\Delta_p$
can be easily defined by 
(i) for each connected component $\cC$ of $I(\bD^{c t})$, choosing one of its broadcasters, denoted $b(\cC)$, and 
(ii) 
if $p$'s view is consistent with a graph sequence $\sigma$,
and $\sigma$ belongs to a connected component $\cC$ of $I(\bD^{c t})$,
then $p$ decides on the input $x_{b(\cC)}$ of $b(\cC)$, 
for which $\view_\sigma(b(\cC),0,x_{b(\cC)})$ must already be present in $p$'s view.
	
It is rather immediate that such a procedure solves consensus, given the mapping
$b(\cC)$, which we will prove in the remainder of this section:
Termination follows from the existence of the mapping $b(\cC)$; 
validity follows because the decided value was some process' input value; 
agreement is a consequence of all pairwise indistinguishable views lying in
the same connected component $\cC$ of $I(\bD^{c t})$.
Hence two different decisions can only occur in runs that are distinguishable
for everyone (and are thus distinct runs).

A path $\pi=(\sigma_0,\ldots,\sigma_s)$ in $I(\bD^r)$
is a sequence of communication patterns such that $(\sigma_i,\sigma_{i+1})\in E(I(\bD^r))$ 
for all $0\leq i<s$.
Given such a path and $r'\leq r$, we write $\pi|_{r'}$ to denote the path $\left(\sigma_0|_{r'},\ldots,\sigma_\ell|_{r'}\right)$ in $I(\bD^{r'})$
of the ${r'}$-round prefixes of the communication patterns in $\pi$, which exists by 
\cref{claim:ignoramus}.
Similarly, we denote by $\pi(r')$ the path $(\sigma_0({r'}),\ldots,\sigma_\ell({r'}))$ in $I(\bD)$ of
the ${r'}$\textsuperscript{th} graphs of the communication patterns in $\pi$.
Both $\pi|_{r'}$ and $\pi(r')$ are indeed paths in the corresponding indistinguishability graphs, due to a more general claim: removing an intermediate communication round from all communication patterns in a path cannot disconnect it, as stated below.

For a communication pattern $\sigma$ of length $r$, and some round $r' \le r$, let $\sigma - r'$ denote $\sigma|_{r'-1} \circ \sigma(r'+1) \circ \cdots \circ \sigma(r)$, i.e., the communication pattern $\sigma$ with the round $r'$ communication graph omitted.
%Similarly, let $I(\bD^r) - r'$ denote the indistinguishability graph of
%$\sigma -r'$ for every $\sigma \in \bD^r$.
\cref{cor:removeRound} shows that edges, and hence paths, between communication patterns in $I(\bD^r)$ 
are preserved when omitting some round $r'$.

\begin{corollary}
	\label{cor:removeRound}
  If the edge $(\sigma,\sigma')$ is in $I(\bD^r)$, then the edge $(\sigma - r',
  \sigma' - r')$ is in $I(\bD^{r-1})$ as well.
\end{corollary}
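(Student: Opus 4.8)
The plan is to reduce the corollary to \cref{claim:ignoramus} by a "shift" of round indices, since the only real content is that deleting an interior communication graph from both endpoints of an edge keeps them adjacent in the smaller indistinguishability graph. First I would dispatch the trivial cases: if $r' = r$, then $\sigma - r = \sigma|_{r-1}$ and $\sigma' - r = \sigma'|_{r-1}$, and adjacency of the prefixes in $I(\bD^{r-1})$ is exactly the first part of \cref{claim:ignoramus}. So assume $r' < r$.

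For the main case, let $p \in \ell((\sigma,\sigma'))$ be a process that does not distinguish $\sigma$ from $\sigma'$, so $\view_\sigma(p,r) = \view_{\sigma'}(p,r)$. The key observation is that a process's round-$k$ view, for $k \ge r'$, is a deterministic function of the graph sequence $\sigma(r'), \ldots, \sigma(k)$ \emph{applied to} the configuration of views at time $r'-1$; moreover for $k < r'$ the view is a function of $\sigma|_{r'-1}$ alone. Equivalently, I would argue by downward induction on the "time'' coordinate that for every process $q$ and every $k$ with $r'-1 \le k \le r-1$, we have $\view_{\sigma - r'}(q, k) = f_{k}\big(\view_\sigma(\cdot, k+1)\big)$ for the appropriate bookkeeping — but cleaner is to phrase it as: the view a process obtains after running the graph sequence $\sigma|_{r'-1}$ followed by $\sigma(r'+1),\dots,\sigma(r)$ equals the view it obtains after $\sigma|_{r'-1}$ followed by $\sigma(r'),\dots,\sigma(r)$, provided the inputs to the common suffix agree. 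Since $\view_\sigma(p,r) = \view_{\sigma'}(p,r)$, and both $\sigma - r'$ and $\sigma' - r'$ share the same suffix $\sigma(r'+1) \circ \cdots \circ \sigma(r)$ of graphs (well, they need not — $\sigma$ and $\sigma'$ may differ in those rounds), the honest route is instead an induction on the round that is deleted.

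So the approach I would actually carry out: induct on $r - r'$, the number of rounds after the deleted one. Base case $r - r' = 0$ is the trivial case above. For the inductive step, note $\sigma(r) = \sigma'(r) =: H$ need not hold, so instead I strip off the \emph{last} round: by the first part of \cref{claim:ignoramus}, $(\sigma|_{r-1}, \sigma'|_{r-1})$ is an edge in $I(\bD^{r-1})$ with label containing $\ell((\sigma,\sigma'))$ — more precisely, $\ell((\sigma,\sigma')) \subseteq \ell((\sigma|_{r-1},\sigma'|_{r-1}))$. Hmm, but that does not obviously let me reattach round $r$ after deleting round $r'$, because I have no control over the root component of $\sigma(r)$. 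The cleanest rigorous path is therefore the direct view-computation: I would show by induction on $k$, ranging over $k = r', r'+1, \ldots, r$, that for every process $q$,
\[
\big(q,\ k-1,\ V_{\sigma}(q, k)\big)\big|_{\text{relabel}} = \view_{\sigma - r'}(q, k-1)
\]
— i.e., that the time-$k$ view in $\sigma$ is literally the time-$(k-1)$ view in $\sigma - r'$, up to the (purely notational) shift of the round counter. Granting this "shift lemma'', $\view_\sigma(p,r) = \view_{\sigma'}(p,r)$ immediately gives $\view_{\sigma - r'}(p, r-1) = \view_{\sigma' - r'}(p, r-1)$, hence $p \in \ell((\sigma - r', \sigma' - r'))$ and the edge exists in $I(\bD^{r-1})$.

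The main obstacle, and the only delicate point, is making the "shift lemma'' honest in the presence of the round number embedded in each view: $\view$ records the round index, so $\view_\sigma(q,k)$ and $\view_{\sigma-r'}(q,k-1)$ are not \emph{equal} as labelled structures but equal after decrementing every round tag by one (for tags $\ge r'$). One must therefore either (i) prove indistinguishability is insensitive to a uniform shift of round counters — true because the decision function and the view structure only use round numbers for bookkeeping and all of $\sigma, \sigma', \sigma-r', \sigma'-r'$ are shifted identically — or (ii) sidestep it by working with the "history'' projection that forgets round numbers and noting $\sim_p$ is determined by that projection. I expect to take route (i): state once that for any communication pattern $\tau$ of length $\ge r'$ the graph sequence $\tau(1),\dots,\tau(r'-1),\tau(r'+1),\dots,\tau(r)$ drives the same cascade of message deliveries as $\tau(1),\dots,\tau(r)$ would starting one round earlier, so indistinguishability of $\sigma,\sigma'$ transfers verbatim to $\sigma-r',\sigma'-r'$. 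Everything else is routine unrolling of the recursive definition of $\view$, and the path statement then follows by applying the edge statement to each consecutive pair along $\pi$.
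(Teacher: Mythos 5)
Your proposal rests on the ``shift lemma'' that $\view_\sigma(q,k)$ equals $\view_{\sigma-r'}(q,k-1)$ up to a uniform relabeling of round counters, and that lemma is false. Deleting round $r'$ does not merely shift the time axis: it removes an entire round of message deliveries, so the views in $\sigma-r'$ are genuinely \emph{poorer} than those in $\sigma$. Concretely, already at $k=r'$ your claimed identity would equate $\view_\sigma(q,r') = (q,r',V_\sigma(q,r'-1))$, which contains the time-$(r'-1)$ views of \emph{all} of $q$'s in-neighbors in $\sigma(r')$, with $\view_{\sigma-r'}(q,r'-1)=\view_\sigma(q,r'-1)$, which contains only $q$'s own earlier history; these differ whenever $\In_{\sigma(r')}(q)\ne\{q\}$ (take $n=2$, $\sigma(r')$ the complete graph, $\sigma(r'+1),\dots,\sigma(r)$ self-loops only: in $\sigma$ process $p$ learns $q$'s input, in $\sigma-r'$ it never does). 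Your fallback route (ii), projecting away round numbers, does not help for the same reason --- the discrepancy is in the delivered information, not in the bookkeeping. You correctly sensed that the ``common suffix'' idea breaks because $\sigma$ and $\sigma'$ may differ after round $r'$, and that stripping the last round gives no way to reattach it, but the resolution you settled on does not survive its own base case.

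The correct argument runs in the opposite direction, and is what the paper does: one shows that every piece of \emph{distinguishing} information available in $\sigma-r'$ versus $\sigma'-r'$ is also available in $\sigma$ versus $\sigma'$, i.e., one argues by contraposition. The key fact is that causal influence in the shortened pattern lifts to causal influence in the original: if $(q,r'')\leadsto_{\sigma-r'}(p,\cdot)$, then the witnessing chain of processes can be replayed in $\sigma$ by inserting a ``wait'' step at round $r'$ using the self-loop, so $(q,r'')\leadsto_{\sigma}(p,r)$. Hence if some process $p$ distinguishes $\sigma-r'$ from $\sigma'-r'$ --- because some $(q,r'')$ influences it in one but not the other, or influences it in both with differing views --- then the same witness shows $p$ distinguishes $\sigma$ from $\sigma'$, contradicting $\sigma\sim_p\sigma'$. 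The shortened patterns deliver a \emph{subset} of the information of the originals, which is exactly why indistinguishability is preserved under deletion of a round; your proposal instead asserts they deliver the \emph{same} information, which is what fails.
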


\begin{proof}
	Assume for contradiction that the edge is not preserved, i.e., $\sigma \sim \sigma'$ while $\sigma - r' \not\sim \sigma'-r'$.
	So, there is a process $p$ such that 
	$\sigma \sim_p \sigma'$ (this is true for at least one process, $p$)
	while
	$\sigma - r' \not\sim_p \sigma'-r'$ (this is true for all processes, and specifically for $p$).
	This implies that there exists a round $r''\neq r'$ and a process $q$ with
        w.l.o.g.\ $(q, r'') \leadsto_{\sigma-r'} (p, r)$ but $(q, r'') \not\leadsto_{\sigma'-r'} (p, r)$
        or
	$\view_{\sigma-r'}(q,r'')\neq\view_{\sigma'-r'}(q,r'')$:
	if no such $q,r''$ existed, we would have
	$\sigma - r' \sim_p \sigma' -r'$.
	Since $(q, r'') \leadsto_{\sigma-r'} (p, r)$, 
	we also have $(q, r'') \leadsto_{\sigma} (p, r)$,
	as the sequence of processes causing $(q, r'')$ to be in  $\view_{\sigma-r'}(p, r)$ also exists in $\sigma$ and we just need to take path where the process of round $r'$ is the same as of round $r'-1$.
To finish, it suffices to consider two cases: 
	if $(q, r'') \not\leadsto_{\sigma'} (p, r)$, then $p$ distinguishes $\sigma$ and $\sigma'$ 
	since it has $\view_{\sigma}(q,r'')$ in its view in $\sigma$ but does not have $\view_{\sigma'}(q,r'')$ in its view in $\sigma'$;
	if $(q, r'') \leadsto_{\sigma'} (p, r)$, then $p$ distinguishes $\sigma$ and $\sigma'$ by having  	$\view_{\sigma}(q,r'')\neq\view_{\sigma'}(q,r'')$ in its views.
	In both cases $\sigma \not\sim_p \sigma'$, a contradiction.
\end{proof}

The following corollary relates the preservation of an edge in $I(\bD^r)$ to the root components of the communication graphs that occur in the communication patterns of this edge.

\begin{corollary}
\label{cor:infProp}
Let $\bD$ be a set of allowed graphs and $0<r'<r$ integers.
Consider an edge 
$e=(\sigma,\sigma') \in I(\bD^r)$
such that 
$e'=(\sigma|_{r'},\sigma'|_{r'}) \in I(\bD^{r'})$
satisfies
$\sigma|_{r'}\neq\sigma'|_{r'}$.
Then, there are at most $|\ell(e')|-1$ rounds $r_j$, $r'<r_j\leq r$, satisfying 
$\Root(\sigma(r_j)) \not\subseteq \ell(e')$.
\end{corollary}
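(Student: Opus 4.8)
The plan is to track, as a function of the round number $s$, the label of the running prefix edge. For $r' \le s \le r$, put $L_s = \ell\big((\sigma|_s,\sigma'|_s)\big)$; for every such $s$ this is indeed an edge of $I(\bD^s)$, since the prefixes differ (already $\sigma|_{r'}\neq\sigma'|_{r'}$, hence $\sigma|_s\neq\sigma'|_s$) and are indistinguishable for the process $p$ witnessing $\sigma\sim\sigma'$. I would first record two easy facts. (i) The chain is non-increasing, $L_{r'}\supseteq L_{r'+1}\supseteq\cdots\supseteq L_r$: a process distinguishing two $s$-round prefixes also distinguishes their $(s+1)$-round extensions, which is exactly the first part of \cref{claim:ignoramus} (each view contains the view of the self-loop neighbour, so equality at time $s+1$ forces equality at time $s$). (ii) Every $L_s$ is nonempty, because the witness $p$ lies in $L_r$ and hence in every $L_s$ by (i). Since $L_{r'}=\ell(e')$ and $L_r=\ell(e)$, we get $|\ell(e')|-|\ell(e)| = |L_{r'}|-|L_r| = \sum_{s=r'}^{r-1}\big(|L_s|-|L_{s+1}|\big)$, a sum of nonnegative terms.

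The heart of the argument is the following claim: if $r'\le s<r$ and $\Root(\sigma(s+1))\not\subseteq L_s$, then $L_{s+1}\subsetneq L_s$. To prove it, write $G=\sigma(s+1)$, $G'=\sigma'(s+1)$, and suppose for contradiction $L_{s+1}=L_s=:L$. For any $p\in L$, the equality $\view_\sigma(p,s+1)=\view_{\sigma'}(p,s+1)$ unfolds to $\{\view_\sigma(u,s):u\in\In_G(p)\}=\{\view_{\sigma'}(u,s):u\in\In_{G'}(p)\}$. Each view's first component is its own process identifier, so this set equality forces $\In_G(p)=\In_{G'}(p)$ and, for each $u$ in this common in-neighbourhood, $\view_\sigma(u,s)=\view_{\sigma'}(u,s)$, i.e.\ $\In_G(p)=\In_{G'}(p)\subseteq L_s=L$. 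Thus in $G$ there is no edge from $V(G)\setminus L$ into $L$; since $L\neq\emptyset$ and every node of $\Root(G)$ has a path in $G$ to every node, $\Root(G)$ cannot contain a node outside $L$ (such a node would have a path reaching some element of $L$, and that path must cross into $L$, contradicting the absence of incoming edges). Hence $\Root(G)\subseteq L=L_s$, contradicting the hypothesis.

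With the claim established, the corollary follows by counting. Let $r'<r_1<r_2<\cdots<r_k\le r$ be the rounds with $\Root(\sigma(r_j))\not\subseteq\ell(e')$. Because $L_{r_j-1}\subseteq L_{r'}=\ell(e')$, the hypothesis $\Root(\sigma(r_j))\not\subseteq\ell(e')$ implies $\Root(\sigma(r_j))\not\subseteq L_{r_j-1}$, so the claim applies to the step from $r_j-1$ to $r_j$ and yields $|L_{r_j-1}|-|L_{r_j}|\ge 1$. The rounds $r_j$ are distinct, so these are $k$ distinct summands of $\sum_{s=r'}^{r-1}\big(|L_s|-|L_{s+1}|\big)=|\ell(e')|-|\ell(e)|$, each other summand being nonnegative by (i); therefore $k\le |\ell(e')|-|\ell(e)|\le |\ell(e')|-1$, using $|\ell(e)|\ge 1$ from (ii).

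I expect the main obstacle to be the key claim, and within it the point that $\sigma$ and $\sigma'$ need not agree on round $s+1$, so one cannot simply invoke the ``Moreover'' part of \cref{claim:ignoramus}. The resolution is the observation that indistinguishability at time $s+1$ already pins down the round-$(s+1)$ in-neighbourhoods of the surviving processes — they must coincide in $G$ and $G'$ and lie entirely in $L_s$ — after which the root-component property (a root node reaches everything) forces $\Root(G)\subseteq L_s$. Everything else is routine bookkeeping on a monotone chain of finite sets.
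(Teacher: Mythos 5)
Your proof is correct, but it runs in the opposite direction from the paper's. The paper argues by contradiction and tracks a \emph{growing} set $U_j$ of processes that have been causally influenced, after round $r'$, by some process outside $\ell(e')$: each round whose root component is not contained in $\ell(e')$ contributes a new member of $U_j$ via a cut argument (a root node outside $\ell(e')$ reaches everyone, so some edge crosses from $U_j$ into its complement), and after $|\ell(e')|$ such rounds $U_j=\Pi$, so every process distinguishes $\sigma$ from $\sigma'$, contradicting the existence of $e$. You instead track the \emph{shrinking} chain of labels $L_s=\ell((\sigma|_s,\sigma'|_s))$ directly, and your key lemma is the dual statement: if $L_{s+1}=L_s$ then the round-$(s{+}1)$ in-neighbourhoods of all surviving processes coincide and lie inside $L_s$, so $L_s$ has no incoming edges in $\sigma(s{+}1)$ and must therefore contain $\Root(\sigma(s{+}1))$. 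Both arguments ultimately rest on the same root-reachability fact, but yours is a clean potential-function argument that avoids the influence relation $\leadsto$ entirely, works round by round rather than only on the ``bad'' rounds, and actually yields the slightly sharper bound $|\ell(e')|-|\ell(e)|$; it also does not need the hypothesis $\sigma|_{r'}\neq\sigma'|_{r'}$ except to guarantee that the prefix edges are well-defined, whereas the paper uses it to seed $U_0\neq\emptyset$. The paper's version, in exchange, makes explicit \emph{which} processes become distinguishers and how information spreads, which is closer in spirit to the broadcastability arguments used elsewhere in \cref{lem:PatternToGraph}.
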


\begin{proof}
  By \cref{claim:ignoramus}, we can be sure that $e'$ exists. For a contradiction,
  suppose that there are $|\ell(e')|$ rounds
  $r'< r_1<\cdots<r_{|\ell(e')|}\leq r$
  such that each $r_j$ satisfies $\Root(\sigma(r_j)) \not\subseteq \ell(e')$.
  Let 
  \begin{equation}
  U_j =
  \{ p \in \Pi : \exists q \in \Pi  \setminus \ell(e') \; 
  (q, r') \leadsto_{\sigma} (p, r_j) \} \label{eq:Ujdef}
  \end{equation}
  denote the set of processes that received a message by round $r_j$, sent after round $r'$,
  from a process outside of $\ell(e')$.
  Let $r_0=r'$ and $U_0=\Pi \setminus \ell(e')$.
  Note that from $\sigma|_{r'}\neq\sigma'|_{r'}$ it follows that
  $\emptyset \neq \ell(e') \neq \Pi$ and thus $U_0 \ne \emptyset$.
  
  Let $\overline U_j = \Pi \setminus U_j$ and consider the cut
  $(U_j, \overline U_j)$ in $\sigma(r_j)$, the communication graph at round $r_j$.
  Since we have
  $\Root(\sigma(r_j)) \not\subseteq \ell(e')$,
  there is a process $p' \in \Root(\sigma(r_j)) \setminus \ell(e')$.
  On the one hand, $p' \in \Root(\sigma(r_j)) \setminus \ell(e')$ immediately implies $p' \in U_j$, since $(p', r') \leadsto_\sigma (p', r_j)$.
  On the other hand, $p' \in \Root(\sigma(r_j))$ implies that in $\sigma(r_j)$ there is a path from $p'$ to every node.
  Hence, if $\overline U_j\neq\emptyset$, then there is a node $p''\in \overline U_j$, and a path in $\sigma(r_j)$ from $p'$ to $p''$;
  this path must cross an edge $\tilde e_j$ from $U_j$ to $\overline U_j$.
        
We now use induction on $j=0,\ldots,|\ell(e')|$ to show that $|U_j| \ge n-|\ell(e')|+j$. For the basis
$j=0$, we have already shown that $|U_0|=n-|\ell(e')| > 0$.
  In the induction step, we prove that $U_j$ grows by at least one (unless $U_j = \Pi$) due to the edge $\tilde e_j = (q', q'')$ from $U_j$ to
  $\overline U_j$.
  As, for every $q \in \Pi\setminus \ell(e')$ in the definition if $U_j$ in \cref{eq:Ujdef}, 
$(q, r') \leadsto_\sigma (q', r_j)$ in conjunction with
  $(q', r_j) \leadsto_\sigma (q'', r_{j+1})$ implies
  $(q, r') \leadsto_\sigma (q'', r_{j+1})$, we obtain 
  $U_{j+1} \supseteq U_j \cup \{ q'' \}$ as required.
  
  It hence follows that $|U_{|\ell(e')|}| =n$,
  i.e., by round $r \ge r_{|\ell(e')|}$, every process has received a message, sent after round $r'$, from a process $q$ outside of
  $\ell(e')$. Consequently, at time $r$, the view of every process contains the view of a process $q$ that could distinguish $\sigma|_{r'}$ and $\sigma'|_{r'}$, hence every process can also 
  distinguish $\sigma$ and $\sigma'$. Formally,
    $\forall p \in \Pi \: \exists q \in \Pi \setminus \ell(e) :
    (q, r')  \leadsto_{\sigma} (p,r) \text{ and }
    \view_{\sigma}(q, r') \ne \view_{\sigma'}(q, r')$,
  which implies that $\view_{\sigma}(p,r) \ne \view_{\sigma'}(p,r)$.
  That is, every process that can distinguish $\sigma|_{r'}$ and $\sigma'|_{r'}$
  can also distinguish $\sigma$ and $\sigma'$,
  contradicting the existence of the edge $e_r=(\sigma,\sigma')$ in $I(\bD^r)$.
\end{proof}

We proceed with \cref{lem:protectedComponents}, which generalizes and formalizes
chains like \cref{eq:examplechain}, made up of connected subgraphs $\cS_1,\dots,\cS_i$
which are interconnected in a chain. It makes clever use of protected
edges in order to delay the separation of root-incompatible connected components
as much as possible, namely, by removing the interconnects between $S_j$ and $S_{j+1}$
in $\N_{i-j}$, i.e., from right $(i)$ to left $(1)$. 

\begin{lemma}
  Given a message adversary $\bD$ and $i$ connected
  subgraphs $\cS_1, \ldots, \cS_i$ of $I(\bD)$ 
  such that for every $1 \le j < i$, 
  the edges of $\bigcup_{j'=1}^j \cS_{j'}$ are protected by
  the communication graphs of $\bigcup_{j'=1}^{j+1} \cS_{j'}$, 
  and $\cS_j$ is connected to $\cS_{j+1}$ in $\N_{i-j}$, 
  it holds that $\cS_1$ is a connected subgraph of~$\N_i$.
  \label{lem:protectedComponents}
\end{lemma}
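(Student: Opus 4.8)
The plan is to prove the statement by induction on $i$, tracking which edges survive each refinement step of \cref{alg:Nerve}. The base case $i=1$ is immediate, since $\cS_1$ is by hypothesis a connected subgraph of $I(\bD) = \N_1$. For the inductive step, I would assume the claim holds for $i-1$ and prove it for $i$. The key observation is that the hypotheses for $i$ connected subgraphs $\cS_1,\dots,\cS_i$ restrict (by simply dropping $\cS_1$ and reindexing, or rather keeping $\cS_1,\dots,\cS_{i-1}$) to a configuration to which the inductive hypothesis can be applied — but one must be careful about which $\N_j$ indices appear. Concretely, I would first apply the inductive hypothesis to the subgraphs $\cS_1,\dots,\cS_{i-1}$ together with the interconnects, noting that the hypothesis ``$\cS_j$ connected to $\cS_{j+1}$ in $\N_{i-j}$'' for $j \le i-2$ gives what is needed one level up; this shows $\bigcup_{j'=1}^{i-1}\cS_{j'}$ (suitably interconnected) is connected in $\N_{i-1}$.

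Next, I would argue that all edges of $\bigcup_{j'=1}^{i-1}\cS_{j'}$, together with the interconnect edges between $\cS_{i-1}$ and $\cS_i$, survive the refinement from $\N_{i-1}$ to $\N_i$. By \cref{cor:nerveEdge}, an edge $e$ of $\N_{i-1}$ survives into $\N_i$ precisely when its connected component in $\N_{i-1}$ contains a graph $G$ with $\Root(G)\subseteq\ell(e)$. The hypothesis says the edges of $\bigcup_{j'=1}^{i-1}\cS_{j'}$ are protected by the graphs of $\bigcup_{j'=1}^{i}\cS_{j'}$ (using the condition with $j=i-1$), so for each such edge $e$ there is a protecting graph $G_e$ somewhere in $\bigcup_{j'=1}^{i}\cS_{j'}$. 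Since (by the step above applied at level $i-1$, plus the interconnect to $\cS_i$ which lives in $\N_{i-(i-1)}=\N_1=I(\bD)$ — here I need to be careful and instead use that $\cS_i$ itself is a connected subgraph of $\N_1$ and is joined to $\cS_{i-1}$ there) the entire union $\bigcup_{j'=1}^{i}\cS_{j'}$ lies in a single connected component of $\N_{i-1}$, the protecting graph $G_e$ is reachable from $e$ in $\N_{i-1}$, so $e \in E_i$. Hence all edges of $\bigcup_{j'=1}^{i-1}\cS_{j'}$, in particular all edges of $\cS_1$, persist in $\N_i$, so $\cS_1$ is a connected subgraph of $\N_i$.

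The main obstacle I anticipate is bookkeeping of the $\N$-indices in the induction: the hypothesis ties $\cS_j$--$\cS_{j+1}$ connectivity to $\N_{i-j}$, so the interconnect between $\cS_{i-1}$ and $\cS_i$ is only guaranteed in $\N_1$, which is the coarsest graph and may contain edges that later vanish. The resolution is that for the argument we only need the \emph{union} $\bigcup_{j'=1}^{i}\cS_{j'}$ to be connected in $\N_{i-1}$ so that protecting graphs from $\cS_i$ can vouch for edges in $\cS_1,\dots,\cS_{i-1}$; and for \emph{that} it suffices to track, round by round from $i$ down to $1$, that each interconnect $\cS_j$--$\cS_{j+1}$ (present in $\N_{i-j}$ by hypothesis) survives down to $\N_{i-1}$ because the relevant larger union it belongs to is protected — i.e., one should set up the induction to simultaneously prove that $\bigcup_{j'=1}^{k}\cS_{j'}$ is connected in $\N_{i-k+1}$ for each $k\le i$, which is the natural strengthening. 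With that strengthened statement the indices line up cleanly and each step is a direct application of \cref{cor:nerveEdge} and the protectedness hypothesis.
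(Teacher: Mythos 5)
Your final formulation is correct and essentially matches the paper's proof: the strengthened statement you arrive at (that $\bigcup_{j'=1}^{k}\cS_{j'}$ is connected in $\N_{i-k+1}$ for each $k\le i$) is, up to reindexing, exactly the induction the paper runs, using protectedness plus the fact that each interconnect guaranteed in $\N_{i-j}$ persists in every coarser $\N_{i'}$ by refinement. You correctly identify that the naive induction on $i$ alone is insufficient (it only yields connectivity of $\cS_1$, not of the full union needed to reach the protecting graphs), and your resolution is the paper's argument.
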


\begin{proof}
  We show that all edges of $\cS_1$ are in $\N_i$.
  In order to do so, we prove
  by induction on $i' = 1, \ldots, i$, that all edges of
  $\bigcup_{j'=1}^{i-i'+1} \cS_{j'}$ are in $\N_{i'}$.
  
  The base $i'=1$ follows directly from the code of \cref{alg:Nerve}:
  $\N_1 = I(\bD)$, and each graph $\cS_{j'}$ is a subgraph of $I(\bD)$,
  thus every edge of $\bigcup_{j'=1}^{i} \cS_{j'}$ is
  in $\N_1$.
  
  For the inductive step from $i'$ to $i'+1$,
  assume that every edge of $\bigcup_{j'=1}^{i-i'+1} \cS_{j'}$ is present in
  $\N_{i'}$.
  By assumption, 
  every edge $e$ of $\bigcup_{j'=1}^{i-i'} \cS_{j'}$ is protected by
  a communication graph $G$ of $\bigcup_{j'=1}^{i-i'+1} \cS_{j'}$, i.e.,
  by \cref{def:protected}, $\Root(G) \subseteq \ell(e)$.
  As we also assume that $\cS_j$ is connected to $\cS_{j+1}$ in $\N_{i-j}$ for
  $1 \le j < i$, we have that $\cS_{i-i'-j'}$ is connected to $\cS_{i-i'-j'+1}$ in
  $\N_{i'+j'}$ for $0 \le j' < i-i'$.
  Since $\N_{i'+j'}$ is a refinement of $\N_{i'}$,
  $\cS_{i-i'-j'}$ is connected to $\cS_{i-i'-j'+1}$ also in $\N_{i'}$.
  Hence $\bigcup_{j'=1}^{i-i'+1} \cS_{j'}$ is a connected
  subgraph of $\N_{i'}$, and thus $e$ is connected to $G$ in $\N_{i'}$.
%  Note that $\bC_1$ constitutes a connected component in $\N_i$, and thus
%  also in $\N_{m}$, as $\N_i$ is a refinement of $\N_{m}$.
%  Hence, $e$ is connected to $G$ in $\N_{m}$ because $\bC_1$
%  intersects every $\cS_k$,
%  and every edge of $\bigcup_{k=1}^{i-m+1} \cS_k$ is present in $\N_{m}$.
  Thus, in $\N_{i'}$, $e$ is in the same connected component as a graph
  $G$ with $\Root(G) \subseteq \ell(e)$ and,
  by Line~\ref{line:guardRM} of \cref{alg:Nerve}, we have $e \in \N_{i'+1}$.
\end{proof}

%The first immediate implication of \cref{def:protected} is presented in
%\cref{claim:closure} below: if all edges of a connected component of $I(\bD)$
%are protected by a communication graph of the same connected component,
%then the decision procedure may never remove any edges from it.
%
%\begin{claim}
%  \label{claim:closure}
%  Let $\bD$ be a set of allowed graphs, $r$ be a round, and $\cC$ be a connected component
%  in $I(\bD)$.
%  If every edge of $\cC$ is protected by a communication graph of $\cC$ then,
%  for all communication graphs $G, G' \in \cC$, we have that $G$ is connected
%  to $G'$ in $\N_{\TD}$.
%\end{claim}
%
%\begin{proof}
%  Suppose that the decision procedure removes an edge from $\cC$ during
%  some iteration of \cref{alg:Nerve}, let $e$ be the first edge it removes
%  from $\cC$, and let $i$ be the iteration during which it removes $e$.
%  Since $\cC$ is protected, there is a graph $G \in \cC$ such that
%  $\ell(e) \subseteq \Root(G)$.
%  Thus, on every path from $e$ to $G$ an edge must have be removed during an
%  iteration $<i$.
%  But then the decision procedure removes an edge from $\cC$ before $i$, a
%  contradiction.
%\end{proof}

We are now ready to prove the main technical result of this section.
For $r=(n-1)\cdot\TD$,
we show how the connectivity 
of two $r$-round communication patterns in $I(\bD^r)$, 
consisting only of communication graphs from certain sets $\bC_1$ and $\bC_2$, respectively,
is related to the connectivity of $\bC_1$ and $\bC_2$ 
in the refined indistinguishability graph $\N_{\TD}$, 
as computed by \cref{alg:Nerve}.

\begin{lemma}
  \label{lem:PatternToGraph}
  Given an oblivious message adversary $\bD$,
  let $\bC$ constitute a connected component of
  $\N_{\TD}$ and let $\bar{\bC} = \bD \setminus \bC$.
  For $r=(n-1)\cdot\TD$, there is no connection in $I(\bD^r)$ between any
  $\sigma_1 \in \bC^r$ and any
  $\sigma_2 \in \bar{\bC} \bD^{r-1}$.
Herein, $\sigma_2 \in \bar{\bC} \bD^{r-1}$ denotes the fact that 
$\sigma_2$ is composed of one graph of $\bar{\bC}$ and then $r-1$ graphs of $\bD$.
\end{lemma}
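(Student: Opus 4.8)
The plan is to prove, by induction on $k$, the more general statement $P(k)$ obtained from the lemma by replacing every occurrence of $\TD$ by $k\in\{1,\dots,\TD\}$ (so $\bC$ is a connected component of $\N_k$ and the pattern length is $r=(n-1)k$); the lemma is the instance $k=\TD$. The base case $k=1$ is immediate: if some path $\pi=(\tau_0,\dots,\tau_s)$ in $I(\bD^{n-1})$ connected $\sigma_1=\tau_0\in\bC^{n-1}$ to a pattern $\sigma_2=\tau_s$ with $\sigma_2(1)\notin\bC$, then for each edge of $\pi$ the first graphs are, by \cref{claim:ignoramus}, either equal or adjacent in $I(\bD^1)=\N_1$, so $\tau_0(1)$ and $\tau_s(1)$ lie in the same connected component of $\N_1$ — impossible, since $\bC$ is a component of $\N_1$.

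For the inductive step I assume $P(k-1)$ and a hypothetical path $\pi=(\tau_0,\dots,\tau_s)$ in $I(\bD^r)$, $r=(n-1)k$, from $\sigma_1=\tau_0\in\bC^r$ to $\sigma_2=\tau_s$ with $\sigma_2(1)\in\bar{\bC}$. Let $\bD'$ be the component of $\N_{k-1}$ containing $\bC$; if $\bD'=\bC$ the claim reduces to $P(k-1)$ by restricting $\pi$ to its $(n-1)(k-1)$-round prefix (via \cref{claim:ignoramus}), so assume $\bC\subsetneq\bD'$. The key preliminary step is to \emph{confine the early graphs} of every pattern on $\pi$ to $\bD'$: for each $t\le n$, deleting the first $t-1$ rounds of all patterns on $\pi$ — which is legitimate by $t-1$ applications of \cref{cor:removeRound}, exactly as in the fact that $\pi(r')$ is a path for every $r'$ — produces a path in $I(\bD^{\,r-t+1})$ whose patterns have $\tau_m(t)$ as their first graph; since $t\le n$ we have $r-t+1\ge(n-1)(k-1)$, so restricting this path to its $(n-1)(k-1)$-round prefix (which starts inside $\bD'^{(n-1)(k-1)}$ because $\sigma_1\in\bC^r\subseteq\bD'^{\,r}$) and invoking $P(k-1)$ for the component $\bD'$ forces $\tau_m(t)\in\bD'$ for all $m$. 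In particular all first graphs $\tau_m(1)$ lie in $\bD'$, so $\sigma_2(1)\in\bD'\setminus\bC$.

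Next I locate an ``exit edge'': walking along the first graphs $\tau_0(1),\dots,\tau_s(1)$, which stay in $\bD'$ and go from $\bC$ to $\bD'\setminus\bC$, pick the last index $m$ with $\tau_m(1)\in\bC$, so $\tau_{m+1}(1)\in\bD'\setminus\bC$ and $\bar e:=(\tau_m(1),\tau_{m+1}(1))$ is an edge of $\N_1$ (by \cref{claim:ignoramus}) with $\tau_m(1)\neq\tau_{m+1}(1)$, hence $\lambda:=\ell(\bar e)\neq\Pi$ and $|\lambda|\le n-1$. Since $\bC$ is a component of $\N_k$ and $\bar e$ crosses its boundary, $\bar e\notin\N_k$, so the decision procedure removes $\bar e$ at some iteration $i_0\le k$; by the refinement rule of \cref{alg:Nerve} the connected component of $\bar e$ in $\N_{i_0-1}$ then contains no graph with root component $\subseteq\lambda$. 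To contradict this, apply \cref{cor:infProp} to $(\tau_m,\tau_{m+1})\in I(\bD^r)$ with $r'=1$ and $e'=\bar e$: at most $|\lambda|-1\le n-2$ of the rounds $t\in\{2,\dots,r\}$ have $\Root(\tau_m(t))\not\subseteq\lambda$, so among the $n-1$ rounds $\{2,\dots,n\}$ at least one, say $t^*$, satisfies $\Root(\tau_m(t^*))\subseteq\lambda$; and by the confinement step $G:=\tau_m(t^*)\in\bD'$. Finally $\bD'$, being a component of $\N_{k-1}$, is already connected in $\N_{i_0-1}\supseteq\N_{k-1}$ (as $i_0\le k$), and it contains the endpoint $\tau_m(1)$ of $\bar e\in\N_{i_0-1}$; hence $G$ lies in the connected component of $\bar e$ in $\N_{i_0-1}$ while $\Root(G)\subseteq\lambda$ — the desired contradiction, which establishes $P(k)$ and hence the lemma.

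I expect the main obstacle to be precisely this confinement step together with the round budget it demands: one must guarantee that a graph with the right root property, produced by \cref{cor:infProp}, can be taken to lie inside the component $\bD'$, and this works only because $r=(n-1)k$ leaves exactly $n-1$ rounds at the front after reserving $(n-1)(k-1)$ rounds for the inductive hypothesis — just enough to outnumber the at most $n-2$ ``bad'' rounds and secure a usable round $t^*\le n$. A minor point of care is that \cref{cor:removeRound} must be applied to delete \emph{leading} rounds, which is fine since the same corollary already underlies the statement that $\pi(r')$ is a path for every $r'$.
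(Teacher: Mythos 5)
Your proof is correct, but it takes a genuinely different route from the paper's. The paper argues \emph{forward}: from the hypothetical connecting path $\tilde\pi$ in $I(\bD^r)$ it extracts, via \cref{cor:removeRound} and \cref{cor:infProp}, a chain of connected subgraphs $\pi_1,\dots,\pi_{\TD}$ of $I(\bD)$ --- each a union of round-projections $\tilde\pi(r')$, each new level protecting \emph{all} edges accumulated so far and costing at most $n-1$ fresh rounds --- and then invokes the separate \cref{lem:protectedComponents} to conclude that $\pi_1=\tilde\pi(1)$, which meets both $\bC$ and $\bar{\bC}$, survives as a connected subgraph of $\N_{\TD}$, the desired contradiction. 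You instead strengthen the statement to every refinement level $k$ and induct on $k$: the inductive hypothesis confines the first $n$ round-graphs of every pattern on the path to the parent component $\bD'$ of $\N_{k-1}$, after which a single application of \cref{cor:infProp} protects the one exit edge $\bar e$ crossing the boundary of $\bC$ inside $\bD'$, contradicting its removal by \cref{alg:Nerve}. Both arguments run on the same fuel --- \cref{cor:removeRound}, \cref{cor:infProp}, and a budget of $n-1$ rounds per refinement level, which is exactly why $r=(n-1)\TD$ suffices in either version --- but yours bypasses \cref{lem:protectedComponents} entirely and only ever needs to protect one edge per level rather than a growing union of subgraphs, which makes the bookkeeping lighter; the paper's formulation, in exchange, isolates the protected-chain mechanism as a standalone lemma that also formalizes the chain intuition of \cref{eq:examplechain}. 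The details you flag as delicate (deleting \emph{leading} rounds with \cref{cor:removeRound}, and the distinctness of the two endpoint graphs of $\bar e$ guaranteeing $|\ell(\bar e)|\le n-1$) are handled correctly.
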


\begin{proof}
	Assume for a contradiction that there exist $\sigma_1 \in \bC^r$ and 	$\sigma_2 \in \bar{\bC} \bD^{r-1}$ which are connected in $I(\bD^r)$.
	We show that $\bC$ is connected to some node of $\bar{\bC}$ in $\N_{\TD}$, 
	contradicting the fact that $\bC$ is a connected component of $\N_{\TD}$.
	We do so by proving that there are $\TD$ connected subgraphs $\pi_1, \ldots, \pi_{\TD}$ in
	$I(\bD)$, such that each of them intersects $\bC$,
	$\pi_1$ also intersects $\bar{\bC}$,
	and,
	for every $1 \le j < i=TD$, the edges of
	$\bigcup_{j'=1}^j \pi_{j'}$ are protected by the communication graphs of
	$\bigcup_{j'=1}^{j+1} \pi_{j'}$. Moreover, $\pi_j$ is connected to $\pi_{j+1}$ 
in $\N_{i-j}$:	We have that $\pi_j$ and $\pi_{j+1}$ both intersect $\bC$, 
	and since $\bC$ is a connected component in $\N_i$ and $\N_i$ is a refinement of $\N_{i-j}$, all nodes of $\bC$ are in the same connected component of $\N_{i-j}$. We can hence apply 
\cref{lem:protectedComponents}, which reveals that $\pi_1$ is a connected
subgraph of $\N_i$. As $\pi_1$ also intersects both $\bC$ and $\bar{\bC}$, however,
we have the required contradiction.
    
    Let $\tilde{\pi}$ be a path that connects $\sigma_1$ and $\sigma_2$ in $I(\bD^r)$.
    Recall that, for a round $r' \le r$, $\tilde{\pi}(r')$ denotes the round
    $r'$ communication graphs $\sigma(r')$ 
    for all communication patterns $\sigma$ of $\tilde{\pi}$.
    By a repeated application of \cref{cor:removeRound}, we get that $\tilde{\pi}(r')$ is a path 
    that connects $\sigma_1(r') \in \bC$ and $\sigma_2(r') \in \bD$ in $I(\bD)$
    where,
    in particular, $\tilde{\pi}(1)$ connects $\sigma_1(1) \in \bC$ and
    $\sigma_2(1) \in \bar{\bC}$.
    
    We now construct each connected subgraph $\pi_{j}$, $1 \le j \le i$, as a union of paths $\tilde{\pi}(r')$. 
    That is, for some set $R_{j}\subseteq\{1,\ldots,r\}$ of rounds,
    which we will define below,
    we set $\pi_{j} = \bigcup_{r' \in R_{j}} \tilde{\pi}(r')$.
    We denote the largest round of $R_{j}$ as $r^{\ast}_{j} = \max(R_{j})$.

    For $1 \le m < i$, we inductively construct $R_{m+1}$ from
    $R_m$, starting with $R_1 = \{ 1 \}$, i.e., setting $\pi_1 = \tilde{\pi}(1)$.
    We will assert that (1) $r^*_{m+1} \le r^*_{m} + n-1$ and
    (2) the edges of $\pi_{m} = \bigcup_{r' \in R_m} \tilde{\pi}(r')$
    are protected by the communication graphs of
    $\pi_{m+1} = \bigcup_{r' \in R_{m+1}} \tilde{\pi}(r')$.
    For $1 \le m \le \TD$, property (1) together with $r^*_1=1$ guarantees
    $r^*_{m} \le (n-1) (m-1)+1 \le (n-1) \cdot \TD = r$, thus $\tilde{\pi}(r')$ is well-defined for all $r' \in R_m$.
        
    Given $R_m$ for $1 \le m < i$, we construct $R_{m+1}$ as follows:
    By \cref{cor:infProp}, for every edge $e \in \pi_m$, there is a
    round $r_e \le r^*_m + n-1$ such that
    $\tilde{\pi}(r_e)$ contains a graph $G$ with
    $\Root(G) \subseteq \ell(e)$.
    Let $R_{m+1}$ be the set of all such rounds, i.e.,
    $R_{m+1} = \bigcup_{e \in E(\pi_m)} \tilde\pi(r_e)$.
    This ensures (1) by construction and also (2), because every edge $e$ of
    $\pi_{m}$ is protected by a communication graph $G$ of
    $\tilde{\pi}(r_e) \subseteq \pi_{m+1}$.
    Hence, the edges of $\pi_m$ are protected by the communication graphs of
    $\pi_{m+1}$ and so the edges of
    $\bigcup_{k=1}^m \pi_k$ are protected by the communication graphs of   
    $\bigcup_{k=1}^{m+1} \pi_k$.
\end{proof}

We are now ready to state the main theorem of this section, namely, an
upper bound on the decision time complexity of consensus.

\begin{theorem}
  Let $\bD$ be the set of allowed communication graphs of an oblivious message adversary.
  If the connected components of $\N_{\TD}(\bD)$ are root-compatible, then
  consensus is solvable by round $c (n-1) (\TD+1)$, where
  $c$ is the number of connected components in $\N_{\TD}$.
  \label{thm:poss}
\end{theorem}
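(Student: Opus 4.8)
The plan is to assemble the pieces that the section has been building toward: first use \cref{lem:PatternToGraph} to establish separation of long patterns, then a pigeonhole argument to extend this to broadcastability, and finally apply the decision function described in the prose.

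\textbf{Step 1: Separation at level $t=(n-1)\TD$.} Let $\bC_1,\dots,\bC_c$ be the connected components of $\N_{\TD}$. Each $\bC_k$ is root-compatible, so fix a common root node $u_k\in\bigcap_{G\in\bC_k}\Root(G)$. I would first observe that for $t=(n-1)\TD$, any two patterns $\sigma_1\in\bC_{k}^{t}$ and $\sigma_2\in\bC_{k'}^{t}$ with $k\ne k'$ are not connected in $I(\bD^{t})$: indeed $\sigma_2\in\bC_{k'}^t\subseteq \bar{\bC_k}\,\bD^{t-1}$ (its first graph lies outside $\bC_k$), so \cref{lem:PatternToGraph} applied with $\bC=\bC_k$ gives the claim. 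More generally, since every $\sigma\in\bD^t$ has \emph{some} first graph, \cref{lem:PatternToGraph} shows that any $\sigma_1\in\bC_k^t$ and any $\sigma_2\in\bD^t$ whose first graph is not in $\bC_k$ are disconnected in $I(\bD^t)$; consequently the connected component of $I(\bD^t)$ containing a pattern of $\bC_k^t$ contains only patterns starting with a graph of $\bC_k$.

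\textbf{Step 2: From separation to broadcastability via pigeonhole.} Set $r=c(n-1)(\TD+1)= c\,t + c(n-1)$; more simply I would take the run length to be $r=c\,t+\text{(slack)}$, but the cleanest route is: consider any $\sigma\in\bD^{r}$ with $r\geq c\cdot t$. Partition its $r$ rounds into $c$ consecutive blocks, each of length $\geq t$. By pigeonhole, at least one block uses graphs from a single component $\bC_k$ for all its $t$ rounds — wait, that is not immediate since graphs within a block may come from different components; instead the right pigeonhole is on \emph{which component each round's graph belongs to}: among the $c$ blocks and $c$ components, one block has all its rounds' graphs in one component. Hmm, that also is not forced. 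The correct argument, which I would carry out carefully, is: look at the sequence of components $k_1,k_2,\dots$ that the graphs of $\sigma$ belong to round by round; over $r\ge c\cdot t$ rounds one component-label must occur in a contiguous run of length $\ge t$ — this still needs care. The actual mechanism intended here is: if in $\sigma$ some component $\bC_k$ is used for $t$ consecutive rounds, say rounds $s+1,\dots,s+t$, then the suffix behaves like a pattern in $\bC_k^t$ preceded by arbitrary prefix; since within $\bC_k$ all graphs share the root node $u_k$, and $t=(n-1)\TD\ge n-1$, the process $u_k$ broadcasts during these $t$ rounds (a rooted graph repeated, or more precisely any $t\ge n-1$ rooted graphs with a common root node, makes that node a broadcaster — this is the standard "depth $n-1$" argument). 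So I would prove: (a) any $\sigma\in\bD^{r}$ with $r=c\cdot t$ contains $t$ consecutive rounds all using graphs of one component — by a pigeonhole on the $c$ blocks of length $t$: if block $j$ is not monochromatic it still contains graphs from $\ge 2$ components, but we only need that \emph{some} block's first round's component, call it $\bC_{k}$... I expect I will need to reformulate the bound as $r = (n-1)(c\TD + c - 1)+1$ or similar and argue that the indistinguishability class of $\sigma$ in $I(\bD^r)$, when projected round-by-round via \cref{cor:removeRound}, stays within one component of $\N_{\TD}$, forcing a long monochromatic stretch. \textbf{This pigeonhole/counting step is the main obstacle:} getting the exact constant $c(n-1)(\TD+1)$ to come out and rigorously linking "stays in one component of $\N_{\TD}$" (from Step 1 and \cref{cor:removeRound}) to "has $t$ consecutive same-component rounds, hence a broadcaster."

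\textbf{Step 3: Defining the decision function and verifying correctness.} Once every connected component $\cC$ of $I(\bD^{r})$ is known to be broadcastable, fix for each such $\cC$ a broadcaster $b(\cC)\in\Pi$. Define $\Delta_p(\view_\sigma(p,r))$ as follows: process $p$'s view at time $r$ determines the set of patterns in $\bD^r$ consistent with it; all such patterns are pairwise indistinguishable for $p$, hence lie in one connected component $\cC$ of $I(\bD^r)$; set $\Delta_p = x_{b(\cC)}$. Since $b(\cC)$ broadcasts in every pattern of $\cC$, the value $\view_\sigma(b(\cC),0)=\{(b(\cC),0,x_{b(\cC)})\}$ is contained in $\view_\sigma(p,r)$, so $p$ can compute it; this gives \emph{validity} (the output is some process's input) and \emph{termination} (every process decides by round $r$). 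For \emph{agreement}: if $p$ and $q$ both decide at round $r$ in runs with patterns $\sigma$ and $\sigma'$ respectively, and these runs are indistinguishable to someone, then $\sigma\sim\sigma'$, so $\sigma,\sigma'$ lie in the same component $\cC$, so $p$ and $q$ both output $x_{b(\cC)}$ — identical inputs give identical decisions. Two runs that are distinguishable for everyone are genuinely different runs, so no agreement constraint couples them. I would also remark that extending the constant decision past round $r$ is consistent with the monotonicity requirement on $\Delta$. Plugging $r=c(n-1)(\TD+1)$ completes the proof.
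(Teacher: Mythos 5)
Your Steps 1 and 3 match the paper's proof: the separation of components via \cref{lem:PatternToGraph} and the definition of the decision function via a broadcaster $b(\cC)$ per connected component of $I(\bD^t)$ are exactly what the paper does. However, the step you yourself flag as "the main obstacle" is a genuine gap, and the difficulty you run into stems from pursuing the wrong pigeonhole. You are trying to extract $t=(n-1)\TD$ \emph{consecutive} rounds whose graphs all lie in one component $\bC_k$ (a "monochromatic stretch"), and you correctly observe that no block decomposition forces this. The paper's argument does not need consecutive rounds at all. The pigeonhole is on occurrences counted with repetition: in any $\sigma\in\bD^{t}$ with $t=c(n-1)(\TD+1)$, some component $\bC$ of $\N_{\TD}$ supplies the graphs of a set $R$ of $(n-1)(\TD+1)$ rounds, scattered arbitrarily. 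Two facts then carry the proof. First, $n-1$ occurrences of graphs from $\bC$ \emph{anywhere} in $\sigma$ already make $\sigma$ broadcastable by every $p\in\bigcap_{G\in\bC}\Root(G)$: each such round grows the set of processes influenced by $p$ by at least one, since $p$ lies in the root component of every graph of $\bC$. Second, for any $\sigma'$ in the same component of $I(\bD^t)$ that is \emph{not} broadcastable this way, at most $n-2$ of the rounds in $R$ have $\sigma'(r_j)\in\bC$; discarding those rounds (and a few more to reach size $(n-1)\TD$) and applying \cref{cor:removeRound} round by round yields patterns $\rho\in\bC^{(n-1)\TD}$ and $\rho'\in\bar{\bC}\,\bD^{(n-1)\TD-1}$ that are still connected in $I(\bD^{(n-1)\TD})$, contradicting \cref{lem:PatternToGraph}. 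This is where the deletion corollary is actually used — to \emph{align} the scattered $\bC$-rounds of $\sigma$ against the non-$\bC$-rounds of $\sigma'$ — not to force a contiguous run.

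Without this switch from "consecutive stretch" to "count with repetition plus round deletion," the constant $c(n-1)(\TD+1)$ does not come out and the broadcastability of the components of $I(\bD^t)$ is not established, so the proof as proposed does not go through.
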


\begin{proof}
  We show that every connected component of the indistinguishability graph
  $I(\bD^t)$ is broadcastable for $t = c (n-1)(\TD+1)$.
  This implies the theorem, because there exists a mapping for every connected
  component $\cC$ of $I(\bD^t)$ to a process $p$, such that $p$ is a broadcaster in every
  communication pattern of $\cC$. More specifically, as $\cC$ is an indistinguishability 
component, there is, for every process $q$ and every $\sigma \in \bD^t$, a map 
  $\view_\sigma(q, t) \mapsto p$ such that $p$ is a broadcaster in every communication pattern 
of $\sigma$'s connected component in $I(\bD^t)$.
  In every run with a communication pattern from $\cC$, every process has thus
  already learned the input $x_p$ of $p$, which is a valid decision value.
  This decision procedure hence defines a correct consensus algorithm.
  
  It remains to show the broadcastability of the connected components of $I(\bD^t)$.
  Consider a run $\sigma\in\bD^t$, and all the communication patterns $\sigma(i)$, $i=1\ldots,c(n-1)(\TD+1)$ appearing in it.
  By the pigeon-hole principle, at least one connected component $\bC$ of $\N_{\TD}$ must supply $(n-1)(\TD+1)$ of these graphs, when counted with repetitions.
  That is, there is a set $R\subseteq\{1,\ldots,c(n-1)(\TD+1)\}$,
   with
  $|R|=(n-1)(\TD+1)$, such that every $r_i$ with $i \in R$ satisfies
  $\sigma(r_i) \in \bC$. Note that the occurrence of $n-1$ or more
graphs from $\bC$ in $\sigma$ already suffices to
ensure that it is broadcastable by every process $p\in \bigcap_{G \in \bC} \Root(G)$,
i.e., that every process $q\in \Pi$ has $(p,0,x_p) \in \view_{\sigma}(q, t)$.

  Consider another run $\sigma'\in\bD^t$ that is connected to $\sigma$ in $I(\bD^t)$,
  and the communication patterns $\sigma'(i)$ appearing in it.
  If $n-1$ or more of the latter satisfied $\sigma'(r_i) \in \bC$,
  $\sigma'$ would also be broadcastable by $\bigcap_{G \in \bC} \Root(G)$, so assume that this
  is not the case. There are hence at most $n-2$ indices $r_j \in R$ where $\sigma'(r_j) \in \bC$.
  Let $R' \subseteq R$ with $|R'| = (n-1) \cdot \TD$ be the set of indices obtained by
  discarding all these indices $r_j$ from $R$, in addition to discarding some additional 
  indices $\neq 1$ so as to match the desired size of $R'$.
  
  We now construct the $((n-1)\TD)$-round communication patterns
  $\rho, \rho'$ defined by $\rho(j) = \sigma(r_j)$, $\rho'(j) = \sigma'(r_j)$
  for each $j \in R'$. That is, starting out from $\sigma$ and $\sigma'$, which
  are connected in $I(\bD^t)$, we remove all communication rounds not in $R'$.
  By \cref{cor:removeRound}, $\rho$ and $\rho'$ are connected in $I(\bD^{(n-1)\TD})$.
  This, however, contradicts \cref{lem:PatternToGraph}, because
  $\rho \in \bC^{(n-1) \TD}$ and
  $\rho' \in  \bar{\bC}^{(n-1)  \TD} \subseteq \bar{\bC} \times \bD^{(n-1) \TD-1}$ by construction, where $\bC$ is a connected component
  in $\N_{\TD}$ and $\bar{\bC}$ is its complement.
\end{proof}

\section{Lower Bounds}\label{sec:lowerbound}

This section complements our positive results above by studying 
lower bounds. 
In the following, we first establish a relationship between the 
time complexity of the decision procedure and the termination time of
consensus. We will then derive a time complexity
lower bound for the decision procedure, and combine it with the
first result to establish a consensus termination time lower
bound. 

\subsection{Decision complexity and consensus termination time}

First, we present a relationship  (\cref{thm:imposs}) between the number
of iterations of \cref{alg:Nerve} and the time complexity of consensus.
%that relates the number of iterations of \cref{alg:Nerve} to the impossibility of consensus.
As before, let $\N_i=\N_i(\bD)$ be the refined indistinguishability graph $\N_i$ after $i$ iterations according to \cref{alg:Nerve},
with the set of allowed graphs $\bD$ sometimes omitted for brevity.
Our general strategy 
is to establish that the impossibility of consensus after $i$ rounds is equivalent to the existence of a set of
``broadcast-incompatible'' communication patterns of length $i$, which
are connected to each other in the indistinguishability graph $I(\bD^i)$.
We ensure broadcast-incompatibility by letting this set also contain
communication patterns $G^i$, i.e., 
$i$ repetitions of the same communication graph $G$, taken from a set of
root-incompatible graphs.
Due to the requirement that every decision must be on the input of some broadcaster
whose input value has reached everyone (recall \cref{claim:validity}), this suffices: in $G^i$, the only processes that have reached everyone are the
members of $\Root(G)$, the root component of $G$.
Thus, not all these communication patterns can have led to the same
decision value, which is a contradiction since all connected
round~$i$ communication patterns must have led to the same decision
value if consensus was solved after $i$ rounds.

The core of our proof is contained in \cref{lem:keepConnected}.
It shows that the connectivity of some communication graphs $A, B$ in
$\N_i(\bD)$ implies the connectivity of the communication patterns
$A^i, B^i$ in the indistinguishability graph $I(\bD^i)$. Informally speaking,
it uses an inductive construction for an arbitrary edge $(A, B)$ of
$\N_i$ to show how the corresponding connectivity between $A^i$ and $B^i$
can be preserved for $i$ rounds in $I(\bD^i)$. It crucially relies on the
fact that every $\N_i$ is a refinement of $\N_{i-1}$, with $\N_1$ being
a refinement of $I(\bD)$, which is due to the fact that \cref{alg:Nerve}
iteratively only removes selected edges via
Line~\ref{line:rm} but never adds any edges.

To show that the connectivity of $A^i$ and $B^i$ is preserved, we use the path
in $\N_i$ from $A$ to $\ell(e)$, respectively $B$ to $\ell(e)$, 
to extend the already constructed connected prefixes $A^{i-1}$ and $B^{i-1}$.
Note that this path also occurs in $\N_{i-1}$ due to \cref{cor:nerveEdge}.
To illustrate this, consider a (very simple) example, where we have that $A \sim_p B$ occurs in $\N_{2}$ and
furthermore $p = \Root(C)$ such that $C \sim_{p'} A$ as well as $C \sim_{p''} B$
occur in $\N_{1}$.
In this case, we have the following indistinguishability relation between communication patterns of length~$2$: $A \circ A \sim_{p'} A \circ C \sim_p B \circ C \sim_{p''} B \circ B$. This argument can be applied inductively to establish the indistinguishability relation for communication patterns $A^i$ and $B^i$.

\begin{lemma}
  Let $\cC_i$ be a connected component of $\N_i(\bD)$ and let $A, B$ be
  communication graphs in $\cC_i$.
  Then $A^i$ is connected to $B^i$ in $I(\bD^i)$.
  \label{lem:keepConnected}
\end{lemma}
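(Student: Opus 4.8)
The plan is to prove the statement by induction on $i$, constructing an explicit path in $I(\bD^i)$ between $A^i$ and $B^i$ out of a path between $A$ and $B$ in $\N_i$, together with ``detours'' to a protecting graph for each edge. For the base case $i=1$ there is nothing to do: $\N_1 = I(\bD)$, so a connected component of $\N_1$ is already a connected component of $I(\bD^1)$, and $A^1 = A$, $B^1 = B$ are connected by hypothesis. For the inductive step, assume the claim for $i-1$. It suffices to treat a single edge $e = (A,B)$ of $\N_i$, since connectivity then follows by concatenating the paths obtained for the edges along an $A$-to-$B$ path in $\cC_i$ (the intermediate graphs $X$ give intermediate patterns $X^i$, and $X^i$ is trivially connected to itself). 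So fix an edge $e = (A,B) \in E(\N_i)$.

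By \cref{cor:nerveEdge}, the edge $e = (A,B)$ already exists in $\N_{i-1}$, and there is a graph $G_e$ with $\Root(G_e) \subseteq \ell(e)$ lying in the same connected component of $\N_{i-1}$ as $A$ and $B$. Let $P_A = (A = X_0, X_1, \dots, X_k = G_e)$ and $P_B = (B = Y_0, Y_1, \dots, Y_m = G_e)$ be paths in $\N_{i-1}$ from $A$ to $G_e$ and from $B$ to $G_e$ respectively. By the inductive hypothesis applied within the connected component of $\N_{i-1}$ containing all of these graphs, for each consecutive pair $(X_{j}, X_{j+1})$ we get a path in $I(\bD^{i-1})$ from $X_j^{i-1}$ to $X_{j+1}^{i-1}$, and similarly for the $Y_j$'s; concatenating these yields a path $Q_A$ in $I(\bD^{i-1})$ from $A^{i-1}$ to $G_e^{i-1}$ and a path $Q_B$ from $B^{i-1}$ to $G_e^{i-1}$. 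Now I append one more round. First I append $G_e$ to every pattern along $Q_A$: since $\Root(G_e) \subseteq \ell(e) \subseteq \Pi$, by \cref{claim:ignoramus} each edge of $Q_A$ survives (its label only shrinks but stays nonempty), so $Q_A \circ G_e$ is a path in $I(\bD^i)$ from $A^{i-1} \circ G_e$ to $G_e^{i-1} \circ G_e = G_e^i$; likewise $Q_B \circ G_e$ runs from $B^{i-1} \circ G_e$ to $G_e^i$. It remains to connect $A^i = A^{i-1}\circ A$ to $A^{i-1}\circ G_e$, and $B^i = B^{i-1}\circ B$ to $B^{i-1}\circ G_e$. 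For the first: $\ell((A^{i-1}, A^{i-1})) = \Pi$ (a pattern is indistinguishable from itself for everyone), so by \cref{claim:ignoramus} appending \emph{any} graph $H$ to the self-edge keeps $A^{i-1}$ connected to itself; but more to the point, the edge $(A,G_e)$ in $\N_{i-1}$, hence in $\N_1 = I(\bD)$, says $A \sim_p G_e$ for some $p$, and this $p$ has the same in-neighbourhood in $A$ and in $G_e$, which is enough to conclude $A^{i-1}\circ A \sim_p A^{i-1}\circ G_e$: the views of $p$'s in-neighbours at time $i-1$ agree because the pattern prefix is literally the same ($A^{i-1}$). So $A^i$ is connected to $A^{i-1}\circ G_e$, and symmetrically $B^i$ is connected to $B^{i-1}\circ G_e$. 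Stitching the four pieces $A^i \text{---} A^{i-1}\circ G_e \;\leadsto\; G_e^i \;\leadsto\; B^{i-1}\circ G_e \text{---} B^i$ gives a path from $A^i$ to $B^i$ in $I(\bD^i)$, as required.

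The delicate point, and the one I would write out most carefully, is exactly which indistinguishability graph the inductive hypothesis gets applied in: the paths $P_A, P_B$ live in $\N_{i-1}$, and $G_e, A, B$ all lie in one connected component $\cC_{i-1}$ of $\N_{i-1}$ (this is what \cref{cor:nerveEdge} buys us), so the hypothesis does apply edgewise along those paths. A secondary subtlety is the bookkeeping that appending a fixed graph to a whole path preserves it as a path — this is a repeated application of \cref{claim:ignoramus}, using that every edge label is a nonempty subset of $\Pi$ and appending a graph can only intersect the label with $\In$-equality constraints, but the self-loop/``same prefix'' observation guarantees the relevant process still distinguishes nothing new. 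Everything else is routine path concatenation.
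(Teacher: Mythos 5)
There is a genuine gap in the inductive step, at the point where you append $G_e$ to the whole path $Q_A$. \cref{claim:ignoramus} preserves an edge $f=(\tau,\tau')$ of $I(\bD^{i-1})$ under extension by a graph $G$ only when $\Root(G)\subseteq\ell(f)$; the fact that $\Root(G_e)\subseteq\ell(e)$ for the \emph{original} edge $e=(A,B)$ says nothing about the labels of the edges of $Q_A$, which are produced by the recursive calls along $P_A$ and can be arbitrary nonempty subsets of $\Pi$ disjoint from $\Root(G_e)$. Your parenthetical ``its label only shrinks but stays nonempty'' is false in general: appending a graph can empty a label entirely (this is precisely the mechanism by which edges die during the refinement), so $Q_A\circ G_e$ need not be a path in $I(\bD^i)$. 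A secondary slip: you first (correctly) take a \emph{path} $P_A$ from $A$ to $G_e$ in $\N_{i-1}$, since \cref{cor:nerveEdge} only places $G_e$ in the same connected component, but later you invoke ``the edge $(A,G_e)$ in $\N_{i-1}$'' to stitch $A^i$ to $A^{i-1}\circ G_e$; that edge need not exist. (This second point is repairable by walking $A^{i-1}\circ X_0,\dots,A^{i-1}\circ X_k$ along $P_A$ with the common-prefix argument; the first point is not repairable without restructuring.)

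The paper avoids this trap by inducting on the round number $k$ rather than on $i$, and by carrying a stronger invariant: at stage $k$ the constructed path $\pi_k$ has, for \emph{every} edge $e$, a protecting graph $G_e$ lying in $\cC_{i-k}$, and all round-$k$ graphs of patterns on $\pi_k$ lie in $\cC_{i-k}$. The extension of a single edge $(\sigma,\sigma')$ then uses detours $\sigma\circ\Gamma_1,\dots,\sigma\circ\Gamma_m$ in which the \emph{prefix is held fixed} (so indistinguishability needs only $\Gamma_j\sim\Gamma_{j+1}$ in $I(\bD)$), and the unique edge where the prefixes differ is $(\sigma\circ G_e,\sigma'\circ G_e)$, which is exactly the one edge to which \cref{claim:ignoramus} legitimately applies. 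If you want to keep your top-level induction on $i$, you would have to build this protection bookkeeping into the induction hypothesis, at which point you have essentially reconstructed the paper's argument.
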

\begin{proof}
  The lemma holds immediately for $i=1$: As a one-round communication pattern consists of only a single communication graph, $A^1 = A$ and
  $B^1 = B$ are both in the connected component $\cC_1$.

  Thus, we henceforth assume that $i > 1$, and prove the
  following claim by induction on $k$, for $k = 1, \ldots, i$:
  For each edge $(A, B) \in \cC_i$ 
  there is a path $\pi_k$ in $I(\bD^k)$ connecting $A^k$ to $B^k$.
  In addition, for $k < i$, the connected component $\cC_{i-k}$ of
  $A$ and $B$ in $\N_{i-k}$ is such that, for every edge
  $e = (\sigma, \sigma') \in \pi_k$, both the round $k$ communication graphs
  $\sigma(k), \sigma'(k) \in \cC_{i-k}$
  %\todo{rewrite to "the prefixes $\sigma(k), \sigma'(k) \in \cC_{i-k}$"? Also, why this is not $\sigma(i-k), \sigma'(i-k)$?\\
   % KW: It's not the prefixes, its the round $k$ graphs.}
  and there is a graph $G_e \in \cC_{i-k}$ such that $\Root(G_e) \subseteq \ell(e)$.
  
  The base, $k=1$, follows because $e = (A, B) \in \cC_i$ implies that
  $(A^1, B^1) \in I(\bD^1)$,
  and by \cref{cor:nerveEdge} there is $G_e \in \cC_{i-1}$ such that
  $\Root(G_e) \subseteq \ell(e)$.
  
  For the step from $k-1$ to $k$, $k>1$, there exists a path $\pi_{k-1} \in I(\bD^{k-1})$ that connects $A^{k-1}$ to $B^{k-1}$. Let
  $e = (\sigma, \sigma') \in \pi_{k-1}$ be an arbitrary edge in $\pi_{k-1}$.
  By the induction hypothesis,  $\sigma(k-1)$, $\sigma'(k-1)\in  \cC_{i-k+1}$
  and there is a graph $G_e \in \cC_{i-k+1}$ with
  $\Root(G_e) \subseteq \ell(e)$. Consequently, there exist paths $\tilde{\pi}_1  = (\Gamma_1 , \Gamma_2, \ldots, \Gamma_m )$ and $\tilde{\pi}_2 = (\Lambda_1 , \Lambda_2, \ldots, \Lambda_{m'})$ in  $\cC_{i-k+1}$ that connect $\sigma(k-1)$ to $G_e$ and $G_e$ to $\sigma'(k-1)$, respectively.
  %Paths $\tilde{\pi_1}$ and $\tilde{\pi_2}$, and graph $G_e$ enable us to extend $\sigma $ and $\sigma'$ by repeating the graphs at round $k-1$ while at the same time preserving the connectivity.
  
  Consider $(\Gamma_j,  \Gamma_{j+1}) \in \tilde{\pi}_1 \subseteq \cC_{i-k+1}$. From \cref{cor:nerveEdge}, we know that $(\Gamma_j,\Gamma_{j+1}) \in I(\bD^1)$, which implies $\sigma \circ \Gamma_j \sim  \sigma \circ \Gamma_{j+1}$. This enables us to prefix $\sigma$ to each communication graph of $\tilde{\pi}_1$, which makes $\sigma \circ \tilde{\pi}_1 = (\sigma \circ \Gamma_1, \sigma \circ \Gamma_2,\ldots, \sigma \circ \Gamma_m)$ a path in $I(\bD^k)$. Following a symmetrical argument, $\sigma' \circ \tilde{\pi}_2 = (\sigma' \circ \Lambda_1, \sigma' \circ \Lambda_2,\ldots, \sigma' \circ \Lambda_{m'})$ is also a path in $I(\bD^k)$. 
  
Moreover, since $\Root(G_e) \subseteq \ell(e)$, it follows from \cref{claim:ignoramus} that $e' = (\sigma \circ G_e, \sigma' \circ G_e) \in I(\bD^k)$. Therefore, $\tilde{\pi}_e = (\sigma \circ \tilde{\pi}_1 , e' , \sigma' \circ \tilde{\pi}_2)$ is a path from $\sigma \circ \sigma(k-1)$ to $\sigma' \circ \sigma'(k-1)$ in $I(\bD^k)$. If we substitute each edge $e \in \pi_{k-1}$ by $\tilde{\pi}_e$, we thus obtain a path $\pi_k$ that connects $A^k$ to $B^k$ in $I(\bD ^k)$.
  
  Now, consider any edge $e' \in \pi_k$. By construction, $e' = (\sigma \circ \Gamma_j, \sigma \circ \Gamma_{j+1})$, or $e' = (\sigma' \circ \Lambda_j, \sigma' \circ \Lambda_{j+1})$ or $e' = (\sigma \circ G_e, \sigma' \circ G_e)$. If $e' = (\sigma \circ \Gamma_j, \sigma \circ \Gamma_{j+1})$, then the round~$k$ communication graphs are $\Gamma_j$ and $\Gamma_{j+1}$. Since $\tilde{\pi}_1 \in \cC_{i-k+1}$, it follows from \cref{cor:nerveEdge} that $(\Gamma_j, \Gamma_{j+1}) \in \cC_{i-k}$, and there exists a communication graph $G_{e'} \in \cC_{i-k}$ with $\Root(G_{e'}) \subseteq \ell((\Gamma_j, \Gamma_{j+1})) = \ell(e')$. A symmetrical argument holds for the case where $e' = (\sigma' \circ \Lambda_j, \sigma' \circ \Lambda_{j+1})$. Finally, if $e' = (\sigma \circ G_e , \sigma' \circ G_e)$, then the round~$k$ communication graphs are both $G_e$, which is in $\cC_{i-k+1}$ by the induction hypothesis. \cref{cor:nerveEdge} guarantees $G_e \in \cC_{i-k}$, and since $\Root(G_e) \subseteq \ell(\sigma,\sigma')$, it follows that $\Root(G_e) \subseteq \ell(\sigma \circ G_e, \sigma' \circ G_e)$. This shows that $G_e$ is a suitable choice for $G_{e'}$, which completes the induction step. 
\end{proof}

\begin{theorem}
  If $\N_i(\bD)$ contains a connected component $\cC_i$ that is not
  root-compatible, then not all processes in all runs of a correct consensus
  algorithm are able to decide after $i$ rounds under the oblivious message
  adversary represented by $\bD$.
  \label{thm:imposs}
\end{theorem}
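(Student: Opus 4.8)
The plan is to argue by contradiction. Suppose there is a correct consensus algorithm in which, under the adversary $\bD$, every process decides by the end of round $i$ in every run. Since $\cC_i$ is not root-compatible, $\bigcap_{G \in \cC_i}\Root(G) = \emptyset$. Fix any graph $A \in \cC_i$. By \cref{lem:keepConnected}, $A^i$ is connected in $I(\bD^i)$ to $G^i$ for every $G \in \cC_i$, so the communication patterns $\{G^i : G \in \cC_i\}$ all lie in a single connected component $\mathcal{E}$ of $I(\bD^i)$. This is the part where all the real work has already been done for us, by \cref{lem:keepConnected}.

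Next I would show that, for any fixed input assignment, all runs whose $i$-round prefix lies in $\mathcal{E}$ produce the \emph{same} decision value. Indeed, because every process decides by round $i$ (and decisions are irrevocable), the value decided by a process $p$ after an $i$-round prefix $\tau \in \bD^i$ is the non-$\bot$ value $\Delta_p(\view_\tau(p,i))$, which depends only on $\tau$. If $\tau \sim_p \tau'$ for two patterns $\tau, \tau' \in \bD^i$, then $\view_\tau(p,i) = \view_{\tau'}(p,i)$, so $p$ decides the same value $v$ after both prefixes; by the agreement property, in the run extending $\tau'$ every process decides $v$. Propagating this equality along the edges of the connected component $\mathcal{E}$ shows that there is a single value $v$ decided in every run whose $i$-round prefix belongs to $\mathcal{E}$.

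Finally I would play $v$ against \cref{claim:validity} and non-root-compatibility. For each $G \in \cC_i$, consider the run with the infinite pattern $G^\omega$: its $i$-round prefix is $G^i \in \mathcal{E}$, so all processes have decided $v$ by round $i$, hence they decide $v$ in that run. By \cref{claim:validity}, $v$ must be the input of a broadcaster in $G^\omega$; since $G$ is rooted, the broadcasters in $G^\omega$ are exactly the members of $\Root(G)$, so $v = x_p$ for some $p \in \Root(G)$. Thus $v$ is simultaneously an input value occurring in $\Root(G)$ for \emph{every} $G \in \cC_i$. Choosing the input assignment so that this is impossible --- e.g.\ giving pairwise distinct inputs, so that a single common value $v$ would force one common process lying in all the root components, contradicting $\bigcap_{G\in\cC_i}\Root(G) = \emptyset$ --- yields the desired contradiction.

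Given \cref{lem:keepConnected}, the remaining steps are comparatively routine, and the main obstacle is really the second one: carefully combining local view-indistinguishability in $I(\bD^i)$ with the agreement property to conclude that an entire connected component of $I(\bD^i)$ is ``univalent'', and then matching that single value against the validity constraints imposed by all the $G^\omega$-continuations with $G \in \cC_i$. The one delicate point to nail down is the choice of input assignment, which must exploit the precise meaning of a root-incompatible \emph{set} of graphs.
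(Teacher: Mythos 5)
Your proposal is correct and follows essentially the same route as the paper: invoke \cref{lem:keepConnected} to place all the patterns $G^i$ for $G \in \cC_i$ in one connected component of $I(\bD^i)$, propagate a single decision value along the indistinguishability edges using agreement, and then use \cref{claim:validity} together with the emptiness of $\bigcap_{G\in\cC_i}\Root(G)$ to derive a contradiction. Your explicit choice of pairwise distinct inputs is in fact slightly more careful than the paper's phrasing, which picks a single witness $H$ with $p \notin \Root(H)$ but leaves the input assignment implicit.
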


\begin{proof}

  For the purpose of deriving a contradiction, suppose that the theorem does not
  hold.
  Let $\bS$ be a set of graphs from $\cC_i$ that is not root-compatible.
  By \cref{claim:validity}, for each $G \in \bS$, the decision value in a run
  with communication pattern $G^i$ that consists of $i$ repetitions of $G$
  must be a value $v = x_p$
  for some $p \in \Root(G)$. Since $\bS$ is root incompatible, there exists some $H \in \bS$ such that $x_p$ is not a root value of $H$.
  
  It follows from \cref{lem:keepConnected} that $G^i$ is connected to $H^i$ in $I (\bD ^i)$. Therefore, there is a sequence of runs $(\sigma_1 = G^i, \sigma_2, \ldots, \sigma_m = H^i)$ such that $\sigma_k$ is indistinguishable from $\sigma_{k+1}$. Since all processes decided $v = x_p$ in $G^i = \sigma_1$, by the validity condition of consensus, $\sigma_2$ and inductively all processes in the sequence including $H^i$ should also decide $v= x_p$. Thus, \cref{claim:validity} yields the contradiction that $H^i$ decided a non-broadcasted value.
  %Since all processes have terminated in all runs after $i$ rounds and
  %$\bigcap_{G \in \bS} \Root(G) = \emptyset$ it suffices to show that for every
  %$G, H \in \bS$, the same value is decided in $G^i$ and $H^i$, as this
  %contradicts agreement and thus the correctness of our supposed algorithm.
%  
  %For this purpose, we show that, for all $G, H \in \bS$, $G^i$ is connected to $H^i$ in $I(\bD^i)$,
  %that is, $G^i$ and $H^i$ are transitively indistinguishable to each other.
  %This is enough, as we have assumed that all processes have
  %already decided in all runs by round $i$ and thus they must have decided the
  %same value in the runs based on the patterns that connect $G^i$ and $H^i$.
  %  Thus, it suffices to show that, for any two communication graphs
  %  $A, B$ of a connected component $\cC_i$ of $\N_i(\bD)$,
  %  we have that $A^i$ is connected to $B^i$ in $I(\bD^i)$, which we do in the subsequent \cref{lem:keepConnected}.
  %Thus, the claim follows from \cref{lem:keepConnected}.
\end{proof}

We conclude by explaining why \cref{thm:imposs} refines the lower bound from~\cite[Theorem 4.10]{CGP15}, which stated
that consensus is impossible if some beta class is not root-compatible, by making the decision time explicit. In fact,
in our terminology, the beta classes are the connected components of
$\N_{\TD}$, where $\TD$ is the smallest round such that
$\N_{\TD} = \N_{\TD-1}$.
Thus, the existence of a root-incompatible beta class is equivalent to
$\N_{\TD}$ containing a root-incompatible connected component.
Note that, since $\N_{\TD} = \N_{\TD-1}$, even if we remove the
termination condition from Line~\ref{line:until} of \cref{alg:Nerve}, for all $\TD' \ge \TD - 1$, we still have that
$\N_{\TD'} = \N_{\TD}$, because, according to \cref{alg:Nerve}, if the set of edges remains the same in an iteration of
$\TD$, then it will remain the same for all future iterations as well.
Thus we can apply \cref{thm:imposs} to show that, in this case, every consensus algorithm has, for every round, a run where some process has not yet decided.
As for an oblivious message adversary with a set of allowed graphs $\bD$, it holds that every infinite communication pattern $\sigma$ with $\sigma|_r \in \bD^r$ for every round $r$ satisfies $\sigma \in \bD^\omega$ (i.e., oblivious message adversaries are limit-closed, see~\cite{WSM19:OPODIS} for details), this implies that there is an infinite run where consensus is not achieved, that is, consensus is indeed impossible.

\subsection{Exponential iteration complexity of the decision procedure}
\label{sec:decProcLB}

As we have seen above, consensus termination time is
related to the iterations of the decision procedure. Informally,
this is due to the fact that the information encoded in the
sequence $\N_1,\dots,\N_{i}$ can be seen as a compact summary 
of the evolution of the indistinguishability relation of the
corresponding communication pattern prefixes.
Thus, a lower bound on the complexity of the decision procedure
immediately gives us a lower bound for the round complexity of 
any consensus algorithm.

In this section, we will show that the decision procedure may take an exponential number of iterations, 
in terms of $n$, until it terminates.
This implies that there are oblivious message adversaries under which consensus is achievable, but reaching it takes exponential time.
As already sketched at the end of \cref{sec:decProc}, we will show this by constructing a specific instance of such a
message adversary, with a set of allowed graphs
$\bD = \{ G_1, \ldots, G_N \}$ of size $N =1.3^n$ (rounded down if necessary), 
whose indistinguishability graph $I(\bD)$ contains the following connected component:
\begin{equation}
G_1 \sim_{R_3} G_2 \sim_{R_4} \ldots \sim_{R_{N+1}} G_N \label{eq:indist}
\end{equation}
Herein, $R_i = \Root(G_i)$ for $1 \le i \le N$, 
and $R_{N+1} \ne \Root(G)$ for all $G \in \bD$. Therefore, $I(\bD)$ contains a path of length $N-1$. Since all edges except the rightmost one are protected,  \cref{alg:Nerve} only removes one edge per iteration, from right to left. More precisely, it holds that $G_1 \sim_{R_3} \ldots \sim_{R_{N-i+1}} G_{N-i} \in \N_i$. Consequently,
$N$ iterations are needed until all edges have disappeared, which establishes our claim.

%\subsubsection{Defining the adversary}
\noindent \textbf{Informal overview of the definition of $\bD$.}
%Since this section is combinatorial in nature, we slightly abuse notation, and identify a process $p_\ell$ with its index $\ell$.
First, we choose a sequence of sets $\{R_1, \ldots, R_N\}$ that will play the role of root components of $\bD$. We will choose those from the first half $\{p_1, \ldots, p_{n/2}\}$ of the processes only. Each $R_i$ is
chosen to be unique, of the same size $n/12$, and $R_i$, $R_{i+1}$ and $R_{i+2}$ must be be mutually disjoint. Note that we need $N$, i.e., exponentially many such $R_i$.

The first step in the definition of the graph $G_i$ is to make $R_i$ its root component, which is done by fully connecting its members to form a clique and ensuring a path to every other process. However, when doing so, we also need to guarantee that $G_i \sim_{R_{i+2}} G_{i+1}$ are the only indistinguishability relations in $I(\bD)$. We secure this by making sure that every process except for the ones in $R_{i+1}$  and $R_{i+2}$ can distinguish $G_i$ from any other graph $G_j$, $j\neq i$. This is accomplished by adding an outgoing edge from every member of $R_i$ to every process in $\Pi \setminus (R_{i+1} \cup R_{i+2})$, and no other outgoing edge from members of $\{p_1,\dots, p_{n/2}\}$. Since $R_i$ is unique, any process in $\Pi \setminus (R_{i+1} \cup R_{i+2})$ will know if graph $G_i$ is being played: This is immediately obvious for every
process $p$ in the second half $B= =\{p_{n/2+1}, \ldots, p_N\}$, as 
$\In_{G_i}(p) \cap \{p_1, \ldots, p_{n/2}\} = R_i$. For a process $p$
in the ``leftover set'' $L_i=\Pi\setminus(B\cup R_i\cup R_{i+1}\cup R_{i+2}) \subseteq \{p_1, \ldots, p_{n/2}\}$, we have $\In_{G_i}(p) \cap \{p_1, \ldots, p_{n/2}\} = R_i \cup \{p\}$. Since $R_i \cup \{p\}$ is larger than the size of the root components, $p$ knows that it is not part
of the root component, and can hence also uniquely determine $R_i$ and hence the graph $G_i$ being played. \cref{fig: lb-graph} illustrates
this construction.

\begin{wrapfigure}[10]{R}{0.5\textwidth}
	\vspace{-3.5ex}
	\centering
	\includegraphics[scale=0.6]{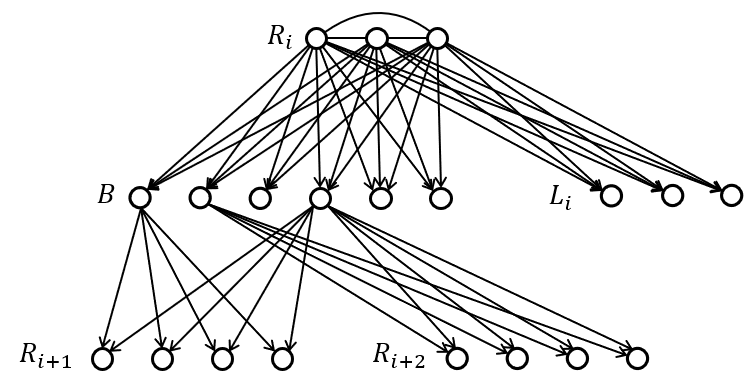}
	\caption{A sketch of the lower bound graph $G_i$}
	\label{fig: lb-graph}
\end{wrapfigure}

However, we must also make sure that all the members of $R_{i+1}$ (resp.\ $R_{i+2}$) consider only $G_i$ and $G_{i-1}$ (resp.\ $G_i$ and $G_{i+1}$) as possibilities for the actually played graph. This means that the in-neighborhood of any process in $R_{i+1}$ (resp. $R_{i+2}$) must be the same in $G_i$ and $G_{i-1}$ (resp.\ $G_i$ and $G_{i+1}$). So far, the processes in $R_{i+1}$ or $R_{i+2}$ do not receive any message from $\{p_1, \ldots, p_n\}$, i.e., the only know that they are either in $R_{i+1}$ or in $R_{i+2}$. To tell them apart, we will connect some processes in $B=\{p_{n/2+1}, \ldots, p_N\}$ to the members of $R_{i+1} \cup R_{i+2}$, in a way that encodes $i+1$ (for the members of $R_{i+1}$) or $i+2$ (for the members of $R_{i+2}$). A process in $R_{i+1} \cup R_{i+2}$ can hence tell from its in-neighborhood whether it belongs to $R_{i+1}$ or $R_{i+2}$. More specifically, abbreviating $B[i]=\{b\in B\mid i_{b-(n/2+1)}=1\}$, where $i_{\ell}$ is the $\ell$\textsuperscript{th} bit in the binary expansion of $i$, we just make sure that $\In_{G_i}(p) = B[i+1]$ for every $p\in R_{i+1}$ and $\In_{G_i}(p) = B[i+2]$ for every $p\in R_{i+2}$. This construction satisfies our indistinguishability requirements: Each process in $R_{i+1}$ (resp.\ $R_{i+2}$) can tell where it belongs to, but do not know whether $G_i$ or $G_{i-1}$ (resp. $G_i$ or $G_{i+1})$ is played.

\noindent \textbf{Formal definition of the root components $R_i$.}
We define the sets $R_i$ by splitting $\{p_1,\ldots, p_{n/2}\}$ into $\{p_1, \ldots, p_{n/4}\}$ and $\{p_{n/4+1},\ldots,p_{n/2}\}$, and construct the sequence $R_1,R_2, \dots$ of root components from partitions of these ranges alternatingly:
Consider all the partitions of $\{p_1, \ldots, p_{n/4}\}$ into three sets of size $n/12$ each. 
Partition number $\ell+1$ constitutes the root components $R_{6\ell+1},R_{6\ell+2},R_{6\ell+3}$.
Similarly, consider consider all the partitions of $\{p_{n/4+1},\ldots,p_{n/2}\}$ into three sets of size $n/12$ each. Set partition $\ell+1$ constitutes the root components $R_{6\ell+4},R_{6\ell+5},R_{6\ell+6}$.

The sequence clearly satisfies, by construction, the following properties:
\begin{enumerate}
	\item
	$|R_i|=n/12$, since we are considering equal-sized partitions of $n/4$ processes into 3 disjoint sets.
	\item
	$R_i\neq R_j$ for $i \neq j$, since all sets of the partitions are unique.
	\item 
	$R_i,R_{i+1},R_{i+2}$ are pairwise disjoint, since they are either members of the same partition and thus disjoint, or one belongs to segment $\{p_1, \ldots, p_{n/4}\}$ and another to segment $\{p_{n/4+1}, \ldots, p_{n/2} \}$.
\end{enumerate}
% the range and size are immediate; the fact that the sets are distinct comes from the fact that we took different partitions; and the disjointeness of consecutive sets comes from the disjointness of sets in a partition, and from the disjointness of the ranges being partitioned.

The length $N$ of the sequence is dominated asymptotically by the number of partitions of $\{p_1, \ldots, p_{n/4}\}$ into three equisized sets,
which is $\frac{1}{6}\binom{n/4}{n/12}\binom{n/6}{n/12}$. 
The definition of the binomial coefficients, along with simple bounds on the factorial function, give
\begin{equation}
\frac{1}{6}\binom{\frac{n}{4}}{\frac{n}{12}}\binom{\frac{n}{6}}{\frac{n}{12}}
= \frac{\left(\frac{n}{4}\right)!}{6\left(\left(\frac{n}{12}\right)!\right)^3}
\geq c\frac{3^{n/4}}{n}>1.3^n \label{eq:asymptotics}
\end{equation}
where $c$ is a constant and $n$ is sufficiently large. It follows that $N$ is exponential with respect to $n$.

%\textbf{Indistinguishability source.} Since we require that $G_i\sim_{R_{i+2}} G_{i+1}$, we use the other half of the processes, $B = \{p_{n/2}+1, \ldots, p_n\}$ for constructing the source of indistinguishability. Set $B$ works as a binary encoder of $i+1$ and $i+2$.
%In order to define the oblivious adversary, i.e., the graph sequence $G_1,\ldots,G_N$ with $N = 1.3^n$, we start from a sequence $R_1,\ldots,R_{N+2}$
%of sets, satisfying:
%\begin{enumerate}
%	\item 
%	$R_i\subseteq[n/2]$;
%	\item
%	$|R_i|=n/12$;
%	\item
%	$R_i\neq R_j$;
%	\item 
%	$R_i,R_{i+1},R_{i+2}$ are pairwise disjoint;
%\end{enumerate}
%where $i$ and $j$ are taken from the appropriate ranges: 
%$1\leq i\leq N+2$ for 1 and 2, 
%$1\leq i<j\leq N+2$ for 3, and
%$1\leq i\leq N$ for 4.
%The exact construction of the sets is deferred to the end of this section.

\textbf{Formal definition of $G_i$.} We are now ready to define the graphs $G_i$, recall also \cref{fig: lb-graph}. 
Let $B=\{p_{n/2}+1,\ldots,p_n\}$. 
For each $1\leq i\leq N$, the graph $G_i$ is composed of disjoint $5$ node sets: $B,R_i,R_{i+1},R_{i+2}$, where $R_i,R_{i+1},R_{i+2} \subseteq 
\{p_{1},\ldots,p_{n/2}\}$, $B=\{p_{n/2}+1,\ldots,p_n\}$, and
$L_i=\Pi\setminus(B\cup R_i\cup R_{i+1}\cup R_{i+2})$.

Connect every two nodes in $R_i$ by bi-directional edges, forming a clique.
From each node in $R_i$, add a directed edge to each node in $B\cup L_i$.
Finally, for an index $i$, let $B[i]=\{b\in B\mid i_{b-(n/2+1)}=1\}$, where $i_{\ell}$ is the $\ell$\textsuperscript{th} bit in the binary expansion of $i$.
Add an edge from each node of $B[i]$ to each node of $R_{i+1}$, and similarly, from each node of $B[i+1]$ to each node of $R_{i+2}$.

%\subsubsection{Properties of the adversary}
%\noindent \textbf{Properties of the adversary.}
%First, observe that $N=1.3^n$ implies $\log_2(N)<n/2$, so all the sets $B_i$ for $1\leq i\leq N$ are different.

\medskip

We are now ready to show that the so-constructed graphs form an indistinguishability chain according to \cref{eq:indist}.

\begin{claim}
	\label{claim: Bi is good}
	For $1\leq i\leq N$, we have $B[i]\neq \emptyset$,
	and for $1\leq i<j\leq N$, we have $B[i]\neq B[j]$.
\end{claim}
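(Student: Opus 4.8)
The plan is to recognize that, after the obvious relabelling of $B$ by bit positions, $B[i]$ \emph{is} the binary encoding of $i$. Write $m = n/2 = |B|$ and identify the element $p_{n/2+1+\ell}\in B$ with the bit position $\ell\in\{0,1,\dots,m-1\}$; then, by definition, $p_{n/2+1+\ell}\in B[i]$ if and only if $i_\ell = 1$, so the length-$m$ bit string read off from $B[i]$ is exactly $(i_0,i_1,\dots,i_{m-1})$. Hence $i\mapsto B[i]$ is, up to this relabelling, the zero-padded length-$m$ binary representation map, and the whole claim reduces to two standard facts about binary representations — a positive integer has a nonzero bit, and the representation is unique — provided the indices in question genuinely fit into $m$ bits.

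So the first thing I would verify is precisely that: every index $i$ with $1\le i\le N$ satisfies $i<2^m=2^{n/2}$. This is where the concrete bound $N\le 1.3^n$ is used: since $1.3<\sqrt2$, we get $N\le 1.3^n<(\sqrt2)^{\,n}=2^{n/2}$ for all sufficiently large $n$ (which is already assumed in the size analysis of $\bD$). Consequently every such $i$ lies in $\{1,\dots,2^m-1\}$, so its length-$m$ binary expansion $(i_0,\dots,i_{m-1})$ really has value $i$ and no high bit is dropped.

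Given this, both halves are immediate. For $B[i]\ne\emptyset$: since $i\ge 1$, some bit $i_\ell$ equals $1$ with $0\le\ell<m$, and then $p_{n/2+1+\ell}\in B[i]$. For $B[i]\ne B[j]$ when $1\le i<j\le N$: both $i$ and $j$ lie in $\{0,\dots,2^m-1\}$, on which binary encoding is injective, so their bit strings differ at some position $\ell<m$, and the corresponding element of $B$ then lies in exactly one of $B[i]$ and $B[j]$; hence $B[i]\ne B[j]$.

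The only step that is not pure bookkeeping is the inequality $N<2^{n/2}$; everything else is just uniqueness of binary representation. This also explains, in hindsight, why the base in the size bound for $\bD$ was chosen to be $1.3$ rather than anything $\ge\sqrt2$: the construction needs all indices to be addressable by the $n/2$ bits carried by $B$.
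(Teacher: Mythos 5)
Your proof is correct and follows essentially the same route as the paper: the key point in both is that $N=1.3^n<2^{n/2}$, so every bit of any index $i\le N$ is carried by a process of $B$, whence $B[i]$ is the (injective) binary encoding of $i$ and is nonempty for $i\ge 1$. Your write-up merely makes the relabelling and the injectivity of binary representation more explicit; note also that $1.3<\sqrt{2}$ gives $1.3^n<2^{n/2}$ for every $n\ge 1$, so the restriction to sufficiently large $n$ is not even needed for this step.
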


\begin{proof}
	As $N=1.3^n$, we find $\log_2(N)<n/2$, so each 1-bit of $i$ is represented by a process in $B$, which ends up being in $B[i]$. This establishes the second assertion. The first one is now trivial, as $i\geq 1$.
\end{proof}

\begin{claim}
	\label{claim: lb adversary roots}
	For $1\leq i\leq N$, we have $\Root(G_i)=R_i$.
\end{claim}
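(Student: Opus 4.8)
The plan is to verify that $R_i$ is indeed the unique root component of $G_i$ by checking two things: (1) $R_i$ is a strongly connected component with no incoming edges from outside, and (2) $R_i$ is the \emph{only} such root component, equivalently that every node outside $R_i$ has a path \emph{from} $R_i$ but $R_i$ has no path from any node outside itself.

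First I would establish that $R_i$ is strongly connected with no incoming edges from outside the set. Strong connectivity is immediate from the construction: every two nodes in $R_i$ are joined by bi-directional edges, so $R_i$ forms a clique. For the ``no incoming edges'' part, I would go through the edge set of $G_i$ node by node: the only edges in $G_i$ are (a) the clique edges inside $R_i$, (b) directed edges from each node of $R_i$ to each node of $B \cup L_i$, (c) directed edges from $B[i]$ to $R_{i+1}$, and (d) directed edges from $B[i+1]$ to $R_{i+2}$ (plus self-loops, which are irrelevant here). Since $R_i$, $R_{i+1}$, $R_{i+2}$ are pairwise disjoint and disjoint from $B$, none of the edges of type (b), (c), (d) has its head in $R_i$; and edges of type (a) stay within $R_i$. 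Hence $\In_{G_i}(p) \subseteq R_i \cup \{p\} \subseteq R_i$ for every $p \in R_i$, so $R_i$ has no incoming edge from outside. This shows $R_i$ is a root component.

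Next I would argue uniqueness, i.e., that $R_i$ is the \emph{single} root component, which by the definition in the paper means $G_i$ is rooted with $\Root(G_i)=R_i$. Here I would show that from every node of $R_i$ there is a directed path to every other node of $G_i$; since a graph with a strongly connected set $S$ that reaches all other nodes can have no other root component (any other SCC $C$ reachable from $S$ has an incoming edge, and any SCC not reachable from $S$ would contradict reachability), this gives uniqueness. Reachability: nodes in $B \cup L_i$ are reached directly by a type-(b) edge from any node of $R_i$. Nodes in $R_{i+1}$ are reached by first going from $R_i$ to some node of $B[i]$ (type (b), using $B[i]\subseteq B$ and $B[i]\neq\emptyset$ by \cref{claim: Bi is good}) and then a type-(c) edge into $R_{i+1}$. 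Similarly, nodes in $R_{i+2}$ are reached by going from $R_i$ to a node of $B[i+1]$ (nonempty again by \cref{claim: Bi is good}, noting $1 \le i+1 \le N+1$; one should check $B[N+1]\neq\emptyset$ is still covered, or handle the index range carefully) and then a type-(d) edge into $R_{i+2}$. Since $\Pi = R_i \cup R_{i+1} \cup R_{i+2} \cup B \cup L_i$, every node is reachable from $R_i$, completing the argument.

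The main obstacle I anticipate is purely bookkeeping rather than conceptual: one must be careful with the index ranges (in particular that $B[i+1]$ and $B[i+2]$ are nonempty and that $R_{i+1}, R_{i+2}$ are well-defined for $i$ up to $N$, which may require $N+1$ or $N+2$ root-component sets to be defined, as hinted by the ``$R_{N+1}$'' in \cref{eq:indist}). I would also double-check the subtle point that $B[i] \ne \emptyset$ really does require $i \le N$ with $N = 1.3^n$ so that $\log_2 N < n/2$ — this is exactly \cref{claim: Bi is good} — because the reachability of $R_{i+1}$ breaks down if $B[i]$ is empty. Everything else is a direct reading-off of the edge set.
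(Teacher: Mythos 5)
Your proposal is correct and follows essentially the same route as the paper's (terser) proof: $R_i$ is a clique with no incoming edges from outside, reaches $B\cup L_i$ directly, and reaches $R_{i+1}\cup R_{i+2}$ through the nonempty sets $B[i]$ and $B[i+1]$ guaranteed by \cref{claim: Bi is good}. Your bookkeeping remark is well taken — the paper's one-line version routes all paths ``through $b\in B[i]$'' even though $R_{i+2}$ is actually reached via $B[i+1]$ — but this does not change the argument.
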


\begin{proof}
	This is immediate from the graph's definition.
	In $G_i$, all nodes in $R_i$ are connected to one another
  and have no incoming edges from any node not in $R_i$. 
	From each of them, there is a direct edge to all nodes of $B\cup L_i$. Moreover, by \cref{claim: Bi is good}, there is at least one process $b\in B[i]$, so there is a path from each node in $R_i$, through $b$, to each node in $R_{i+1}\cup R_{i+2}$.
\end{proof}

\begin{claim}
	\label{claim: lb adversary indist. chain}
	We have $G_i\sim_{R_{i+2}} G_{i+1}$ for $1\leq i\leq N-1$, 
	and these are the only indistinguishability relations in the graph.
\end{claim}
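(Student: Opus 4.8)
The plan is to verify the two directions separately: first that $G_i \sim_{R_{i+2}} G_{i+1}$ holds for each $1 \le i \le N-1$, and then that no other indistinguishability edge can occur. For the positive direction, recall that for two communication graphs $G, H$ we have $\ell((G,H)) = \{ p : \In_G(p) = \In_H(p)\}$. So I would simply compute $\In_{G_i}(p)$ and $\In_{G_{i+1}}(p)$ for each $p \in R_{i+2}$ and check they agree. By the construction of $G_i$, the node set $R_{i+2}$ plays the role of the ``$R_{j+2}$-block'' with $j = i$, so each $p \in R_{i+2}$ has $\In_{G_i}(p) = B[i+2]$ (plus the self-loop). In $G_{i+1}$, the same set $R_{i+2}$ plays the role of the ``$R_{j+1}$-block'' with $j = i+1$, so each $p \in R_{i+2}$ has $\In_{G_{i+1}}(p) = B[(i+1)+1] = B[i+2]$ (plus the self-loop). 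These coincide, so $R_{i+2} \subseteq \ell((G_i, G_{i+1}))$, giving $G_i \sim_{R_{i+2}} G_{i+1}$. (I would also note that the disjointness property $R_i, R_{i+1}, R_{i+2}$ pairwise disjoint, together with $R_{i+1}, R_{i+2}, R_{i+3}$ pairwise disjoint, is what guarantees $R_{i+2}$ really is disjoint from $R_i$ and lands in the ``second extra block'' of $G_i$ and the ``first extra block'' of $G_{i+1}$ with no interference.)

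For the negative direction — that these are the \emph{only} edges — I would argue that for $j \notin \{i-1, i, i+1\}$ (and in fact for any $G_j$ with $j \ne i$, after handling the $j = i\pm 1$ cases), every process $p$ distinguishes $G_i$ from $G_j$, and separately that even for $j = i+1$ no process outside $R_{i+2}$ is fooled (and for $j = i-1$, symmetrically, only $R_{i+1}$; but $R_{i+1}$-fooling for the pair $(G_{i-1}, G_i)$ is exactly the edge $G_{i-1} \sim_{R_{i+1}} G_i$ already counted). Concretely, I would partition $\Pi$ according to the structure of $G_i$ and check each class:
\begin{itemize}
\item For $p \in B$: in $G_i$, $\In_{G_i}(p) \cap \{p_1,\dots,p_{n/2}\} = R_i$ (the edges into $B$ come precisely from $R_i$'s members). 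Since the $R_i$ are pairwise distinct (property 2 of the construction), $\In_{G_i}(p)$ determines $i$ uniquely; so $p$ distinguishes $G_i$ from every $G_j$, $j \ne i$. Hence no node of $B$ contributes any edge.
\item For $p \in L_i = \Pi \setminus (B \cup R_i \cup R_{i+1} \cup R_{i+2})$: here $\In_{G_i}(p) \cap \{p_1,\dots,p_{n/2}\} = R_i \cup \{p\}$, which has size $n/12 + 1 > n/12$, so $p$ sees it is not a root and recovers $R_i$, hence $i$. In $G_j$ with $j \ne i$, $p$'s in-neighborhood within $\{p_1,\dots,p_{n/2}\}$ is either $R_j$ (if $p \notin R_j \cup R_{j+1} \cup R_{j+2}$), or a clique $R_j$ containing $p$, or empty-of-first-half-nodes (if $p \in R_{j+1} \cup R_{j+2}$); in every case it differs from $R_i \cup \{p\}$ (using distinctness and disjointness of the $R$'s). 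So $L_i$-nodes also contribute no edge.
\item For $p \in R_i$: in $G_i$, $p$ is inside a clique on $R_i$ and has incoming first-half edges exactly $R_i$. In any $G_j$, $j \ne i$, the first-half part of $\In_{G_j}(p)$ is $R_j$ (clique, if $p \in R_j$), or $R_j \cup \{p\}$ (if $p \in L_j$), or $\emptyset$ among first-half nodes with instead incoming edges from $B[\cdot]$ (if $p \in R_{j+1} \cup R_{j+2}$); none equals $R_i$ as an in-neighborhood-within-first-half \emph{together with} the right $B$-edges, so $p$ distinguishes. Contributes no edge.
\item For $p \in R_{i+1}$: $\In_{G_i}(p) = B[i+1]$ (only incoming edges are from $B[i+1]$, plus self-loop). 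The only other graph where $p$ can have exactly this in-neighborhood is one where $p$ sits in the ``$R_{j+2}$-slot'' with $B[j+2] = B[i+1]$, i.e. $j+2 = i+1$ by \cref{claim: Bi is good} (the map $j \mapsto B[j]$ is injective), i.e. $j = i-1$; this is the edge $G_{i-1} \sim_{R_{i+1}} G_i$, already accounted for. With any other $G_j$, $p$'s in-neighborhood differs (it would be $B[j+1]$ or $B[j+2]$ for a wrong index, or a clique/$R_j$-type neighborhood, or $R_j \cup\{p\}$). So the only new edge a node of $R_{i+1}$ can force is $G_{i-1}\sim_{R_{i+1}}G_i$.
\item For $p \in R_{i+2}$: symmetrically, $\In_{G_i}(p) = B[i+2]$, and the only other graph giving $p$ this in-neighborhood is $G_{i+1}$ (via its $R_{(i+1)+1}$-slot, since $B[(i+1)+1] = B[i+2]$ and $j \mapsto B[j]$ is injective). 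This is precisely the edge $G_i \sim_{R_{i+2}} G_{i+1}$.
\end{itemize}
Collecting the cases, the only indistinguishability edges among $\{G_1,\dots,G_N\}$ are the $G_i \sim_{R_{i+2}} G_{i+1}$, as claimed.

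The main obstacle I anticipate is the bookkeeping in the negative direction: one must be careful that when comparing $G_i$ with $G_j$ for $j$ close to $i$ (namely $j \in \{i-2, i-1, i+1, i+2\}$), the various root blocks $R_i, R_{i+1}, R_{i+2}$ and $R_j, R_{j+1}, R_{j+2}$ overlap in controlled ways, and one has to invoke both the pairwise-disjointness of consecutive triples and the global distinctness of all the $R_k$ to rule out accidental coincidences of in-neighborhoods. The cleanest way to organize this is, as above, to fix $G_i$, classify every process by which of the five blocks $B, R_i, R_{i+1}, R_{i+2}, L_i$ it lies in, write down $\In_{G_i}(p)$ explicitly in each case, and then observe that this in-neighborhood is achievable in $G_j$ ($j \ne i$) only for the one or two exceptional indices identified — using the injectivity of $i \mapsto R_i$ for the ``$B$-side'' and ``$L$-side'' nodes and the injectivity of $i \mapsto B[i]$ (\cref{claim: Bi is good}) for the ``$R_{i+1}/R_{i+2}$-side'' nodes. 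Everything else is a direct read-off from the definition of $G_i$.
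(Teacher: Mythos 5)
Your proposal is correct and follows essentially the same route as the paper's proof: a case analysis over the five blocks $B, L_i, R_i, R_{i+1}, R_{i+2}$, showing every process outside $R_{i+1}\cup R_{i+2}$ pins down $i$ from its in-neighborhood, while a process in $R_{i+1}$ (resp.\ $R_{i+2}$) is fooled only by $G_{i-1}$ (resp.\ $G_{i+1}$) via the injectivity of $i\mapsto B[i]$. Your version is merely more explicit (e.g., spelling out the $R_i$ and $p\in B$ cases), and you correctly adopt the indexing convention of the paper's informal overview ($\In_{G_i}(p)=B[i+1]$ for $p\in R_{i+1}$), which is the one the paper's own proof also uses.
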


\begin{proof}
	As we have already explained in the informal overview, in $G_i$, every process that is not in $R_{i+1} \cup R_{i+2}$ can determine that the graph is $G_i$ from its in-neighborhood. This is immediately obvious for processes in $B$, and also possible for a process $p\in L_i$ by observing $|\In_{G_i}(p)|=n/12+1$ and removing itself from it for
determining $R_i$.
	
	For a process $p \in R_{i+1}$ (resp. $R_{i+2}$), it holds by construction that $\In_{G_i}(p) = B[i+1] = \In_{G_{i-1}}(p)$ (resp.\ 
$\In_{G_{i-1}}(p) = B[i+2]  = \In_{G_{i+1}}(p))$, and that  
$G_{i-1}$ (resp. $G_{i+1}$) is the only other graph besides $G_i$ 
where the in-neighborhood of $p$ is the same.
	%Consider a node $v$ in two graphs, $G_i$ and $G_j$, for $i<j$,
%	and let $\In_i$ and $\In_j$ be its sets of incoming neighbors in them.
%	We show that $G_i\sim_v G_j$ holds only if $j=i+1$ and $v\in R_{i+2}$.
%	
%	If $\In_i\cap B=\emptyset$ and $\In_j\cap B\neq\emptyset$, or vice versa, we are done.
%	
%	If $\In_i\cap B=\emptyset$ and $\In_j\cap B=\emptyset$, then $v\in R_i\cup B\cup L_i$ in $G_i$, and $v\in R_j\cup B\cup L_j$ in $G_j$.
%	It then has $\In_i=R_i\cup\{v\}$ and $\In_j=R_j\cup\{v\}$, which are distinct since $R_i\neq R_j$.
%	
%	If $\In_i\cap B\neq\emptyset$ and $\In_j\cap B\neq\emptyset$, then 
%	$\In_i=B_{i+1}\cup\{v\}$ or $\In_i=B_{i+2}\cup\{v\}$, and
%	$\In_j=B_{j+1}\cup\{v\}$ or $\In_j=B_{j+2}\cup\{v\}$.
%	By \cref{claim: Bi is good}, two such sets can only be equal for $i<j$ if $i+2=j+1$.
%	This happens when the graphs are $G_i$ and $G_{i+1}$, and $v\in R_{i+2}$ in both. 
%	That is, the only indistinguishability is $G_i\sim_{R_{i+2}} G_{i+1}$, as claimed.
\end{proof}

Our lower bound is now easy to prove.
\begin{theorem}
	There is an oblivious message adversary under which consensus is solvable, but 
	for which the decision procedure takes exponential time to terminate.\label{thm:expiter}
\end{theorem}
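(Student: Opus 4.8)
The plan is to assemble the pieces that have already been established in \cref{sec:decProcLB} and combine them with the upper bound \cref{thm:poss}. First I would observe that the graph set $\bD = \{G_1, \ldots, G_N\}$ constructed above, with $N \geq 1.3^n$ (by \cref{eq:asymptotics}), consists entirely of rooted graphs (\cref{claim: lb adversary roots}), and that by \cref{claim: lb adversary indist. chain} the indistinguishability graph $I(\bD)$ is exactly the path
\[
G_1 \sim_{R_3} G_2 \sim_{R_4} \cdots \sim_{R_{N+1}} G_N,
\]
where $R_i = \Root(G_i)$ and $R_{N+1} \neq \Root(G)$ for every $G \in \bD$ (the latter because $R_{N+1}$ is chosen as a fresh set, distinct from all root components by property~(2) of the $R_i$ sequence; if one actually re-uses an existing set one must instead argue it is not a root component of any graph in $\bD$, which follows from the construction since each $R_j$ with $j\le N$ is the root of exactly $G_j$, and $B[N+1]$ would not match). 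So $I(\bD)$ is connected, with $N-1$ edges.

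Next I would trace the execution of \cref{alg:Nerve} on this instance. The key point, already sketched in the text, is that every edge $G_k \sim_{R_{k+2}} G_{k+1}$ with $k+2 \le N$ is protected: the graph $G_{k+2} \in \bD$ has $\Root(G_{k+2}) = R_{k+2} = \ell((G_k, G_{k+1}))$, and $G_{k+2}$ lies in the same connected component. Only the rightmost edge $G_{N-1}\sim_{R_{N+1}}G_N$ is unprotected, since no graph in $\bD$ has root component contained in $R_{N+1}$. Hence in iteration~$1$ exactly one edge is removed, leaving the path $G_1 \sim_{R_3} \cdots \sim_{R_N} G_{N-1}$ as the component of $\N_2$ (restricted to these nodes), whose rightmost edge $\sim_{R_N}$ is now unprotected because the only protector, $G_N$, has been disconnected. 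I would make this a short induction: $\N_i$ restricted to $\{G_1,\ldots,G_{N-i+1}\}$ is the path $G_1 \sim_{R_3} \cdots \sim_{R_{N-i+1}} G_{N-i}$, with all edges protected except the rightmost. Therefore the decision procedure removes exactly one edge per iteration and needs $\TD \geq N - 1$ iterations (actually $N$, counting the final iteration that produces no change / achieves root-compatibility), which is exponential in $n$ by \cref{eq:asymptotics}. The formal version of the protection-to-nonremoval direction is essentially \cref{lem:protectedComponents} (or can be argued directly from Line~\ref{line:guardRM}).

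Finally I would check that consensus is nonetheless solvable under $\bD$. Once all edges of the path have been removed, $\N_{\TD}$ has every connected component a singleton $\{G_j\}$, which is trivially root-compatible ($\bigcap_{G \in \{G_j\}} \Root(G) = R_j \neq \emptyset$). By \cref{thm:poss}, consensus is solvable by round $c(n-1)(\TD+1)$ where $c = |\bD| = N$. Combining: consensus is solvable, but since $\TD \geq N - 1 \geq 1.3^n - 1$ and (by the discussion following \cref{thm:imposs}, invoking \cref{thm:imposs} itself) $\N_{\TD}$ having been reached after $\TD-1$ iterations of genuine edge removal forces any correct consensus algorithm to run for at least $\TD - 1$ rounds, the decision procedure — and indeed consensus — takes exponential time. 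I expect the main obstacle to be the bookkeeping in the induction on iterations of \cref{alg:Nerve}: one must be careful that when an edge is removed, its would-be protectors in later iterations are precisely the graphs that have already been severed from the component, so that the ``one edge per round'' behavior is rigorously justified rather than merely plausible; the rest is assembling cited results.

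\begin{proof}
By \cref{eq:asymptotics}, the construction above yields a set $\bD = \{G_1, \ldots, G_N\}$ with $N \ge 1.3^n$, and by \cref{claim: lb adversary roots} every $G_i$ is rooted, with $\Root(G_i) = R_i$. By \cref{claim: lb adversary indist. chain}, $I(\bD)$ is exactly the path
\[
G_1 \sim_{R_3} G_2 \sim_{R_4} \cdots \sim_{R_{N+1}} G_N,
\]
i.e.\ $\ell((G_k,G_{k+1})) = R_{k+2}$ for $1 \le k \le N-1$, where we write $R_{N+1}$ for the label of the last edge. By construction $R_{N+1}$ is distinct from every $R_j$, $j \le N$, and since each $R_j$ is the root component of exactly one graph of $\bD$ (namely $G_j$), no $G \in \bD$ satisfies $\Root(G) \subseteq R_{N+1}$; in particular the rightmost edge is not protected.

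We claim that for $1 \le i \le N$, the subgraph of $\N_i$ induced by $\{G_1, \ldots, G_{N-i+1}\}$ is the path $G_1 \sim_{R_3} \cdots \sim_{R_{N-i+1}} G_{N-i}$ (empty if $i \ge N$), every edge of which is protected by a graph of $\bD$ except the rightmost one, and no graph $G_j$ with $j > N-i+1$ is connected to $G_1$ in $\N_i$. The base case $i=1$ is the statement of \cref{claim: lb adversary indist. chain} together with the observation above. For the step, assume the claim for $i$. Each edge $G_k \sim_{R_{k+2}} G_{k+1}$ with $k+2 \le N-i+1$ is protected by $G_{k+2}$, which lies in the same component of $\N_i$, so by Line~\ref{line:guardRM} of \cref{alg:Nerve} it survives into $\N_{i+1}$. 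The rightmost edge $G_{N-i} \sim_{R_{N-i+1}} G_{N-i-1}$ has label $R_{N-i+1}$; the only graph of $\bD$ whose root component equals $R_{N-i+1}$ is $G_{N-i+1}$, which is not in the component of $G_{N-i}$ in $\N_i$ by the induction hypothesis, so this edge is unprotected and is removed. This yields the path $G_1 \sim_{R_3} \cdots \sim_{R_{N-i}} G_{N-i-1}$ as the component of $G_1$ in $\N_{i+1}$, and again all its edges are protected except the new rightmost one, proving the claim.

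In particular, for each $i < N$ the component of $G_1$ in $\N_i$ is a path with at least one edge and hence is not root-compatible (its root components $R_3, \ldots$ are pairwise distinct sets of size $n/12$, and one checks from property~(3) that $R_3 \cap R_4 = \emptyset$, so the intersection over the component is empty). Therefore the termination condition on Line~\ref{line:until} is not met before iteration $N$, and $\TD \ge N \ge 1.3^n$, which is exponential in $n$.

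It remains to note that consensus is solvable. After iteration $N$, every connected component of $\N_{\TD}$ is a single node $\{G_j\}$, which is root-compatible since $\bigcap_{G \in \{G_j\}} \Root(G) = R_j \ne \emptyset$. By \cref{thm:poss}, consensus is solvable (indeed by round $N(n-1)(\TD+1)$). This establishes the theorem.
\end{proof}
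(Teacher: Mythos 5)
Your proof follows essentially the same route as the paper's: an induction on the iteration number showing that the protected edges of the chain force \cref{alg:Nerve} to remove exactly one edge per iteration (from right to left), so that $\TD \ge N-1$, followed by the observation that the resulting singleton components are root-compatible and hence consensus is solvable by \cref{thm:poss}. The only blemish is an off-by-one in your induction indexing (for $i=1$ your stated path ends at $G_{N-1}$, whereas $\N_1 = I(\bD)$ still contains the full path up to $G_N$), which does not affect the exponential conclusion.
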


\begin{proof}
	Let $\bD=\{G_i\mid 1\leq i\leq N\}$, where  $N=1.3^n$ for $n$ begin sufficiently large for \cref{eq:asymptotics} to hold.
	We consider \cref{alg:Nerve}, and show, by induction on the iteration  number $i$, that after iteration $i$ the graphs $G_1,\ldots,G_{N-i+1}$ constitute the only nontrivial connected component in $\N_i$.
	
	The base case is $\N_1=I(\bD)$, where the graphs $G_1,\ldots,G_{N}$ are connected by \cref{claim: lb adversary indist. chain}.
	For the inductive step $i-1 \to i$, $i>1$, assume $G_1,\ldots,G_{N-i+2}$ is the only nontrivial connected component in $\N_{i-1}$, and consider iteration $i$.
	
	For $G_1,\ldots,G_{N-i+1}$, every two consecutive graphs $G_j,G_{j+1}$ with $1\leq j\leq N-i$ are indistinguishable for a set $R_{j+2}$ by \cref{claim: lb adversary indist. chain}, which is the root component of $G_{j+2}$ by \cref{claim: lb adversary roots}. Since
	 $G_{j+2}$ is in the same connected component as $G_j$ and $G_{j+1}$ in $\N_{i-1}$, the edge $G_j\sim_{R_{j+2}} G_{j+1}$ is incorporated by the algorithm in $\N_i$.
	
	On the other hand, the edge $G_{N-i+1}\sim_{R_{N-i+3}} G_{N-i+2}$ of $\N_{i-1}$ is not added to $\N_i$.
	This is since $R_{N-i+3}$ is the root component of $G_{N-i+3}$, which is not in the nontrivial connected component of $\N_i$. Since all the root components have equal sizes and are distinct, $R_{N-i+3}$ cannot be contained in any other root component either. This completes the induction step.
	
	It follows that the algorithm takes $N=1.3^n$ iterations to complete.
	Upon completion, each connected component of $\N_N$ is a single, root-compatible graph, so consensus is solvable under $\bD$.
\end{proof}

%\subsubsection{Constructing the auxiliary sets}
%\label{sec: construction of set sequence}
%\noindent \textbf{Constructing the auxiliary sets.}
%We define the sets $R_i$ by splitting their range $[n/2]$ into two half, $[n/4]$ and $\{n/4+1,\ldots,n/2\}$, and constructing the sequence from partitions of each range, alternately.

%Formally, consider all the partitions of $[n/4]$ into three sets of size $n/12$ each. 
%Partition number $\ell+1$ constitutes the sets $R_{6\ell+1},R_{6\ell+2},R_{6\ell+3}$.
%Similarly, consider consider all the partitions of $\{n/4+1,\ldots,n/2\}$ into three sets of size $n/12$ each, and set partition $\ell+1$ to constitute the sets $R_{6\ell+4},R_{6\ell+5},R_{6\ell+6}$.

%The sequence clearly satisfies the desired properties: the range and size are immediate; the fact that the sets are distinct comes from the fact that we took different partitions; and the disjointeness of consecutive sets comes from the disjointness of sets in a partition, and from the disjointness of the ranges being partitioned.

%The length $N$ of the sequence is dominated asymptotically by the number of partitions of $[n/4]$ into three equisized sets,
%which is $\frac{1}{6}\binom{n/4}{n/12}\binom{n/6}{n/12}$.
%The definition of the binomial coefficients, along with simple bounds on the factorial function, give
%\[\frac{1}{6}\binom{\frac{n}{4}}{\frac{n}{12}}\binom{\frac{n}{6}}{\frac{n}{12}}
%= \frac{\left(\frac{n}{4}\right)!}{6\left(\left(\frac{n}{12}\right)!\right)^3}
%\geq c\frac{3^{n/4}}{n}>1.3^n\]
%where $c$ is a constant and $n>1$.

\subsection{Exponential termination time of consensus}\label{sec:ConsensusLB}

From \cref{thm:imposs}, we immediately obtain a termination time lower bound
of $\Omega(\TD)$ for solving consensus. Consequently, the message adversary used in
(the proof of) \cref{thm:expiter}, where $\TD=N=1.3^n$ for sufficiently large $n$,
reveals a lower bound that is exponential in $n$.

We will now adapt the message adversary from \cref{thm:expiter} in \cref{sec:decProcLB} to show that the termination time of consensus may actually be $\Omega(n 1.3^n)$. More
specifically, in the graph $G_i$ shown in \cref{fig: lb-graph}, we replace the direct edges from $R_i$ to $B$ by a path consisting of processes taken from a set $P \subseteq \{p_{n/2+1},\dots,p_n\}$ with $|P| = \Omega(n)$ (i.e., taken away from the original $B$), as illustrated in \cref{fig: lb-graph-extended}. 

\begin{wrapfigure}[10]{R}{0.5\textwidth}
	\vspace{0ex}
	\centering
	\includegraphics[scale=0.6]{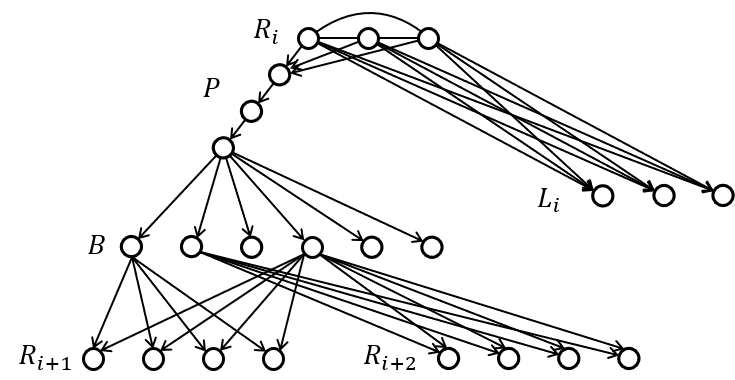}
	\caption{A sketch of the extended lower bound graph $G_i$}
	\label{fig: lb-graph-extended}
\end{wrapfigure}

In more detail, we change the graph construction from \cref{sec:decProcLB}
as follows:
\begin{itemize}
	\item
	$B=\{n/2+1,\ldots,0.9n\}$ and $P=\{0.9n+1,\ldots,n\}$;
	\item
	Add the directed edges $(p,p+1)$ for all $p\in P\setminus\{n\}$;
	\item
	Instead of an edge from each node of $R_i$ to each node of $B$, add an edge from each node of $R_i$ to $h=0.9n+1$, and from $n$ to each node of $B$.
\end{itemize}
% disconnect each node of $P$
%from $G_i$, and add a path graph, consisting 
%of $P$, from some node of $R_i$ to the nodes of $B$, as depicted in
%\cref{fig: lb-graph-extended}.
%For the head $h$ of the path graph, we add an edge $(p, h)$ for an arbitrary
%$p \in R_i$ and for the tail $t$ of the path graph we add an edge $t, q$ for
%each $q \in B$. to every graph $G_i$ from some arbitrary node of $R_i$ of our construction.

Let $h=0.9n$ be the first node on the inserted path. 
Whereas our new construction introduced the additional indistinguishability
$G_i \sim_p G_j$ for all $p \in {(B \cup P) \setminus \{ h \}}$
for
any $G_i, G_j \in \bD$, it does not affect the iteration complexity of the decision procedure, since no $R \subseteq (B \cup P)$ ever occurs
as a root component in a graph of $\bD$.
Thus, all edges $e$ with $\ell(e) \subseteq (B \cup P)$ are removed in the
first iteration of the decision procedure, according to \cref{cor:nerveEdge}.

It is easy to see that \cref{claim: Bi is good} still holds, as we have $\log_2(N)<0.4n$, and \cref{claim: lb adversary roots} holds by construction.
Regarding \cref{claim: lb adversary indist. chain}, the original indistinguishability relations still hold, but are now expanded by additional indistinguishabilities labeled by a process $p\in (P\cup B)\setminus\{h\}$, which are removed in the first iteration of the decision procedure.
% since none of these nodes is in a root component of any graph.

The crucial property of our new construction is that any $G_i, G_{i+1}$, when repeated for $0.1n$ rounds, yield indistinguishable communication patterns.

\begin{claim}
  \label{claim:delay}
  $G_{i}^r \sim_{p} G_{i+1}^r \text{ for all } r \le 0.1n \text{ and all } p \in R_{i+2}$.
\end{claim}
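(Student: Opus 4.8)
The plan is to show that within the first $0.1n$ rounds, the ``distinguishing information'' that separates $G_i$ from $G_{i+1}$ — namely the set of root processes $R_i$ versus $R_{i+1}$ and the fact that an outside node eventually reaches a member of $R_{i+2}$ — simply cannot propagate far enough along the inserted path $P$ of length $\approx 0.1n$ to reach any process in $R_{i+2}$. The key observation is that in $G_i$, the only outgoing edges from the root component $R_i$ (other than the clique edges inside $R_i$ and the edges to $R_{i+1}$) now go to the single node $h = 0.9n{+}1$, and from there information must travel one hop per round along the path $h, h{+}1, \dots, n$ before it can spill into $B$ (and only via $B[i]$ into $R_{i+1}$, and via $B[i{+}1]$ into $R_{i+2}$). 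So I would track, round by round, exactly which process-time nodes are ``contaminated'' by $R_i$ in $G_i^r$, and symmetrically by $R_{i+1}$ in $G_{i+1}^r$, and argue these contaminated sets are identical when restricted to $R_{i+2}$ for $r \le 0.1n$.

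Concretely, I would fix $p \in R_{i+2}$ and prove $\view_{G_i^r}(p,r) = \view_{G_{i+1}^r}(p,r)$ by induction on $r$. The base case $r=0$ is trivial since initial views are just inputs and round numbers. For the inductive step, I would use the recursive definition $\view_{G_i^r}(p,r) = (p, r, \{\view_{G_i^r}(q,r-1) : (q,p) \in G_i\})$. Since $p \in R_{i+2}$, its in-neighborhood in $G_i$ is $B[i{+}1]$ and in $G_{i+1}$ is $B[(i{+}1){+}1] = B[i{+}2]$; wait — I must be careful here: by the construction, $\In_{G_i}(p) = B[i{+}1]$ for $p \in R_{i+1}$ and $\In_{G_i}(p) = B[i{+}2]$ for $p \in R_{i+2}$, and similarly $\In_{G_{i+1}}(p) = B[i{+}2]$ for $p \in R_{i+2}$ (which is the ``$R_{j+1}$'' role with $j = i{+}1$). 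So indeed $\In_{G_i}(p) = \In_{G_{i+1}}(p) = B[i{+}2] \subseteq B$ for $p \in R_{i+2}$, which reduces the claim for $R_{i+2}$-nodes to showing $\view_{G_i^{r-1}}(q, r-1) = \view_{G_{i+1}^{r-1}}(q, r-1)$ for all $q \in B$. So the real inductive statement I need is stronger: for $r \le 0.1n$, the views of all processes in $B$, and in fact of all path-nodes $h, \dots, h + (0.1n - r)$ or so, coincide in $G_i^r$ and $G_{i+1}^r$ — the ``frontier'' of contamination moves along $P$ at one node per round, and only after it reaches the end of $P$ (after about $0.1n$ rounds) does it touch $B$.

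Thus the strengthened induction hypothesis should be something like: for all $r$ with $0 \le r \le 0.1n$, and every process $q \notin R_i \cup R_{i+1} \cup R_{i+2}$ that is either in $B$ or is a path node $p_k \in P$ with $k \le n - r$ (i.e.\ at distance more than $r$ from the end of the path, measuring how far the ``signal'' from $R_i$ starting at $h$ has travelled), we have $\view_{G_i^r}(q,r) = \view_{G_{i+1}^r}(q,r)$; moreover the views of $R_{i+1}$ and $R_{i+2}$ processes also coincide for such $r$. The step then checks each type of node against its fixed in-neighborhood: nodes in $B$ get messages only from $n$, which is at distance $0$ from the path's end and only becomes contaminated near round $0.1n$; a path node $p_k$ gets a message only from $p_{k-1}$, whose view is still uncorrupted by the hypothesis as long as $k - 1 \le n - (r-1)$, i.e.\ $k \le n - r + 2$; and $R_{i+1}, R_{i+2}$ nodes get messages only from $B[i{+}1]$ resp.\ $B[i{+}2]$, handled by the $B$-case. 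The main obstacle is getting the bookkeeping of the ``distance-to-frontier'' bound exactly right — pinning down the precise set of path nodes whose views are guaranteed equal after $r$ rounds so that the induction closes cleanly, and making sure $h$ is the only node that ever receives a message directly from $R_i$ (so that $h$'s view after one round already differs between $G_i$ and $G_{i+1}$, but that difference needs $|P| \approx 0.1n$ further rounds to reach $B$ and hence $R_{i+2}$). Once the frontier argument is set up correctly, everything else is a routine unwinding of the view recursion.
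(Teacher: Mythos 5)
Your proposal is correct and rests on exactly the same two observations as the paper's proof: that $\In_{G_i}(q)=\In_{G_{i+1}}(q)$ for every $q\in (P\setminus\{h\})\cup B\cup R_{i+2}$, and that any causal chain from a process outside this set to a member of $R_{i+2}$ must traverse the inserted path $P$ and therefore needs more than $0.1n$ rounds. The paper merely states this as a path-length bound ("every path from a process outside $X$ to a process in $R_{i+2}$ has length at least $|P|+1$") where you unwind it into an explicit frontier induction, so the two arguments are essentially identical.
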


\begin{proof}
  Observe that, by construction, we have $\In_{G_i}(p) = \In_{G_{i+1}}(p)$
  for all $p \in X = P \setminus \{ h \} \cup B \cup R_{i+2}$.
  The claim follows, because every path from a process outside $X$ to a process in $R_{i+2}$ has length at least $|P|+1$.
  It thus takes at least $|P|+1$ repetitions of $G_i$, respectively $G_{i+1}$,
  until a process of $\Pi \setminus X$ reached a process of $R_{i+2}$.
  Since $|P| = 0.1n$, in a round $r \le 0.1n$, the nodes of $R_{i+2}$ have
  hence the same view in both $G_{i}^r$ and $G_{i+1}^r$.
\end{proof}

The following \cref{lem:inflation} shows that we can even ``inflate'' arbitrary 
communication patterns of the message adversary from \cref{sec:decProcLB}:

\begin{lemma} \label{lem:inflation}
Consider $(\sigma, \sigma') \in I(\bD^k)$, where $\bD$ is the oblivious message adversary of \cref{sec:decProcLB}. Let $\tilde{\bD}$ be the modified message adversary
of \cref{sec:ConsensusLB}, and $\tilde{\sigma}$ resp.\ $\tilde{\sigma}'$ in
$\tilde{\bD}^{(k 0.1 n)}$ be the communication pattern obtained from replacing every
round $i$ graph $\sigma(i)$ resp. $\sigma'(i)$ according to \cref{fig: lb-graph} 
by $0.1n$ instances of the corresponding graph 
according to \cref{fig: lb-graph-extended}. Then, $(\tilde{\sigma}, \tilde{\sigma}') \in I(\tilde{\bD}^{0.1nk})$.
\end{lemma}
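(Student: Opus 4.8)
The statement says that if two length-$k$ communication patterns $\sigma,\sigma'$ over the adversary $\bD$ of \cref{sec:decProcLB} are indistinguishable (connected by an edge in $I(\bD^k)$), then their ``inflations'' $\tilde\sigma,\tilde\sigma'$ — obtained by replacing each round-$i$ graph by $0.1n$ copies of the corresponding extended graph of \cref{fig: lb-graph-extended} — are still indistinguishable, i.e.\ $(\tilde\sigma,\tilde\sigma')\in I(\tilde\bD^{0.1nk})$. The natural approach is an induction on $k$, tracking not just indistinguishability but \emph{which} process witnesses it. So the strengthened claim I would prove is: if $\sigma\sim_p\sigma'$ in $\bD^k$ for some $p$, then $\tilde\sigma\sim_{p'}\tilde\sigma'$ in $\tilde\bD^{0.1nk}$ for some $p'$ — and more precisely, the witness processes are related in a controlled way (e.g.\ a witness in the original is a witness in the inflation, plus the inflation gains the ``highway'' processes $(P\cup B)\setminus\{h\}$ as additional witnesses, as already observed just before \cref{claim:delay}).

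\textbf{Key steps.} First, the base case $k=1$: a single-round edge $(\sigma(1),\sigma'(1))$ in $I(\bD)$ is, by \cref{claim: lb adversary indist. chain}, of the form $G_i\sim_{R_{i+2}}G_{i+1}$, and by \cref{claim:delay} the corresponding $0.1n$-fold repetitions $\tilde G_i^{0.1n}$ and $\tilde G_{i+1}^{0.1n}$ are indistinguishable for exactly the processes of $R_{i+2}$ (and also for $(P\cup B)\setminus\{h\}$, whose in-neighborhoods agree in every extended graph). Second, the inductive step: given $\sigma\sim_p\sigma'$ over $k$ rounds, write $\sigma=\sigma|_{k-1}\circ G$, $\sigma'=\sigma'|_{k-1}\circ G'$. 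Since $p$ cannot distinguish the full $k$-round patterns, in particular $p$'s round-$k$ in-neighbors must have had the same views after round $k-1$, so $\sigma|_{k-1}\sim_q\sigma'|_{k-1}$ for every $q\in\In_{\sigma(k)}(p)$, and moreover $\In_{\sigma(k)}(p)=\In_{\sigma'(k)}(p)$ so $G$ and $G'$ restricted to $p$'s in-edges agree — concretely $G,G'$ are either equal or the adjacent pair $G_i,G_{i+1}$ with $p\in R_{i+2}\cup(P\cup B)\setminus\{h\}$. Apply the induction hypothesis to get, for each such $q$, a witness $q'$ with $\tilde\sigma|_{0.1n(k-1)}\sim_{q'}\tilde\sigma'|_{0.1n(k-1)}$; then invoke \cref{claim:ignoramus}-style reasoning (the inflated last block consists of $0.1n$ copies of the extended graph, whose root component / relevant in-neighborhoods are confined to the processes that were already indistinguishable) together with \cref{claim:delay} to push this indistinguishability through the inflated last $0.1n$ rounds and conclude $\tilde\sigma\sim_{p'}\tilde\sigma'$ for a suitable $p'$.

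\textbf{Main obstacle.} The delicate part is the interaction, within the inflated last block of $0.1n$ rounds, between the ``slow highway'' $R_i\to h\to h{+}1\to\cdots\to n\to B$ and the views that need to be preserved. A process $p\in R_{i+2}$ (or on the highway past position reached within $0.1n$ steps) could in principle receive, during the $0.1n$ repetitions of the last block, information that ultimately originated outside the safe set $X=(P\setminus\{h\})\cup B\cup R_{i+2}$ of \cref{claim:delay} — but \cref{claim:delay} is exactly tailored so that $0.1n=|P|$ rounds are too few for any process of $\Pi\setminus X$ to reach $R_{i+2}$; I must make sure this bound still buys enough slack when the block is not a pure repetition of one $G_i$ but the two-graph transition coming from an already-indistinguishable edge, and when there is earlier history $\tilde\sigma|_{0.1n(k-1)}$ sitting underneath (that history is identical to $p$'s in-neighbors' history by the IH, so it poses no problem, but this needs to be spelled out carefully). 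A clean way to handle this is to phrase the whole argument in terms of the causal-influence relation $\leadsto$ and the sets $\ell(e)$ on edges, mirroring the proof of \cref{cor:infProp}/\cref{lem:keepConnected}: show that along the inflated path, each edge's label still contains an appropriate root component (here, the root component $R_{i+2}$ of the protecting graph $G_{i+2}$, which is unaffected by the inflation since no subset of $P\cup B$ is ever a root component), so the edge survives; the $0.1n$-round delay of \cref{claim:delay} is precisely what guarantees the label is not eroded before the next edge in the path can take over.
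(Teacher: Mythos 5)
Your plan matches the paper's proof in all essentials: both proceed by induction on $k$, use \cref{claim:delay} for the base case and for pushing indistinguishability through each inflated block, and strengthen the induction hypothesis to track \emph{which} processes remain witnesses — the paper's version of your ``controlled relation between witnesses'' is the pair of invariants that the original edge label $R$ survives inflation and that the processes of $B$ have exactly the same distinguishing power in the inflated and original prefixes, the latter being precisely what lets the inductive step rule out a $B$-process relaying distinguishing information to $R_{i+2}$ within the last $0.1n$ rounds. The obstacle you flag (information leaking along the highway inside one block) is resolved in the paper exactly as you anticipate, via the $|P|=0.1n$ path-length bound of \cref{claim:delay}.
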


\begin{proof}
We prove, by induction over $k\geq 1$, that (i) the $0.1nk$ prefixes $\tilde{\sigma}|_{0.1nk}$ and $\tilde{\sigma}'|_{0.1nk}$ satisfy $\tilde{\sigma}|_{0.1nk} \sim_R \tilde{\sigma}'|_{0.1nk}$ for $R=\ell(\sigma,\sigma')\neq \emptyset$, and (ii) that 
$\tilde{\sigma}|_{0.1nk} \sim_B \tilde{\sigma}'|_{0.1nk}$ if and only if
$\sigma|_k \sim_B \sigma'|_k$ for the processes $B=\{p_{n/2+1},\dots,p_n\}$. Note carefully that
$\sigma \sim_R \sigma'$ also implies $\sigma|_k \sim_R \sigma'|_k$, as well as
$\sigma(k) \sim_R \sigma'(k)$. As a consequence, there is some $i$ such that, 
for every $k$, either $\sigma(k)=G_i$ and $\sigma'(k)=G_{i+1}$ (or vice versa), 
with $R=R_{i+2}$, or else $\sigma(k)=\sigma'(k)$. 

%We also claim that $\ell(\sigma,\sigma') = \ell (\tilde{\sigma}^{0.1n}, 
%\tilde\sigma}'^{(0.1n)}) \setminus P$. 

For the induction basis $k=1$, the only non-trivial case is $\sigma|_1=\sigma(1)=G_i \in \bD$ and 
$\sigma'|_1=\sigma'(1)=G_{i+1} \in \bD$, and $R=R_{i+2}$. From \cref{claim:delay}, we get $\tilde{\sigma}|_{0.1n} \sim_R \tilde{\sigma}'|_{0.1n}$ as needed for (i).
As for (ii), the lenght $0.1n$ of the path $P$ in \cref{fig: lb-graph-extended}
ensures that all processes in $B$ have the same distinguishing power in both the original and in the inflated prefix.
	
For the induction step $k-1 \to k$, $k>1$, we assume for our hypothesis that $\tilde{\sigma}|_{0.1n(k-1)} \sim_R \tilde{\sigma}'|_{0.1n(k-1)}$ and that all processes in
$B$ have the same distinguishing power. Assume for a contradiction for (i) that $\tilde{\sigma}|_{0.1nk} \not\sim_R \tilde{\sigma}'|_{0.1nk}$, i.e., some process $p\in R$ can distinguish the two prefixes. Consider the round $k$ graphs $\sigma(k)$ and $\sigma'(k)$. If $\sigma(k)=\sigma'(k)=G_j \in \bD$, we immediately get
a contradiction, since appending $0.1n$ instances $\hat{G}_j^{0.1n}$ of the corresponding $\hat{G}_j\in\tilde{\bD}$ to both $\tilde{\sigma}|_{0.1n(k-1)}$ and $\tilde{\sigma}'|_{0.1n(k-1)}$ cannot break their indistinguishability for $p$.

So let us assume w.l.o.g.\ $G_i = \sigma(k)$ and $G_{i+1} = \sigma'(k)$ with $R=R_{i+2}$. Since we know from \cref{claim:delay} that the corresponding graphs in $\tilde{\bD}$ ensure $\hat{G}_i^{0.1n} \sim_p \hat{G}_{i+1}^{0.1n}$, the information that allows $p$ to distinguish $\tilde{\sigma}|_{0.1nk}$ and $\tilde{\sigma}'|_{0.1nk}$
was relayed to it from some informed process $q'$ during the last $0.1n$ rounds. Since $R_{i+2}$ only has incoming edges from $B$ in \cref{fig: lb-graph-extended}, there exists an informed process $q \in B$ that relayed this information to $p$ by the last of these rounds. This $q$ must have been informed at the latest in round $0.1nk-1$. Since the path $P$ in \cref{fig: lb-graph-extended} has length $0.1n$, however, $R_i$ (resp.\ $R_{i+1}$) cannot be the source of information that allows $q$ to distinguish $\tilde{\sigma}|_{0.1nk}$ and $\tilde{\sigma}'|_{0.1nk}$. Consequently, $q$ must already have had information to distinguish
$\tilde{\sigma}|_{0.1n(k-1)}$ and $\tilde{\sigma}'|_{0.1n(k-1)}$. From (ii) of
our induction hypothesis, we can infer that this is also true in the original
$\sigma|_{k-1}$ and $\sigma'|_{k-1}$. Since $q$ sends a message 
to $R_{i+2}$ in round $k$ here, this would contradict $\sigma|_k \sim_R \sigma'|_k$,
and therefore completes the induction step for (i).

The induction step for (ii) is trivial, as the processes in $B$ only get information
from the respective root component, either directly (in the original prefix) or delayed via the path $P$ (in the inflated one).
The induction hypothesis hence immediately carries over from $k-1$ to $k$.
\end{proof}

\cref{lem:inflation} immediately gives us the consensus termination time for our new message adversary:

\begin{theorem}
  \label{thm:consensusLB}
  There is an oblivious message adversary for which solving consensus takes $\Omega(n 1.3^n)$ rounds.
\end{theorem}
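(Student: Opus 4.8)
The plan is to combine the exponential iteration lower bound of \cref{thm:expiter} with the ``inflation'' construction of \cref{lem:inflation} and the impossibility result of \cref{thm:imposs}. Concretely, I take the modified message adversary $\tilde{\bD}$ of \cref{sec:ConsensusLB}, whose graphs are obtained from the $G_i$ of \cref{sec:decProcLB} by replacing the direct edges $R_i \to B$ with a path of length $0.1n$ through $P$. The key quantitative ingredients are: (1) the original adversary $\bD$ has $\TD = N = 1.3^n$ by \cref{thm:expiter}, witnessed by the chain $G_1 \sim_{R_3} G_2 \sim_{R_4} \cdots \sim_{R_{N+1}} G_N$; and (2) in $\tilde{\bD}$, a single round of the original pattern corresponds to $0.1n$ rounds of the inflated pattern.

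First I would show that, for the round $k = k(n)$ which \cref{thm:imposs} (applied to $\bD$) certifies as a lower bound for the original adversary, the analogous impossibility holds for $\tilde{\bD}$ at round $0.1n \cdot k$. The argument: by \cref{thm:imposs} applied to $\bD$, since $\N_{k}(\bD)$ contains a non-root-compatible connected component (take $k = N-1$, so that the component $G_1 \sim_{R_3} \cdots$ still contains at least two graphs $G_1, G_2$ whose root components $R_1, R_2$ are disjoint, hence not root-compatible), there exist communication patterns $G_1^{k}$ and $G_2^{k}$ in $\bD^{k}$ that are connected in $I(\bD^{k})$ via a chain of indistinguishable patterns, yet must be decided differently by any correct algorithm (their only broadcasters lie in $R_1$ resp.\ $R_2$). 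Now apply \cref{lem:inflation} to each edge of this connecting chain: every edge $(\sigma,\sigma') \in I(\bD^{k})$ lifts to an edge $(\tilde\sigma,\tilde\sigma') \in I(\tilde{\bD}^{0.1nk})$. Chaining these, $\tilde{G_1^{k}}$ and $\tilde{G_2^{k}}$ — which are just $\hat G_1^{0.1nk}$ and $\hat G_2^{0.1nk}$, the $0.1nk$-fold repetitions of the inflated graphs $\hat G_1, \hat G_2 \in \tilde{\bD}$ — are connected in $I(\tilde{\bD}^{0.1nk})$. Since $\Root(\hat G_1) = R_1$ and $\Root(\hat G_2) = R_2$ are disjoint (the inflation does not change root components, as no $R \subseteq B \cup P$ is ever a root), the same broadcast-incompatibility argument as in \cref{thm:imposs} shows that not all processes can have decided by round $0.1nk$. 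Taking $k = N-1 = 1.3^n - 1$ gives a lower bound of $0.1n(1.3^n-1) = \Omega(n\,1.3^n)$ rounds.

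Alternatively — and perhaps more cleanly — I would phrase this as: \cref{thm:imposs} gives a lower bound of $\Omega(\TD)$ for any adversary, and it suffices to show that the inflated adversary $\tilde{\bD}$ has iteration complexity $\tilde\TD = \Omega(n \cdot 1.3^n)$. But that is not quite right, since inflation does not obviously multiply $\TD$ by $0.1n$ at the level of the decision procedure (the extra indistinguishabilities labeled by $p \in (B\cup P)\setminus\{h\}$ are stripped in the first iteration, as noted in \cref{sec:ConsensusLB}, so $\tilde\TD$ is essentially still $N$, not $0.1nN$). This is precisely why \cref{lem:inflation} is needed: the extra factor of $0.1n$ is a genuine \emph{round-complexity} phenomenon at the level of communication patterns, not visible in the decision procedure. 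So the proof must go through the pattern-level indistinguishability chain and \cref{lem:inflation}, not through $\tilde\TD$.

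The main obstacle I anticipate is bookkeeping the correspondence between the finite round $k$ at which \cref{thm:imposs} bites for $\bD$ and the inflated round $0.1nk$ for $\tilde{\bD}$, and in particular verifying that \cref{lem:inflation}'s hypothesis $(\sigma,\sigma') \in I(\bD^k)$ is met by every edge of the connecting chain witnessed inside \cref{lem:keepConnected}/\cref{thm:imposs} — i.e.\ that these intermediate patterns really do lie in $\bD^k$ (they do, since the chain in \cref{lem:keepConnected} is built entirely from graphs of $\bD$). A secondary subtlety is confirming that the validity/broadcast argument of \cref{thm:imposs} carries over verbatim to $\tilde{\bD}$: one needs that in $\hat G_j^{0.1nk}$ the only broadcasters are the members of $R_j = \Root(\hat G_j)$, which holds because $\hat G_j$ is rooted with root $R_j$ and $0.1nk \geq n-1$ suffices for the root to broadcast, while no process outside $R_j$ ever reaches everyone under repetition of a single rooted graph. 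Once these checks are in place, the conclusion $\Omega(n\,1.3^n)$ is immediate.
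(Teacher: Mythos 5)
Your proposal is correct and follows essentially the same route as the paper's own proof: use \cref{lem:keepConnected} (equivalently, the argument inside \cref{thm:imposs}) to obtain a chain connecting $G_1^{N-1}$ and $G_2^{N-1}$ in $I(\bD^{N-1})$, lift each edge of that chain via \cref{lem:inflation} to a chain connecting $\hat G_1^{0.1n(N-1)}$ and $\hat G_2^{0.1n(N-1)}$ in $I(\tilde{\bD}^{0.1n(N-1)})$, and conclude from the disjointness of $R_1$ and $R_2$ that no algorithm can have everyone decided by round $0.1n(N-1)=\Omega(n\,1.3^n)$. Your side remarks (that the factor $0.1n$ is not visible in $\tilde\TD$, and that the intermediate patterns of the chain lie in $\bD^{N-1}$) are accurate and consistent with the paper.
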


\begin{proof}
Consider any two indistinguishable communication patterns $\sigma, \sigma'$ of the message adversary of \cref{thm:expiter} on the path between $G_1^{N-1}$ and $G_2^{N-1}$ in $I(\bD)^{N-1}$.
As $\TD=N=1.3^n$, \cref{lem:keepConnected} guarantees that this path exists.
\cref{lem:inflation} immediately provides us with inflated communication patterns $\mu, \mu' \in I(\bD)^{0.1n(N-1)}$ for our new message adversary, which are also
indistinguishable. Together, they form a path between $G_1^{0.1n(N-1)}$ and $G_{2}^{0.1n(N-1)}$ in $I(\bD)^{0.1n(N-1)}$.
Since the root components $R_1=\Root(G_1)$ and $R_{2}=\Root(G_{2})$ are disjoint,
not all processes can have decided by round ${0.1n(N-1)}$, as claimed.
\end{proof}

\section{The Source of Consensus Time Complexity}
\label{sec:source}

In this section, we want to investigate whether the number of iterations $\TD$
of the decision procedure is the sole cause for a large time complexity of
consensus in an oblivious message adversary. 
Before we do so, however, let us briefly reiterate what we have achieved so far.
In \cref{thm:poss} we have seen that consensus can be solved after
$c (n-1)\cdot\TD$ rounds, whereas \cref{thm:consensusLB} revealed that there are
in fact oblivious message adversaries where consensus takes up to $n \TD$ rounds to terminate and $\TD$ may be exponential in $n$.
Thus in these cases a time complexity exponential in $n$ is asymptotically tight
for solving consensus under an oblivious message adversary.
As we know that the consensus time complexity is always at most $c (n-1)\cdot\TD$, and
since we have examples where it is at least $n \TD$,
it might hence be tempting to assume that $\TD$ also determines the termination
time of consensus in all cases.
In this section, we will see that this is not the case, as, to the contrary, there are instances where the decision
procedure terminates after a constant number of iterations while the consensus
time complexity is exponential in $n$.
We now proceed to show how to derive such an instance.

\subsection{A partition of an oblivious message adversary}
\label{subsec:partition}

Before going into the details of how to construct a message adversary with
the desired property of incurring a large time complexity of consensus while
maintaining a low $\TD$, we investigate an abstract property that, if satisfied by an oblivious message adversary $\bD$ for a
parameter $t$, leads to a consensus time complexity in the order of $t$.
Informally, this property is that there exists a partition $\bS_1, \ldots, \bS_t$ of $\bD$ such that $\bS_1$ is connected in the indistinguishability graph $I(\bD)$ and all the edges
that make up this connection are protected by the communication graphs of
$\bS_2$.
Similarly,~$\bS_2$ is connected in $I(\bD)$ and all of the edges in this
connection, along with the ones from $\bS_1$,
are protected by the communication graphs of
$\bS_3$ and so on.
Our claim is that if some $t$ round communication patterns exist that have
no common broadcaster and whose round $1 \le r \le t$ communication graphs are picked from $\bS_r$, then consensus is impossible by round $t$.
The reason for this, as shown in more detail below, is that the set of
communication patterns $\bS_1 \circ \ldots \circ \bS_t$ is connected in the
indistinguishability graph $I(\bD^t)$, because each $\bS_r$ can maintain the
connectivity of $\bS_1 \circ \ldots \circ \bS_{r-1}$ in $I(\bD^{r-1})$ as all
the edges relevant for this connectivity are protected by the communication graphs of~$\bS_r$.

Formally, we express this property as follows:
\begin{definition}
  \label{def:partition}
  Let $\bS_1, \ldots, \bS_t$ be a partition of $\bD$ with
  the following properties, for $1 \le i \le t$:

  \begin{itemize}
    \item[(i)] Each $\bS_i$ is connected. That is, for each $G, G'$ in
    $\bS_i$, there is a path from $G$ to $G'$ in the indistinguishability graph
    $I(\bD)$ that consists only of elements from $\bS_i$.
    \item[(ii)] The edges of the subgraph of $I(\bD)$, induced by
    $\bigcup_{j=1}^{i-1} \bS_j$, are protected by the communication graphs of
    $\bS_i$.
    \item[(iii)] There is no process $p$ such that every communication pattern of
    $\Sigma = \bS_1 \circ \ldots \circ \bS_t$ is broadcastable by $p$.
  \end{itemize}
\end{definition}

Given this partition, we show in \cref{claim:SigmaConnected} below that
$\Sigma$ is connected in $I(\bD^t)$, which
shows that consensus is impossible after $t$ rounds:
If all processes do decide after $t$ rounds in all runs with a communication
pattern of $\Sigma$, they all decide the same value because $\Sigma$ is
connected in $I(\bD^t)$.
Thus, in some run with communication pattern $\sigma\in\Sigma$, the decision is on an
input of a process $p$ even though $\sigma$ is not broadcastable by $p$, which
contradicts \cref{claim:validity}.

\begin{claim}
  The communication patterns of $\Sigma_t = \bS_1 \circ \ldots \circ \bS_t$
  are pairwise connected to each other in $I(\bD^t)$.
  \label{claim:SigmaConnected}
\end{claim}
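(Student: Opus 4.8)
\textbf{Proof plan for \cref{claim:SigmaConnected}.}
The plan is to prove the statement by induction on $t$, showing the stronger claim that for every $1 \le k \le t$, the set of communication patterns $\Sigma_k = \bS_1 \circ \cdots \circ \bS_k$ is connected in $I(\bD^k)$, and moreover that for every edge $e = (\sigma,\sigma')$ of a connecting path in $I(\bD^k)$, the last-round graphs $\sigma(k),\sigma'(k)$ lie in $\bS_k$. The base case $k=1$ is exactly property (i) of \cref{def:partition}: each $\bS_1$ is connected in $I(\bD) = I(\bD^1)$, and trivially the single-round graph of a pattern in $\bS_1$ is in $\bS_1$. The real work is the inductive step.

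For the step from $k-1$ to $k$, I would take two arbitrary patterns $\sigma \circ G$, $\tau \circ H$ in $\Sigma_k$ with $\sigma,\tau \in \Sigma_{k-1}$ and $G,H \in \bS_k$. First connect them ``within the same prefix'': since $\bS_k$ is connected in $I(\bD)$ (property (i)), take a path $G = K_1, K_2, \ldots, K_m = H$ in $I(\bD)$ using only graphs of $\bS_k$; because consecutive $K_j \sim_p K_{j+1}$ for some $p$, appending them to a fixed prefix $\sigma$ preserves indistinguishability (by the argument in \cref{claim:ignoramus}, prefixing does not affect whether the final-round views agree), so $\sigma \circ G \sim \sigma \circ K_2 \sim \cdots \sim \sigma \circ H$ is a path in $I(\bD^k)$. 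It remains to connect $\sigma \circ H$ to $\tau \circ H$. By the induction hypothesis there is a path $\sigma = \rho_0, \rho_1, \ldots, \rho_\ell = \tau$ in $I(\bD^{k-1})$ connecting them through $\Sigma_{k-1}$, and for each edge $e_j = (\rho_{j-1},\rho_j)$ the last-round graphs $\rho_{j-1}(k-1),\rho_j(k-1)$ are in $\bS_{k-1} \subseteq \bigcup_{i=1}^{k-1}\bS_i$; hence $e_j$ is an edge of the subgraph of $I(\bD)$ induced by $\bigcup_{i=1}^{k-1}\bS_i$. Wait — more carefully, I need that $e_j$ viewed as an edge among one-round patterns (i.e.\ $(\rho_{j-1}(k-1),\rho_j(k-1))$) is protected; since these endpoints are in $\bigcup_{i \le k-1}\bS_i$ and $\rho_{j-1}(k-1)\sim_{\ell(e_j)}\rho_j(k-1)$, property (ii) gives a graph $G_{e_j} \in \bS_k$ with $\Root(G_{e_j}) \subseteq \ell(e_j) \subseteq \ell((\rho_{j-1},\rho_j))$ (the label only shrinks under prefix restriction, by \cref{claim:ignoramus}). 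Then \cref{claim:ignoramus} yields that $(\rho_{j-1}\circ G_{e_j}, \rho_j \circ G_{e_j}) \in I(\bD^k)$. Chaining: $\sigma \circ H$ connects to $\sigma \circ G_{e_1}$ (via the within-$\bS_k$ subpath above), then $\sigma\circ G_{e_1} = \rho_0 \circ G_{e_1} \sim \rho_1 \circ G_{e_1}$, then reconnect $\rho_1 \circ G_{e_1}$ to $\rho_1 \circ G_{e_2}$ within $\bS_k$, and so on, finally reaching $\tau \circ H$. Every last-round graph appearing is in $\bS_k$, so the strengthened inductive hypothesis is maintained.

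Setting $k = t$ completes the proof. The main obstacle I anticipate is bookkeeping the strengthened invariant correctly — specifically, making sure that when I invoke property (ii), the relevant edge really is an edge of the subgraph of $I(\bD)$ induced by $\bigcup_{i=1}^{k-1}\bS_i$, which is why I carry along the ``last-round graph lies in $\bS_{k}$'' clause in the induction. A secondary subtlety is the direction of the label inclusion $\ell((\rho_{j-1},\rho_j)) \supseteq \ell((\rho_{j-1}(k-1),\rho_j(k-1)))$ versus what is needed for protection; here the point is that protection of the one-round edge $(\rho_{j-1}(k-1),\rho_j(k-1))$ gives $\Root(G_{e_j}) \subseteq \ell$ of that one-round edge, and since $\ell$ of the $(k-1)$-round edge contains at least this set when the differing rounds are earlier — actually I should simply argue directly from property (ii) applied to the induced subgraph on $\bigcup_{i\le k-1}\bS_i$, whose edges are by definition labeled by the one-round indistinguishability, and then observe that the same process set witnesses indistinguishability of the $(k-1)$-round patterns in round $k-1$, so $\Root(G_{e_j})$ is contained in the round-$k$ label too. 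Everything else is a routine application of \cref{claim:ignoramus} and path concatenation.
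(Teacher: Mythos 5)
Your overall architecture (induction on $k$, appending within-$\bS_k$ paths to a fixed prefix, and extending each edge of the $\Sigma_{k-1}$-connecting path by a protecting graph of $\bS_k$) matches the paper's proof. However, there is a genuine gap at the protection step, and it stems from your choice of strengthened invariant. To apply \cref{claim:ignoramus} and obtain the edge $(\rho_{j-1}\circ G_{e_j},\rho_j\circ G_{e_j})\in I(\bD^k)$ you need $\Root(G_{e_j})\subseteq \ell((\rho_{j-1},\rho_j))$, the label of the $(k-1)$-\emph{round} edge. But property (ii) of \cref{def:partition} only protects \emph{one-round} edges of $I(\bD)[\bigcup_{j\le k-1}\bS_j]$, so it hands you $\Root(G_{e_j})\subseteq\ell((\rho_{j-1}(k-1),\rho_j(k-1)))$. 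The inclusion between these two labels goes the opposite way from what you assert: equality of views over $k-1$ rounds is \emph{stronger} than equality of round-$(k-1)$ in-neighborhoods, so $\ell((\rho_{j-1},\rho_j))\subseteq\ell((\rho_{j-1}(k-1),\rho_j(k-1)))$, and a process of $\Root(G_{e_j})$ may well distinguish $\rho_{j-1}$ from $\rho_j$ via information received in an earlier round. Your closing remark that the same process set ``witnesses indistinguishability of the $(k-1)$-round patterns in round $k-1$'' is exactly the non sequitur: witnessing round-$(k-1)$ indistinguishability does not witness indistinguishability of the full prefixes.

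The invariant ``the last-round graphs of every connecting edge lie in $\bS_k$'' is also too weak to push through the next step even if the direction issue were repaired: the edges you create by extending with $G_{e_j}$ have \emph{identical} last-round graphs, so at level $k+1$ there is no one-round edge between their last-round graphs to which property (ii) could be applied, and their labels are only known to be sandwiched between $\Root(G_{e_j})$ and $\ell((\rho_{j-1},\rho_j))$. The paper instead carries the invariant that each connecting edge $e$ of $\Sigma_k$ has the \emph{same label} as some one-round edge $e'$ of $I(\bD)[\bigcup_{j\le k}\bS_j]$; label equality is precisely what allows property (ii), a statement about one-round edges, to be transferred to the multi-round edge, and it is reproduced by the within-$\bS_{k+1}$ path edges because appending a common suffix to a fixed prefix yields an edge whose label coincides with that of the corresponding one-round edge in $I(\bD)[\bS_{k+1}]$. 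You should replace your last-round-graph bookkeeping by this label-equality invariant (or something equivalent) for the induction to close.
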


\begin{proof}
  Let $I(\bD)[\bS]$ denote the subgraph
  of $I(\bD)$, induced by the set of communication graphs~$\bS$.
  We show an even stronger claim, namely that there is a set of edges $E_t$
  that connects $\Sigma_t$ in $I(\bD^t)$ such that for each $e \in E_t$ there is
  an $e' \in I(\bD)[\bigcup_{j \le t} \bS_j]$ with the same label
  $\ell(e) = \ell(e')$.
  We show this by induction on $k$ with
  $\Sigma_k = \bS_1 \circ \ldots \circ \bS_k$.

  The base of the induction $k=1$ follows directly from property (i) of \cref{def:partition}, as $\bS_1$ is connected in $I(\bD)$.
  
  For the step from $k$ to $k+1$, the induction hypothesis is that
  there are edges $E_k$ that connect $\Sigma_k$ such that for every $e \in E_k$
  there is an $e' \in I(\bD)[\bigcup_{j \le k} \bS_j]$ with $\ell(e) = \ell(e')$.
  We use the graphs of $S_{k+1}$ to extend
  $\Sigma_k$ to $\Sigma_{k+1}$ while maintaining the connectivity of $\Sigma_{k+1}$ as follows.
  
  For every $\sigma_1, \sigma_2 \in \Sigma_k$ with
  $e=(\sigma_1, \sigma_2) \in E_k$, we add to $\Sigma_{k+1}$ the extensions
  $\sigma_1 \circ G$ and $\sigma_2 \circ G$ such that $G \in \bS_{k+1}$ and
  $G$ protects $e$.
  Such a communication graph $G$ exists because of property (ii) of
  \cref{def:partition} and
  because there is an edge $e' \in I(\bD)[\bigcup_{j \le k} \bS_j]$ with
  $\ell(e) = \ell(e')$ by hypothesis.
  
  Finally, for all extensions
  $\sigma_1 \circ G, \sigma_2 \circ G$ and
  $\sigma_2 \circ G', \sigma_3 \circ G'$ added to $\Sigma_{k+1}$ in this
  way, by property
  (i) of \cref{def:partition}, there is a path $\pi$ from $G$ to $G'$ in
  $I(\bD)$ that consists only of graphs $G''\in \bS_{k+1}$.
  We can thus add all the communication patterns $\{ \sigma_2 \circ G'' : G'' \in \pi \}$ to $\Sigma_{k+1}$
  as well:
  This maintains the connectivity of $\Sigma_{k+1}$ and ensures the induction
  hypothesis as the path $\pi$ lies entirely in $I(\bD)[\bS_{k+1}]$
  by property (i) of \cref{def:partition}.
\end{proof}

\subsection{An example: choosing the processes}

We now construct a set of communication graphs that can be partitioned in accordance with \cref{def:partition}, into
$t=1.07^n$ sets.
For a set $\Pi$ of $n$ processes, let $m=\lceil\frac{n}{10}\rceil$.
We construct a message adversary with a partition on it, $\bD=\bigcup_{i=1}^{t} \bS_i$,
where each $\bS_i$ is a set of $2i+1$ graphs, denoted
$\bS_i=\{G_{i,j}\mid 1 \le j \le 2i+1\}$.
Each graph $G_{i,j}$ is defined by a partition of the process set $\Pi$ as
\[
	\Pi=
	\begin{cases}
	B\cup R_j \cup U_i \cup U'_i\cup L_{i,j} & \text{ for $j=1$},\\
	B\cup R_j \cup U_i\cup L_{i,j} & \text{ for $j$ even},\\
	B\cup R_j \cup U'_i \cup L_{i,j} & \text{ for $j\geq 3 $ odd}.
	\end{cases}
\]
The process sets are 
	$B=[5m+1, n]$, which is fixed for all $i,j$.
	$R_j$ with $|R_j|=m$, which constitutes the root component of all graphs $G_{i,j}$.
	$U_{i}, U'_i$, with $|U_{i}|=|U'_i|=m$.
	The set $L_{i,j}$ is defined to be the set of all the remaining processes.
\
We choose processes for these sets by induction on $i$, as follows.
For the base, we show how to construct the sets for the communication graphs of
$\bS_1=\{ G_{1,1}, G_{1, 2}, G_{1, 3} \}$.
\begin{itemize}
\item[(b1)] $R_1 = [4m+1, 5m]$
  \item[(b2)] $R_2\subseteq [1, 2m]$, $|R_2|=m$, chosen arbitrarily
  \item[(b3)] $R_3\subseteq [2m+1, 4m]$, $|R_3|=m$, chosen arbitrarily
  \item[(b4)] $U_1\subseteq [2m+1, 4m]$ different from $R_3$
  \item[(b5)] $U'_1\subseteq [1, 2m]$ different from $R_2$
\end{itemize}

We proceed with the inductive step of our construction.
For this we assume that we are given $R_1, \ldots, R_{2i+1}$ and $U_{i}, U'_{i}$, and show how to
construct $R_{2i+2}, R_{2i+3}$ and $U_{i+1}, U'_{i+1}$.

\begin{itemize}
  \item[(s1)] We let $R_{2i+2} = U'_{i}$
  \item[(s2)] We let $R_{2i+3} = U_{i}$
  \item[(s3)] We let $U_{i+1}$ be an arbitrary subset of $[2m+1, 4m]$ of size $m$, different from $R_2, R_4 \ldots R_{2i+2}$
  \item[(s4)] We let $U'_{i+1}$ be an arbitrary subset of
  $[1, 2m]$ of size $m$, different from $R_3, R_5\ldots R_{2i+3}$
\end{itemize}
Note that steps (s1) and (s2) are always possible, as long as the sets $U_i$ and $U'_i$ are defined.
To see that we can repeat step (s3) for~$t$ times, 
note that there are $\binom{2m}{m}$ many ways to choose a set 
$U_{i+1}\subseteq [2m+1,4m]$ of size $m$.
We have 
	\[
\binom{2m}{m} \ge \frac{(2m)^m}{m^m}
= 2^{m} \geq 2^{n/10} > 1.07^n
\]
and the claim follows. 
The claim for (s4) is analogous.

\subsection{An example: the graph structure}
We now show how to combine the sets $R_j, U_i, U'_i, B,$ and $L_{i,j}$ in $G_{i,j}$
to obtain an oblivious message adversary that has a partition as described in
\cref{def:partition} for $t=1.07^n$ (for an illustration, see \cref{fig:Constant-TD}).
While the choice processes of $R_j$ is independent of $i$,
the edges between them in $G_{i,j}$ will depend crucially on $i$.

The graph $G_{i, j}$ always contains a directed cycle
in $R_j$ in increasing order of the process identifiers.
Note that $|R_j|=m$ and each process already has one incoming edge from the preceding process, and thus there are $m-2$ other potential incoming edges we can choose to add.
Hence, there are $m\cdot 2^{m-2}>t$ (for $n$ large enough) possible interconnects for $R_j$, and for each $i$ we choose a different one.

\begin{wrapfigure}[11]{R}{0.47\textwidth}
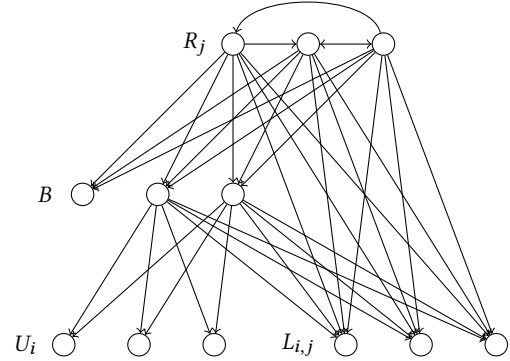

	\vspace{-6ex}
	\ctikzfig{drawings/Constant-TD}
	\vspace{-4ex}
	\caption{Topology of $G_{i,j}$ for even $j$, used to establish an exponential
		consensus time complexity in spite of a constant $\TD$.}
	\label{fig:Constant-TD}
%	\vspace{-13ex}
\end{wrapfigure}

We define the other edges of $G_{i,j}$ as follows.
Each graph contains edges 
from all process of $R_j$
to all processes of $B$ and $L_{i,j}$.
For an index $i$, let $B[i]=\{b\in B\mid i_{b-(5m+1)}=1\}$, where $i_{h}$ is the $h$\textsuperscript{th} bit in the binary expansion of $i$.
Note that for $i\neq i'$, $1\leq i,i'\leq t$ we have $B[i]\neq B[i']$, i.e. all the bits of $i$ are represented in $B[i]$, since $\log_2{t}<0.1n$.

The rest of the edges depend on $j$, as follows.
\begin{itemize}
	\item
	For $j=1$, add an edge 
	from each node of $B[i]$ to each node of $U_i \cup U'_i \cup L_{i,j}$.

	\item
	For $j$ even,
	add an edge 
	from each node of $B[i]$ to each node of $U_i  \cup L_{i,j}$.
	
	\item
	For $j\geq 3$ odd,
	add an edge 
	from each node of $B[i]$ to each node of $U'_i  \cup L_{i,j}$.
\end{itemize}

\subsection{An example: properties of the adversary}
Finally, let us establish our main claim, namely that the above construction
indeed yields an oblivious message adversary where the consensus time complexity $t= 1.07^n$
grows exponentially with $n$, yet $\TD = 2$, a constant.
In the remainder of this section, we show these properties for the oblivious message adversary $\bD$ constructed above.
First, we show that $\bD$ partitions as described in \cref{def:partition}.
\begin{claim}
  The sets $\bS_1, \ldots, \bS_t$ are a partition according to
  \cref{def:partition}.
\end{claim}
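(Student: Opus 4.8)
The plan is to verify the three properties of \cref{def:partition} directly for the constructed sets $\bS_1,\dots,\bS_t$, using the combinatorial structure of the graphs $G_{i,j}$. Throughout I will repeatedly exploit the observation that, by construction, $\In_{G_{i,j}}(p)$ for a process $p$ encodes enough information to pin down both the root component $R_j$ (hence $j$) and, via $B[i]$, the index $i$ — \emph{except} for the processes in a few carefully chosen ``ambiguous'' sets. Establishing which processes are ambiguous and between which pairs of graphs is the technical heart of everything.

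First I would prove property (i): each $\bS_i$ is connected in $I(\bD)$ using only elements of $\bS_i$. The natural claim is that within $\bS_i = \{G_{i,1},\dots,G_{i,2i+1}\}$ the graphs form a chain $G_{i,1} \sim G_{i,2} \sim \dots \sim G_{i,2i+1}$, with consecutive graphs indistinguishable to some process. Concretely, I expect $G_{i,j}$ and $G_{i,j+1}$ to be indistinguishable to the processes in $U_i$ (for the even/odd transitions involving $U_i$) or $U'_i$ (for those involving $U'_i$): these processes receive edges from $B[i]$ in both graphs — the same $i$ — and receive nothing from the process half $[1,5m]$ in either graph, so they cannot tell $R_j$ from $R_{j+1}$. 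I would make the bookkeeping precise by checking, case by case on the parity of $j$, that $U_i$ (resp.\ $U'_i$) lies in the vertex partition of both $G_{i,j}$ and $G_{i,j+1}$ and has identical in-neighborhood in both. The base relations from $\bS_1$ ($G_{1,1}\sim_{U'_1}G_{1,2}\sim_{U_1}G_{1,3}$ or similar) fix the pattern.

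Next, property (ii): the edges of the subgraph of $I(\bD)$ induced by $\bigcup_{j<i}\bS_j$ are protected by the graphs of $\bS_i$, i.e.\ every such edge $e$ has some $G\in\bS_i$ with $\Root(G)\subseteq\ell(e)$. By the chain structure, the relevant edges are the $G_{i',j}\sim G_{i',j+1}$ with $i'<i$, whose labels are (supersets of) $U_{i'}$ or $U'_{i'}$. By the inductive construction (s1)–(s2), $U'_{i'} = R_{2i'+2} = \Root(G_{i'+1,\,2i'+2})$ and $U_{i'} = R_{2i'+3} = \Root(G_{i'+1,\,2i'+3})$, and since $i' < i$ we have $i'+1 \le i$, so these root components are realized by graphs in $\bigcup_{j\le i}\bS_j$; one must check the protecting graph is actually in $\bS_i$ itself (not an earlier $\bS_j$), which forces choosing $i' = i-1$, i.e.\ the edges being protected at step $i$ are exactly those whose labels equal $U_{i-1}$ or $U'_{i-1}$ — so I may need to argue that labels of edges from even earlier $\bS_j$ were already ``absorbed'' or are re-protected, or simply that protection need only be by \emph{some} graph in $\bS_i$, matching $R_j$'s that recur. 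I will need to double-check the exact indexing here; this is the step I expect to be the main obstacle, since it requires the recursion $R_{2i+2}=U'_i,\ R_{2i+3}=U_i$ to line up labels with root components across all of $\bigcup_{j<i}\bS_j$, not just $\bS_{i-1}$ — possibly relying on the fact that the same set can be the root component of several $G_{i,j}$.

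Finally, property (iii): there is no process $p$ broadcasting in every pattern of $\Sigma = \bS_1\circ\dots\circ\bS_t$. Here I would pick a specific $\sigma\in\Sigma$, e.g.\ $\sigma = G_{1,j_1}\circ G_{2,j_2}\circ\dots\circ G_{t,j_t}$ with the $j_r$ chosen so that the root components $R_{j_1},R_{j_2},\dots$ are not all sharing a common node — indeed, using \cref{claim:validity}, it suffices to produce \emph{two} patterns in $\Sigma$ whose only broadcasters are contained in disjoint root components, contradicting the existence of a universal broadcaster; taking $G^t$-like patterns $G_{r,1}^{\,}$ versus a pattern ending in a graph with disjoint root component does it, since in a pattern whose graphs all have root component $R$, only $R$ broadcasts. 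Concretely, since the construction provides many distinct $R_j$ (from disjoint halves $[1,2m]$ and $[2m+1,4m]$), two suitable patterns with disjoint root components exist. Combining (i)–(iii) gives the claim, and then \cref{claim:SigmaConnected} plus \cref{claim:validity} will yield (in the surrounding text) the exponential consensus lower bound with $\TD$ constant. I would keep the write-up organized as three labeled paragraphs mirroring (i), (ii), (iii), with the case analysis on the parity of $j$ factored out into a short preliminary remark on the in-neighborhoods $\In_{G_{i,j}}(p)$.
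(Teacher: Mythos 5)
Your overall plan (verify (i)--(iii) directly) matches the paper's, and your treatment of property (iii) is essentially the paper's argument (the patterns $(G_{i,2})_{i=1}^t$ and $(G_{i,3})_{i=1}^t$ have disjoint broadcaster sets $R_2\subseteq[1,2m]$ and $R_3\subseteq[2m+1,4m]$). But there are two concrete problems. First, for property (i), the chain $G_{i,1}\sim G_{i,2}\sim\cdots\sim G_{i,2i+1}$ you propose does not exist: for consecutive $j\ge 2$ of opposite parity, $U_i$ is an explicit class only in the even-$j$ graph (in-neighborhood $B[i]$) but sits inside $L_{i,j+1}$ in the odd one (in-neighborhood $R_{j+1}\cup B[i]$), and symmetrically for $U'_i$; so neither set is ambiguous between $G_{i,2}$ and $G_{i,3}$, and no other process is either. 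The connectivity of $\bS_i$ is a \emph{star} centered at $G_{i,1}$, which is the only graph containing both $U_i$ and $U'_i$ explicitly: $G_{i,1}\sim_{U_i}G_{i,j}$ for $j$ even and $G_{i,1}\sim_{U'_i}G_{i,j}$ for $j\ge 3$ odd. This is how the paper argues it, and it also pins down which labels ($\supseteq U_i$ or $\supseteq U'_i$) the connecting edges carry, which you need for (ii).

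Second, property (ii) — which you yourself flag as the main obstacle — is left unresolved, and the worry you raise ("the protecting graph must be in $\bS_i$ itself, which forces $i'=i-1$") rests on a misreading of the construction. The set $\bS_i$ contains $2i+1$ graphs $G_{i,1},\dots,G_{i,2i+1}$, one for \emph{each} root component $R_1,\dots,R_{2i+1}$. Since $R_{2i'+2}=U'_{i'}$ and $R_{2i'+3}=U_{i'}$ with $2i'+3\le 2i+1$ for every $i'<i$, the graphs $G_{i,2i'+2}$ and $G_{i,2i'+3}$ lie in $\bS_i$ and have root components exactly $U'_{i'}$ and $U_{i'}$. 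So every connecting edge of every earlier $\bS_{i'}$ (whose label contains $U_{i'}$ or $U'_{i'}$ by the star structure) is protected by a graph of $\bS_i$ directly; no "absorption" or re-protection argument across levels is needed. With these two repairs your proof coincides with the paper's.
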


\begin{proof}
	For property (i), the connectivity of $\bS_i$, pick any
	$G_{i,j} \in \bS_i$.
	We show this graph is indistinguishable to some processes from $G_{i, 1}$, and thus the graph is connected by an edge to $G_{i, 1}$ in $I(\bD)$.
	If $j$ is odd, the in-neighborhood of every process of $U'_i$ is the same in
	$G_{i,j}$ and in $G_{i, 1}$, namely $B[i]$.
	Similarly, if $j$ is even, every process of $U_i$ has $B[i]$ as its in-neighborhood in $G_{i,j}$,
	and this is also the case for $G_{i, 1}$.
	
	To prove property (ii), which states that the communication graphs of $\bS_i$
	protect the edges that were used to connect $\bS_1, \ldots, \bS_{i-1}$,
	it suffices to show that for every $1 \le i' < i$,
	there are communication graphs $G, G' \in \bS_i$ such that
	$\Root(G) \subseteq U_{i'}$ and $\Root(G') \subseteq U'_{i'}$.
	For a given $1 \le i' < i$, note that the graphs $G_{i,2i'+2},G_{i,2i'+3} \in \bS_i$ satisfy
	$R_{2i'+2}=U_{i'}$ and $R_{2i'+3}=U'_{i'}$
	by construction.

	For property (iii), which states that there is no process by which all
	communication patterns of $\Sigma = \bS_1 \circ \cdots \circ \bS_t$ are
	broadcastable, let us investigate the processes that 
	were able to broadcast in
	$(G_{i,2})_{i=1}^t \in \Sigma$ and $(G_{i,3})_{i=1}^t \in \Sigma$.
	We observe that, by (b2) and (b3), for all $1 \le i \le t$,
	$\Root(G_{i,2}) = R_2 \subseteq [1, 2m]$ and
	$\Root(G_{i,3}) = R_3 \subseteq [2m+1, 4m]$ and thus
	$R_2 \cap R_3 = \emptyset$.
	As the broadcasters of $(G_{i,2})_{i=1}^t$ are $R_2$ and
	the broadcasters of $(G_{i,3})_{i=1}^t$ are $R_3$, property (iii) holds.
\end{proof}

\begin{claim}
  The decision procedure terminates after $\TD=2$ iterations on $\bD$.
\end{claim}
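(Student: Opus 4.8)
The plan is to show two things: first, that $\N_2 \neq \N_1$ (so the procedure does not stop after one iteration), and second, that every connected component of $\N_2$ is root-compatible (so the procedure stops after the second iteration). For the first part, I would exhibit an edge of $\N_1 = I(\bD)$ that is \emph{not} protected by any communication graph of its own connected component, hence gets removed in the first refinement. A natural candidate is an edge $e$ connecting some $G_{i,j}$ to $G_{i,1}$ whose label $\ell(e)$ is $U_i$ (for $j$ even) or $U'_i$ (for $j$ odd): recall from the construction that $\Root(G_{i,j}) = R_j$, and since all the $R_j$ have equal size $m$ and are pairwise distinct (by the choices in (s1)--(s4)), no graph $G$ in $\bD$ satisfies $\Root(G) \subseteq U_i$ unless $U_i$ itself equals some $R_j$ — which, crucially, only happens for graphs in strictly later sets $\bS_{i'}$, $i' > i$. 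So within $\bS_i$ alone, or more carefully within the connected component of $e$ in $\N_1$, I must check whether such a protecting graph is reachable; the claim $\TD \geq 2$ will follow once I identify an edge whose label contains \emph{no} root component from its $\N_1$-component at all. Here I would use the topological facts: the in-neighborhood of a $U_i$-process in $G_{i,j}$ is exactly $B[i]$, which has size at most $|B| = O(n)$, strictly smaller than... actually the relevant point is just that $U_i \neq R_{j'}$ for any $j'$ indexing a graph in the relevant component, because $R_{j'}$'s occurring with $i'\le i$ are drawn from specific ranges and are distinct from $U_i$ by (s3)/(s4).

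For the second part — that $\N_2$ has only root-compatible components — I would argue that the first refinement already breaks $I(\bD)$ apart so thoroughly that each surviving component lies within a single $\bS_i$ and moreover becomes root-compatible. The key observation from \cref{claim: lb adversary indist. chain}-style reasoning (adapted to this construction) is that the only indistinguishability edges in $I(\bD)$ are between $G_{i,j}$ and $G_{i,1}$ within the same $\bS_i$ (via the labels $U_i$ or $U'_i$), so each connected component of $I(\bD)$ is exactly one $\bS_i$ (a star centered at $G_{i,1}$). After the first iteration, every edge of such a star whose label is $U_i$ or $U'_i$ is removed, since no graph in $\bS_i$ has its root component inside $U_i$ or $U'_i$ — the graphs with root components $U_i = R_{2i+3}$ and $U'_i = R_{2i+2}$ live in $\bS_{i+1}$, not $\bS_i$, and $\bS_{i+1}$ is not connected to $\bS_i$ in $\N_1$. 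Hence in $\N_2$ every component is a single graph $G_{i,j}$, which is trivially root-compatible, and the termination condition on Line~\ref{line:until} is met, giving $\TD = 2$.

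The main obstacle I anticipate is the careful bookkeeping of \emph{which} indistinguishability edges actually appear in $I(\bD)$ — in particular, ruling out spurious cross-edges between different $\bS_i$ and $\bS_{i'}$, or between $G_{i,j}$ and $G_{i,j'}$ for $j, j' \neq 1$. This requires examining the in-neighborhoods of all process classes ($R_j$, $U_i$, $U'_i$, $B$, $L_{i,j}$) in each graph type and verifying they differ across distinct graphs except for the claimed $U_i$/$U'_i$ coincidences; the role of $B[i]$ encoding the index $i$ in binary (so that $B[i] \neq B[i']$ for $i \neq i'$, using $\log_2 t < 0.1n$) is what prevents cross-$\bS_i$ indistinguishability, and the distinct interconnects chosen for $R_j$ for each $i$ prevent $R_j$-processes from confusing graphs with the same $j$ but different $i$. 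Once these in-neighborhood computations are pinned down, both directions of the argument are short. I would therefore structure the proof as: (1) a lemma characterizing $I(\bD)$'s edges and components as the stars $\bS_i$; (2) the observation that no $G \in \bS_i$ has $\Root(G) \subseteq U_i$ or $\subseteq U'_i$; (3) conclude $\N_2$ consists of isolated vertices, so $\TD = 2$.
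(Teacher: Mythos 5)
Your proposal contains a genuine error in its characterization of $I(\bD)$, and this error breaks the argument. You claim that ``each connected component of $I(\bD)$ is exactly one $\bS_i$'' and that ``$\bS_{i+1}$ is not connected to $\bS_i$ in $\N_1$.'' This is false: every process $b \in B$ has in-neighborhood exactly $R_j$ (plus itself) in $G_{i,j}$, independently of $i$, because the edges into $B$ come only from the root component and the $B[i]$-edges point only into $U_i$, $U'_i$, and $L_{i,j}$. Hence $G_{i,j} \sim_b G_{i',j}$ for all $b \in B$ and all $i \ne i'$, so $I(\bD)$ has cross-$\bS$ edges tying all the $\bS_i$'s into one large component. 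These cross edges are precisely what keeps the graphs $G_{i+1,2i+2}$ and $G_{i+1,2i+3}$ (whose root components are $U'_i$ and $U_i$ by (s1)--(s2)) in the same $\N_1$-component as the $U_i$- and $U'_i$-labeled edges inside $\bS_i$. Consequently those intra-$\bS_i$ edges \emph{are} protected in $\N_1$ and survive the first iteration, contrary to your step (2). If your reading were correct, all edges would vanish in a single iteration and the procedure would stop with every component root-compatible after one round of refinement; moreover, the protection chain of \cref{def:partition}, which is the entire point of this construction, would never be exercised.

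The paper's proof has the opposite two-phase structure. In the first iteration, only the \emph{cross}-$\bS$ edges are removed: their labels, once the $B$-processes are discarded (no root component lies in $B$), are contained in $R_j \cap R_{j'}$, which has size strictly less than $m$ either because $R_j \ne R_{j'}$ or, when $j = j'$, because the $i$-dependent interconnects of $R_j$ force at least one root process out of the label; since every root component has size exactly $m$, no protecting graph exists. In the second iteration, the component of an intra-$\bS_i$ edge has shrunk to $\bS_i$ itself, and a case analysis on $j$ ($j=1$, $j$ even, $j \ge 3$ odd) shows that no graph of $\bS_i$ has its root component inside the label, so these edges are removed too. Your ``main obstacle'' paragraph correctly identifies that the bookkeeping of in-neighborhoods is where the work lies, but the $B$-processes are exactly the case where that bookkeeping overturns your intended component structure.
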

\begin{proof}
	First, note that all the roots $R_j$ are contained in $[1,5m]$, while $B=[m5+1,n]$, hence no edge of $I(\bD)$ labeled by only processes of $B$ will be preserved after the first iteration.
	Similarly, we can ignore processes of $B$ in the labels, when considering the preservation of the edges.
	
	We show that in the first iteration of the decision procedure, none of the edges of $I(\bD)$ that connect graphs from different sets in the partition 
	$\bD=\bigcup_{i=1}^{t} \bS_i$
	are preserved.
	Consider $G_{i,j} \in \bS_i$, $G_{i',j'} \in \bS_{i'}$, $i \ne i'$,
	such that $G_{i,j} \sim_\ell G_{i',j'}$.
	Note that in $G_{i,j}$, the processes of $U_i$ (or $U'_i$ if $j$ is odd) and $L_{i,j}$ have $B[i]$ as their incoming edges, while the corresponding processes in $G_{i',j'}$ have $B[i']$, and $B[i]\neq B[i']$,
	so none of $U_i$ (or $U'_i$), $U_{i'}$ (or $U'_{i'}$), $L_{i,j}$ and $L_{i',j'}$ intersect $\ell$.	
%	Processes of $B$ are in $[5m+1,n]$, so they have different ids from processes of $R_i$ and $R_{i'}$, and processes not in $B, R_j$ or $R_{j'}$ have incoming edges from at least one process in $B$ (note that $i\geq 1$ so $B[i]\neq\emptyset$;
%	hence, a process in $R_j\cap \ell$ can appear in $G_{i',j'}$ only in $R_{j'}$.

	Hence, the only processes in $\ell$ that can occur in a root component of a graph of $\bD$ are processes of $R_j$ and $R_{j'}$.
	Let us study $|R_j\cap R_{j'}\cap \ell|$:
	if $j\neq j'$ then $R_j\neq R_{j'}$ so
	$|R_j\cap R_{j'}\cap \ell|<|R_j|=m$;
	if $j=j'$, then the fact that the choice of interconnects for $R_j$ in $G_{i,j}$ depends on $i$ guarantees that at least one process of $R_j$ is not in $R_j\cap R_{j'}\cap \ell$, and again 
	$|R_j\cap R_{j'}\cap \ell|<m$.
	As any root component $R_{j''}$ of a graph in $\bD$ has $|R_{j''}|=m$, 
	no such root component satisfies $R_{j''}\subseteq R_j\cap R_{j'}\cap \ell$,
	and the edge $\ell$ is not being preserved in the first iteration.
	
	Second, we show that in the second iteration of the decision procedure, none of the edges in $I(\bD)$ that is within a set $\bS_i$ is preserved.
	Assume for contradiction that for some $i$, there are graphs 
	$G_{i,j}, G_{i,j'},  G_{i,j''}\in \bS_{i}$, $j\neq j'$
	such that $G_{i,j}\sim_\ell G_{i,j'}$ and $R_{j''}\subseteq\ell$.
	All the processes of $B, L_{i,j}$ and $L_{i,j'}$ 
	have incoming edges from  $R_j$ (or $R_{j'}$), 
	and since $R_j\neq R_{j'}$ none of these processes appear in~$\ell$.
	Note that $1\leq j''\leq 2i+1$, and 
	the sets $U_i$ and $U'_i$ are chosen to be different from $R_1,\ldots,R_{2i+1}$, 
	which implies $R_{j''}\neq U_i, U'_i$.
		
	If $j=1$, 
	only processes of $U_i\cup U'_i$ can appear in $\ell$.
	This is because in $G_{i,1}$, processes of $R_1$ do not have any incoming edge from $B$, which they have in all other graphs of $\bS_i$, 
	and processes of $L_{i,1}$ have incoming edges from $R_1$, which no process has in any other graph of $\bS_i$.
	Therefore 
	$R_{j''}\subseteq\ell\subseteq U_i\cup U'_i$,
	where
	$U_i\subseteq[2m+1,4m]$ and $U'_i\subseteq[1,2m]$.
	But either $R_{j''}\subseteq [1,2m]$ or 
	$R_{j''}\subseteq [2m+1,4m]$,
	and $|R_{j''}|=|U_i=|U'_i|$,
	so either $R_{j''}=U_i$ or $R_{j''}=U'_i$,
	a contradiction.
	
	If $j$ is even,  
	$R_{j''}\subseteq\ell\subseteq R_j\cup U_i$,
	as any process not in $R_j\cup U_i$ has incoming edges from all processes of $R_j$ in $G_{i,j}$, which it does not have in $G_{i,j'}$.
	We have 
	$R_j\subseteq[1,2m]$ (as $j$ is even) and
	$U_i\subseteq[2m+1,4m]$,
	while either $R_{j''}\subseteq [1,2m]$ or 
	$R_{j''}\subseteq [2m+1,4m]$.
	So, either 
	$R_{j''}=R_j$ or $R_{j''}=U_i$.
	This can only occur if $j=j''$:
	the sets $R_k$ are different for different indices $k$, and $U_i$ is chosen to be different from $R_1,\ldots,R_{2i+1}$.
	The case of $j>1$ odd is analogous,
	and we conclude that $j=j''$ in both cases.
	The same analysis applies for $j'$, 
	and so we have $j=j''=j'$, a contradiction.
\end{proof}

From this, we conclude the main theorem of this section.
\begin{theorem}
  There exists an oblivious message adversary with exponential consensus
  time complexity in spite of a constant iteration complexity $\TD$ of the
  decision procedure.
\end{theorem}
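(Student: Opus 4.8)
The plan is simply to assemble the three facts already established for the adversary $\bD=\bigcup_{i=1}^{t}\bS_i$ with $t=1.07^n$ and $m=\lceil n/10\rceil$: that $\bS_1,\dots,\bS_t$ form a partition satisfying \cref{def:partition}, that the decision procedure halts after $\TD=2$ iterations, and that, via \cref{claim:SigmaConnected}, the set $\Sigma_t=\bS_1\circ\cdots\circ\bS_t$ is pairwise connected in $I(\bD^t)$.

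First I would establish solvability, so that "consensus time complexity" is meaningful. In the proof that $\TD=2$ it was shown that the first iteration removes every edge of $I(\bD)$ joining graphs from different blocks $\bS_i$, and the second iteration removes every edge within a block; hence $\N_2$ has only singleton connected components, which are trivially root-compatible. By \cref{thm:poss}, consensus is therefore solvable, and $\TD=2$ is a constant as required.

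Next I would prove the exponential lower bound on the termination time by the impossibility argument sketched right after \cref{def:partition}. Assume for contradiction that some correct consensus algorithm guarantees that all processes have decided by round $t$ in every run whose communication pattern is in $\bD^t$; in particular this holds for every $\sigma\in\Sigma_t$. By \cref{claim:SigmaConnected} all patterns of $\Sigma_t$ lie in a single connected component of $I(\bD^t)$, so, walking along the connecting indistinguishability edges and using the agreement condition at each step, all runs with patterns in $\Sigma_t$ produce one common decision value $v$. Pick any $\sigma\in\Sigma_t$; by \cref{claim:validity}, $v=x_p$ for some broadcaster $p$ in $\sigma$. But property (iii) of \cref{def:partition} (instantiated via the disjoint root components $R_2\subseteq[1,2m]$ and $R_3\subseteq[2m+1,4m]$ of $(G_{i,2})_{i=1}^t$ and $(G_{i,3})_{i=1}^t$) supplies a pattern $\sigma'\in\Sigma_t$ not broadcastable by $p$, whose run also decides $v=x_p$, contradicting \cref{claim:validity}. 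Hence not all processes can decide by round $t=1.07^n$, so solving consensus under $\bD$ takes at least $1.07^n=2^{\Omega(n)}$ rounds, while $\TD=2$.

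I do not expect a genuine obstacle here, since the heavy lifting is already in \cref{claim:SigmaConnected}; the delicate point in that lemma is the inductive invariant that every connecting edge of $\Sigma_k$ in $I(\bD^k)$ carries the same label as some edge of $I(\bD)$ induced by $\bigcup_{j\le k}\bS_j$, which is exactly what lets property (ii) hand over a protecting graph from $\bS_{k+1}$. The only care needed in the present proof is bookkeeping: checking that the three clauses of \cref{def:partition} indeed hold for this construction (connectivity of each $\bS_i$ through the $U_i$/$U'_i$ in-neighborhoods, protection through $R_{2i'+2}=U_{i'}$, $R_{2i'+3}=U'_{i'}$, and the $R_2\cap R_3=\emptyset$ obstruction), and that nothing is lost by the disappearance of the $B$-labeled edges in the first iteration — those are not among the edges used to connect $\Sigma_t$, so the argument is unaffected.
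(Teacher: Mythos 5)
Your proposal is correct and follows essentially the same route as the paper: the paper derives this theorem directly from the two claims of Section 6 (the partition property of $\bS_1,\dots,\bS_t$ and $\TD=2$) together with \cref{claim:SigmaConnected} and the validity/agreement argument sketched after \cref{def:partition}. Your added check that the singleton components of $\N_2$ are root-compatible, so that consensus is in fact solvable via \cref{thm:poss}, is a point the paper leaves implicit but is handled correctly here.
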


\section{Conclusions}\label{sec:futwork}
This paper presented a simple procedure for deciding whether solving consensus
is possible under a given oblivious message adversary. Whereas it
can be viewed as an early terminating version of the abstract beta class
characterization by Couloma, Godard, and Peters~\cite{CGP15}, our
formulation turned out to be instrumental for characterizing the, to the best
of our knowledge, previously unknown termination time of 
distributed consensus under a given message adversary. We 
discovered a close relation between the number of iterations of the 
decision algorithm and the consensus termination time, and the
importance of the existence and number of root-compatible connected components 
in the refined indistinguishability graph.

Our work opens several interesting avenues for future work.
For example, while we have presented a combinatorial approach, it would be interesting to study the time complexity of the consensus problem from a topological perspective as well. It would further be interesting to fully understand the implications of our approach on distributed information dissemination problems such as broadcast, and explore alternative adversarial models. We also plan to conduct an empirical study of our algorithms to complement the theoretical perspective and analysis presented in this paper. 

\begin{acks}
Research supported by the Austrian Science Fund (FWF) project DELTA (Dependable Network Data Plane for the Cloud), I 5025-N, 
a joint project with Hungarian National Research, Development and Innovation Office NKFIH (co-PI: Gabor Retvari).
\end{acks}

\bibliographystyle{plain}
\bibliography{localbib,lit}
%\printbibliography

%\begin{appendix}

%\end{appendix}

\end{document}